\newcommand\subsubsec[1]{\medskip\noindent\textbullet~\emph{\textbf{#1.}}~}
\def\Phiplus{\Phi^+}
\def\Phiminus{\Phi^-}
\def\bp{{\bar\partial}}
\def\bfs{\boldsymbol}
\def\pa{\partial}
\def\sm{\setminus}
\def\ti{\tilde}
\def\wh{\widehat}
\def\wt{\widetilde}
\def\id{ \mathrm{id}}
\def\Im{ \mathrm{Im}}
\DeclareMathOperator{\Sing}{Sing}
\def\FF{\mathcal{F}}
\def\LL{\mathcal{L}}
\def\OO{\mathcal{O}}
\def\XX{\mathcal{X}}
\def\CC{\mathcal{C}}
\def\C{\mathbb{C}}
\def\E{\mathbf{E}}
\def\H{\mathbb{H}}
\def\R{\mathbb{R}}
\def\S{\mathbb{S}}
\def\A{\mathbb{A}}
\newcommand{\im}{\operatorname{Im}}
\theoremstyle{plain}
\newtheorem*{thm*}{Theorem}
\newtheorem{thm}{Theorem}[section]
\newtheorem{lem}[thm]{Lemma}
\newtheorem{prop}[thm]{Proposition}
\theoremstyle{definition}
\newtheorem*{egs*}{Examples}
\newtheorem*{def*}{Definition}
\newtheorem{eg*}[thm]{Example}
\theoremstyle{remark}
\newtheorem{rem}{Remark}
\newtheorem*{rmk*}{Remark}
\newtheorem*{rmks*}{Remarks}
\numberwithin{equation}{section}
\begin{document}
\title[Conformal field theory in a multiply connected domain]{Conformal field theory of Gaussian free fields \\ in a multiply connected domain}


\author{Tom Alberts}
\address{Department of Mathematics, University of Utah, Salt Lake City, UT 84112, USA \\
School of Mathematics, Korea Institute for Advanced Study, Seoul, 02455, Republic of Korea} 
\email{alberts@math.utah.edu} 

\author{Sung-Soo Byun}
\address{Department of Mathematical Sciences and Research Institute of Mathematics, Seoul National University, Seoul 151-747, Republic of Korea}
\email{sungsoobyun@snu.ac.kr} 

\author{Nam-Gyu Kang} 
\address{School of Mathematics, Korea Institute for Advanced Study, Seoul, 02455, Republic of Korea} 
\email{namgyu@kias.re.kr} 

\thanks{Tom Alberts was supported by NSF grant DMS-1811087, by Simons Foundation grant MPS-TSM-00002584, and by the KIAS Scholar program. Sung-Soo Byun was supported by the POSCO TJ Park Foundation (POSCO Science Fellowship), by the New Faculty Startup Fund at Seoul National University and by the National Research Foundation of Korea funded by the Korea government (NRF-2016K2A9A2A13003815, RS-2023-00301976). Nam-Gyu Kang was supported by by Samsung Science and Technology Foundation (SSTF-BA1401-51) and by the National Research Foundation of Korea (NRF-2019R1A5A1028324), and by a KIAS Individual Grant (MG058103) at Korea Institute for Advanced Study.}


\begin{abstract}
We implement a version of conformal field theory (CFT) that gives a connection to SLE in a multiply connected domain. Our approach is based on the Gaussian free field and applies to CFTs with central charge $c \leq 1$. In this framework we introduce the generalized Eguchi-Ooguri equations and use them to derive the explicit form of Ward's equations, which describe the insertion of a stress tensor in terms of Lie derivatives and differential operators depending on the Teicm\"{u}ller modular parameters. Furthermore, by implementing the BPZ equations, we provide a conformal field theoretic realization of an SLE in a multiply connected domain, which in particular suggests its drift function, and construct a class of martingale observables for this SLE process. 
\end{abstract}

\date{\today}

\maketitle


\newtcolorbox{example}[1]{breakable,
colbacktitle=gray!50!white, fonttitle=\bfseries, coltitle=black, title=Example: {#1}}

\section{Introduction and Main results}
 
In his classical work \cite{Ko1916}, Koebe introduced several classes of the so-called \textit{canonical} multiply connected domains. By their very definition, the canonical domains can be uniquely determined by specifying a finite number of parameters in the moduli space. From the complex geometry viewpoint, these play a role of the Teichm\"{u}ller parameters \cite{Hub06} that describe the deformation of a surface within a particular homeomorphism class. Compared to simply connected domains, one of the principal difficulties arising in a multiply connected domain is that the non-trivial topology of the domain can significantly impact the behavior of the functions defined on it. For instance, the Hadamard variational formula, which describes the infinitesimal changes of the Green's function under specific perturbations of the underlying domain, requires capturing the deformations of \textit{each} boundary component of a multiply connected domain, see e.g. \cite{KMZ05}. 

These same difficulties persist in studying conformal field theory (CFT) in multiply connected domains. In general, one of the main purposes of CFT is to compute correlation functions of a family of conformal fields, such as the operator product expansion (OPE) family generated from the basic fields. These describe the statistical correlations of certain physical observables in a system and can sometimes be expressed in terms of geometric functions of the underlying domain. In addition to computing correlation functions, CFT also aims to derive certain functional equations among them, which provide powerful tools to predict the critical behavior of the statistical physics models. As with the philosophy above, one expects that the description of correlation functions, as well as their functional equations, reveal significant differences when developing the CFT in a more complicated geometry as they may depend on its moduli space in a non-trivial way.

In this paper, we implement a version of CFT with central charge $c \le 1$ to present its link to a forward SLE in a multiply connected domain. The fields that we consider are all constructed from the Gaussian free field (GFF) with Dirichlet boundary condition, a Gaussian random field whose correlations depend on the geometry of the domain through its Green's function. In this theory, there is a well-established way of computing correlation functions based on Wick's formula, operator product expansions, and the Coulomb gas formalism. There is also an infinitesimal calculus for perturbations of CFT correlation functions, which allows one to express differential equations satisfied by the correlation functions without having explicit expressions for the functions themselves. Fundamental examples of this calculus include Ward's equations and the Belavin-Polyakov-Zamolodchikov (BPZ) equations. The former expresses the perturbations of correlation functions to the correlation function of the underlying random field when computed after the insertion of a \textbf{stress tensor}. The latter equates a very specific perturbation of correlation functions to the correlation function of the underlying random field when computed after the insertion of the one-leg operator. The BPZ equations, in particular, provide the precise form of the null-vector equation, a differential equation satisfied by the partition function that encodes important properties of the system. In a simply connected domain, such a theory and its application to the theory of Schramm Loewner evolution (SLE) has been well-developed, see e.g. \cite{BB02,BB03,Ky06,RBGW07,KM13,KM17,KM21, FW03} and references therein. We also refer to \cite{BCDV14,Gru06,Pel19} for reviews.

Let us turn our attention to Ward's equation, which best features the dependence of non-trivial moduli space. 
As in classical differential geometry, the Lie derivative of a conformal field describes the infinitesimal change of the field when it is transported along the flow of a vector field. 
In CFT, this provides a characteristic property of a stress tensor (if it exists). 
It is well known that the Gaussian free field $\Phi$ has a stress tensor of the form 
\begin{equation*}
A=-\frac12 J \odot J, \qquad J=\pa \Phi,
\end{equation*}
where $\odot$ is Wick's product, see e.g. \cite{KM13}. 
In a simply connected domain, Ward's equation states that the insertion of a stress tensor is equivalent to taking a Lie derivative with respect to a proper vector field with a prescribed pole. 
For instance, for a holomorphic differential (a conformally covariant field) $X$ in the OPE family of the GFF with conformal dimension $\lambda$, Ward's equation states that
\begin{equation} \label{Ward upper half plane}
2\,\E\,A(\zeta) X(z)= \Big( k_\zeta(z) \pa_z + \lambda \, k'_\zeta(z) \Big) \E\,X(z), \qquad k_\zeta(z)= \frac{2}{\zeta-z}, 
\end{equation}
where all fields are evaluated in the upper-half plane $\mathbb{H}$. 
Here, the right-hand side of this identity is a Lie derivative of $X$ with respect to the Loewner vector field $k_\zeta$ in $\mathbb{H}$. 
Ward's equation can be generalized to a tensor product of fields in the OPE family of the GFF.

If the underlying domain is not simply connected, the contributions to Ward's equation from each boundary component (except for the boundary component where an SLE grows) must be taken into account.
To provide a clearer illustration let us consider a doubly connected domain as an example. For given modular parameter $r>0$, we take the cylinder $\{ z \in \C : 0 < \im z <r\}/\langle z \mapsto z + 2\pi \rangle$ as a reference domain.
Then in this cylinder uniformization, Ward's equation for a tensor product $X$ of fields in the OPE family of the GFF reads as 
\begin{align}
2\,\E\,A(\zeta) X & =  \LL_{v_\zeta}\E\, X + \frac{1}{\pi}\int_{\gamma}	\E\,A(\xi) X\, d\xi \label{Ward annulus v1}
\\
&= \LL_{v_\zeta}\E\, X + \pa_r \E\,X, \label{Ward annulus v2}
\end{align}
where $\zeta \in [-\pi,\pi],$ $\gamma=[ir-\pi,ir+\pi]$ is the upper boundary component, and $v_\zeta$ is the Loewner vector field in the cylinder. 
Compared to \eqref{Ward upper half plane}, Ward's equation of the form \eqref{Ward annulus v1} clearly shows the contribution from a boundary component. 
On the one hand, the identity \eqref{Ward annulus v2} reveals an interesting (and indeed far from being obvious) relation between the boundary integral under the insertion of a stress tensor and the differentiation with respect to the modular parameter.
In addition, this leads to a differential form of the BPZ equation, which also plays a crucial role in the theory of annulus SLE, cf \cite{BKT23}.
Ward's equation \eqref{Ward annulus v2} was introduced in the pioneering work of Eguchi and Ooguri \cite{EO87} in the complex torus of genus one using a path integral formalism. 
A particular advantage of considering a doubly connected domain or a complex torus is that it enables explicit computations using well-known special functions such as Jacobi theta functions.
With this strategy, \eqref{Ward annulus v2} was rigorously proved in \cite{BKT23,KM17} using Wick's calculus and several functional equations between special functions. 

Despite the extensive study of the Eguchi-Ooguri version of Ward's equation for the genus one case, a corresponding formulation for a multiply connected domain has not, to our knowledge, been proposed in either the physics or mathematics literature. 
This slow progress may be attributed, at least in part, to the challenges arising from the higher dimensional moduli space in such domains and the absence of explicit formulas for the classical special functions. 
In this paper, we address this problem by developing a formalism for the Eguchi-Ooguri version of Ward's equation in a multiply connected domain.
In addition to this, by implementing the BPZ equations, we introduce the SLE drift function from the viewpoint of CFT and construct a large class of the associated SLE martingale observables.

\subsection{Basic setup}
\label{Subsection_setup}

To introduce our results we first recall some basic notions of multiply connected domains and CFT.
Among the canonical multiply connected domains, the \emph{chordal standard domains} are the simplest setting for describing CFT and its applications to SLE. 
By definition, a chordal standard domain is the upper-half plane $\H$ minus a finite number of horizontal slits $[z_k^l, z_k^r]$, with $k=1,\dots, g$ and $\Im\,z_k^l = \Im\,z_k^r > 0$. Often we denote a chordal standard domain by $\H_g$, with the notation emphasizing the number of slits but suppressing their locations. 
By a well-known uniformization theorem, any multiply connected domain $D$ is conformally equivalent to a chordal standard domain. 

Our formulation of CFT crucially relies on the Green's function $G_D$ of a domain $D$, with zero Dirichlet boundary condition. 
Letting $\wt{G}_D(\cdot, z_0)$ be the harmonic conjugate of $G_D(\cdot, z_0)$ we define the complex Green's functions by
\begin{equation*} 
\displaystyle G_{D}^+:=\frac { G_{D}+i\wt{G}_{D}}{2}, \qquad G_D^{-}= \overline{ G_D^+ }. 
\end{equation*} 
Then a multivalued holomorphic function $G_{D}^+(\cdot, z_0)$ is defined up to periods. 
In a general multiply connected domain the Green's function can be expressed in terms of a special transcendental function called the \textit{Schottky-Klein prime function} \cite{Baker95, CM07}. 
In Subsection \ref{Subsection_Green functions} we present another representation of the Green's function using the Riemann theta function and the \textit{Schottky double construction}.

The basic random field that underlies our CFT is the GFF. We denote it by $\Phi$ and consider it as a formal random function $\Phi : D \to \R$ that satisfies with Dirichlet boundary conditions on each boundary component.
Occasionally we add the subscript $(0)$ to $\Phi$, which indicates that we consider the central charge $c=1$ theory. We extend this to $c \leq 1$ later on. 
The $1$- and $2$-point correlation functions of $\Phi$ are given by 
\begin{equation*}
\E\, \Phi(\zeta)=0, \qquad \E\, \Phi(\zeta)\Phi(z)=2G_{D}(\zeta,z).
\end{equation*}
As $\Phi$ is a Gaussian field these formulas characterize its law, which is conformally invariant due to the conformal invariance of the Green's function. The Gaussian law also allows one to apply \textit{Wick's formulas} to polynomial functions of the formal collection of random variables $\{ \Phi(z) : z \in D \}$ and their derivatives, see \cite{Jan97} for an exposition of Wick's formulas.

Correlation functions in CFT are then computed via a combination of Wick's formulas and the \textit{operator product expansion}. OPE is an important binary operation on conformal fields, serving as a tool for analyzing the algebraic structure of CFT. If a field $X$ is holomorphic, then the OPE is defined as a formal Laurent series expansion near the diagonal within correlations:
\begin{equation*}
X(\zeta)Y(z)=: \sum \big(X *_n Y(z)\big) (\zeta -z )^n, \qquad \text{as } \zeta \to z.
\end{equation*}
In particular, the zeroth coefficient, which we simply denote by $X*Y \equiv X\ast_0Y$, is called the OPE product of $X$ and $Y$. 
In other words, the field $X \ast Y$ is obtained by subtracting the divergent terms in the OPE of the fields $X$ and $Y$ and then taking the limit as $\zeta \to z$. This is a standard technique used in field theory to extract finite results from otherwise divergent calculations.
For instance, the OPE of two GFFs is given by
\begin{equation*}
\Phi \ast \Phi(z)= 2u(z,z)+\Phi(z) \odot \Phi(z) , \qquad u(\zeta,z):=G_{D}(\zeta,z)+\log|\zeta-z|. 
\end{equation*}
Here the function $u(z,z)$ is called the domain constant for the Dirichlet boundary condition, see e.g. \cite[Section 4]{BF06}. 
The OPE family of $\Phi$ is the algebra over $\C$ spanned by the generators $1$, all mixed derivatives of $\Phi$, and the so-called \textit{OPE exponentials}, under the operations $(+, \ast)$.

To implement CFT with central charge $c \le 1$, one needs to define the central/background charge modification of $\Phi$. 
For this purpose, given points $q_k$ on a boundary component $C_0$ and a parameter $b \in \R$, we consider a divisor 
\begin{equation*}
\bfs{\beta}= \sum \beta_k \cdot q_k, 
\end{equation*}
with the \emph{neutrality condition}
\begin{equation} \label{NC_b}
\int \bfs{\beta} =  b \chi
\end{equation}
where $\chi=2-2g$ is the Euler characteristic of a compact Riemann surface of genus $g$. 

Then in terms of a conformal map $w: (D, C_0,q )\mapsto (\H_g, \R, \infty)$, which is uniquely determined if we impose hydrodynamic normalization at infinity, we define the background charge modification $\Phi_{ \bfs \beta }$ of $\Phi$ as 
\begin{equation} \label{Phi beta}
\Phi_{ \bfs \beta }(z)=\Phi(z) - 2b \arg w'(z) - 2 \sum \beta_k\, \Im\, G^+_D(w(q_k),w(z)).  
\end{equation}
In particular, if $\bfs \beta_\infty= b \chi \cdot q$, then 
\begin{equation*}
\Phi_{\bfs \beta_\infty }(z) = \Phi(z)-2b \arg w'(z).
\end{equation*}

This form \eqref{Phi beta} of the background charge modification is induced from the Schottky double construction of the GFF and CFT on a compact Riemann surface of genus $g$.
In particular, the neutrality condition comes from a version of the Gauss-Bonnet theorem, see Subsection~\ref{Subsection_background}. 
This modification gives rise to CFT with central charge 
\begin{equation} \label{def of central charge}
c=1-12b^2.
\end{equation}
From $\Phi_{\bfs \beta}$ we also construct the \textit{OPE family} $\mathcal{F}_{ \bfs \beta }$, the algebra of conformal fields (over $\C$) that is generated by the constant field $1$, all mixed derivatives of $\Phi_{\bfs \beta}$, and the OPE exponentials. The addition of the algebra is the regular addition of fields, while the multiplication is the OPE multiplication. The OPE family $\mathcal{F}_{\bfs \beta}$ is the natural family of fields for which we can develop an infinitesimal calculus for the correlation functions and make the connection with SLE theory.

To develop the infinitesimal calculus we now recall the definition of the Lie derivatives and a stress tensor. 
For a conformal Fock space field $X$, the Lie derivative $\LL_v X$ of a smooth vector field $v$ is defined by
\begin{equation*}
( \LL_v X \| \phi ) = \frac{d}{dt} \Big|_{t=0} (X \| \phi \circ \psi_{-t} ),
\end{equation*}
where $\psi_t$ is a local flow of $v$, and $\phi$ is a given local chart.
We further define its $\C$-linear and anti $\C$-linear parts $\LL_v^\pm$ by 
\begin{equation*}
\LL_v^+ :=\frac{\LL_v-i\LL_{iv}}{2},\qquad \LL_v^- :=\frac{\LL_v + i\LL_{iv}}{2}. 
\end{equation*}
The Lie derivative of a conformal Fock space field can be computed from the field's transformation law. In particular, consider a differential $X$ of conformal dimension $[\lambda, \lambda_*]$, which means that its transformation law satisfies the covariance rule
\[
X = (h')^\lambda (\overline{h'})^{\lambda_*} \wt{X} \circ h,
\]
where $X = (X || \phi)$ and $\tilde{X} = (X || \tilde{\phi})$ is the evaluation of $X$ on different charts $\phi, \tilde{\phi}$, and $h$ is the holomorphic transition map between two overlapping charts $\phi, \tilde{\phi}$. The Lie derivatives of a $(\lambda, \lambda_*)$-differential are 
\begin{equation*}
\LL_v^+ X = ( v\pa +\lambda v' ) X, \qquad \LL_v^- X = ( \bar{v}\bp + \lambda_* \overline{v'} ) X.
\end{equation*}
As previously mentioned, the Lie derivatives are closely related to the notion of a stress tensor in CFT. 
By definition, a conformal field $X$ has a stress tensor $A$ if 
\begin{itemize}
 \item $A$ is a holomorphic quadratic differential, and
 \item the residue form of Ward's identity 
\begin{equation*}
\LL_v^+ X(z)=\frac{1}{2\pi i}\oint_{(z)}vA X(z), \qquad \LL_v^- X(z)=-\frac{1}{2\pi i}\oint_{(z)} \bar{v}\bar{A} X(z)
\end{equation*}
holds for all smooth vector fields $v$.
\end{itemize}
It can be shown that all elements of the OPE family $\mathcal{F}_{ \bfs \beta }$ (which is generated from the Gaussian free field $\Phi_{ \bfs{\beta} }$) have the common stress tensor
\begin{equation*}
A_{ \bfs{\beta} }:= -\frac12 J \odot J +\Big(ib\pa- j_{ \bfs{\beta} } \Big)J, \qquad j_{ \bfs{\beta}} = \E\, J_{ \bfs{\beta} }. 
\end{equation*}
Here $J_{ \bfs{\beta} } = \pa \Phi_{ \bfs{\beta} }$ and $J= \pa \Phi.$
Although $A_{ \bfs \beta }$ is not a member of the OPE family $\mathcal{F}_{ \bfs \beta }$, the Virasoro field
\begin{equation*}
T_{ \bfs{\beta} } :=-\frac{1}{2}J_{ \bfs{\beta} } \ast J_{ \bfs{\beta} }+ib\,\pa J_{ \bfs{\beta} } = A_{ \bfs{\beta} } + \E \,T_{ \bfs{\beta} }
\end{equation*}
is in $\mathcal{F}_{\bfs \beta}$. The first equality is by definition while the second is a computation in OPE calculus. It can be shown that the deterministic term $\E\,T_{ \bfs{\beta} }$ is a Schwarzian form of order $c/12$, where $c$ is given by \eqref{def of central charge}.

\subsection{Main results}

We now introduce our results. 
Let $v_{\zeta, \H_g}$ be the Loewner vector field with sole simple pole at $\zeta$ in a chordal standard domain $\H_g$. 
In general, the Loewner vector field in a multiply connected domain can be expressed in terms of the complex excursion reflected Poisson kernel.
Let $\gamma: [0, \infty) \rightarrow \H_g$ be a simple curve with $\gamma(0) \in \R$. 
Then by \cite[Theorem 3.1]{BF08} (see also \cite[Theorem 6.14]{Dren11}), the family of conformal mappings $g_t: \H_g \setminus \gamma_t \to \H_g(t)$, where $\H_g(t)$ is a chordal standard domain, satisfies the \emph{chordal Loewner equation} 
\begin{equation}\label{chL flow}
\displaystyle \pa_t g_t(z)=-v_{\xi_t, \H_g(t)}(g_t(z)), \qquad (g_0(z)=z).
\end{equation}
See Subsection~\ref{Subsec_Loewner vector field} for more details on the Loewner vector field. 

Recall that a chordal standard domain $\H_g$ is characterized by the horizontal slits $[z_k^l, z_k^r]$ with $k=1,\dots, g$.
In terms of the Teichm\"uller parameters $\{ z_k^l, z_k^r\}$, we define the differential operator
\begin{equation} \label{Teich diff op}
\nabla_{ \mathbb{H}_g } := \sum_{k=1}^g \Big( v_\zeta( z_k^l ) \pa_{ z_k^l }+v_{ \zeta }( \bar{z}_k^l ) \bp_{ z_k^l }+ v_\zeta( z_k^r ) \pa_{ z_k^r }+v_{ \zeta }( \bar{z}_k^r ) \bp_{ z_k^r } \Big).
\end{equation}
Here, we shall use the convention $\nabla_{ \mathbb{H}_g } = 0$ if $g=0$.
We further define
\begin{equation}
\nabla_{\mathbb{H}_g, \bfs q } := \nabla_{ \mathbb{H}_g }+ \sum_k v_\zeta(q_k) \partial_{q_k}, 
\end{equation}  
where $\pa_{q_k} =\pa + \bp $ is the differential operator with respect to the real variable $q_k.$ 
Note that while the notation $\nabla_{ \mathbb{H}_g }= \nabla_{ \mathbb{H}_g , \bfs 0} $ treats the underlying domain as a chordal domain $\mathbb{H}_g$ \emph{without} marked points (thus $\nabla_{ \mathbb{H}_g }$ gives the $g \ge 1$ counterpart of the operator $\partial_r$ in \eqref{Ward annulus v2} for the $g=1$ case), the operator $\nabla_{ \mathbb{H}_g , \bfs q }$ emphasizes the underlying geometry as $(\mathbb{H}_g, \bfs{q})$, a domain with marked points. The latter is more natural in the context of CFT with background charge modifications. 
We intentionally introduce the two operators $\nabla_{\mathbb{H}_g}$ and $\nabla_{\mathbb{H}_g, \bfs q}$, as their origins in Ward's equation \eqref{eq: Ward_no_integral} below are different: the first one comes from the generalized Eguchi-Ooguri equation, whereas the second comes from the residue form of Ward's identity at each of the points in $\bfs q$. This distinction also makes the presentation of the proof in Section~\ref{Section_Ward and BPZ} clearer.  

We remark that the operator $\nabla_{ \mathbb{H}_g, \bfs q }$ also naturally appears in the context of the commutation relation of SLE in a multiply connected domain due to Dub\'{e}dat, see \cite[p.1835]{Du07}.

We have the following form of Ward's equation.

\begin{thm}[\textbf{Ward's equations of Eguchi-Ooguri type}] \label{Thm_Ward} 
Fix a background charge $\bfs \beta$ supported on $\R$, and assume it satisfies the neutrality condition \eqref{NC_b}. For any tensor product $X$ of fields in the OPE family $\FF_{ \bfs \beta }$ of the GFF $\Phi_{ \bfs \beta }$, we have 
\begin{align}\label{eq: Ward_w_integral_term}
\begin{split}
2 \, \E A_{ \bfs \beta }(\zeta) X& = \Big( \LL_{v_\zeta}^+ + \LL_{ v_{\bar{\zeta}} }^- \Big) \E X 
\\
&\quad - \frac{1}{2\pi i} \int_{\pa \H_g  \setminus \R  } \Big( v_\zeta(\xi)-v_\zeta( \bar{\xi}) \Big) \E A_{ \bfs \beta }(\xi) X\,d\xi + \, \frac{1}{2\pi i} \sum_k \oint_{ (q_k) } v_\zeta(\xi) \E A_{ \bfs \beta }(\xi) X   
\end{split}
\end{align}
where all fields are evaluated in the identity chart of $\H_g$. Here, the Lie derivatives act on neither $q_k$'s nor the endpoints $\{ z_k^l, z_k^r\}$ of slits.
Furthermore, the generalized Eguchi-Ooguri equation
\begin{equation} \label{EO main statement}
 \frac{1}{2\pi i} \int_{\pa \H_g  \setminus \R  } \Big( v_\zeta(\xi)-v_\zeta( \bar{\xi}) \Big)\E  A_{ \bfs \beta }(\xi) X\,d\xi + \nabla_{ \mathbb{H}_g } \E X=0 
\end{equation}
holds, as does the following residue form of Ward’s
identity at marked points of the background charge
\begin{equation} \label{Ward eq:residue form}
\frac{1}{2\pi i} \sum_k \oint_{ (q_k) } v_\zeta(\xi) \E A_{ \bfs \beta }(\xi) X =  \sum_k v_\zeta(q_k) \partial_{q_k} \E X.    
\end{equation}
Together, \eqref{Ward eq:residue form} and the generalized Eguchi-Ooguri equation \eqref{EO main statement} lead to the Ward equation
\begin{align}\label{eq: Ward_no_integral}
2 \, \E A_{ \bfs \beta }(\zeta) X = \left( \LL_{v_\zeta}^+ + \LL_{ v_{\bar{\zeta}} }^- \right) \E X + \nabla_{  \mathbb{H}_g , \bfs q }\, \E X. 
\end{align}
\end{thm}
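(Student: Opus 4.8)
The plan is to establish the three constituent identities --- the master Ward equation \eqref{eq: Ward_w_integral_term}, the generalized Eguchi-Ooguri equation \eqref{EO main statement}, and the residue form \eqref{Ward eq:residue form} at the marked points --- and then to read off \eqref{eq: Ward_no_integral} by substituting the latter two into the former. The conceptual principle guiding the whole computation is that inserting the stress tensor $A_{\bfs\beta}(\zeta)$ generates the infinitesimal Loewner flow of $v_\zeta$, which transports \emph{every} piece of geometric data: the insertion points of the tensor product $X$ (producing the Lie derivatives), the slit endpoints $\{z_k^l, z_k^r\}$ (producing $\nabla_{\H_g}$), and the marked points $q_k$ (producing $\sum_k v_\zeta(q_k)\partial_{q_k}$). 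Each displayed identity isolates one family of these contributions by contour integration.

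For the master equation \eqref{eq: Ward_w_integral_term}, I would start from the residue form of Ward's identity recorded in the setup, $\LL_v^+ X(z)=\frac{1}{2\pi i}\oint_{(z)}vAX(z)$ and $\LL_v^- X(z)=-\frac{1}{2\pi i}\oint_{(z)}\bar v\bar A X(z)$, summed over the insertion points of $X$. Since $\E A_{\bfs\beta}(\xi)X$ is meromorphic in $\xi\in\H_g$ with poles only at the insertions and at the marked points, I would extend it across $\R$ by the odd (Dirichlet) reflection of $\Phi_{\bfs\beta}$, under which $\bar A_{\bfs\beta}$ on the lower half-plane is identified with $A_{\bfs\beta}$ at the reflected point, the two stress tensors coinciding on $\R$. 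Integrating $v_\zeta\,\E A_{\bfs\beta}X$ around the doubled boundary and applying the residue theorem then produces the combined Lie derivative $(\LL_{v_\zeta}^+ + \LL_{v_{\bar\zeta}}^-)\E X$ from the insertions together with their mirrors; the factor $2$ on the left from the two coinciding boundary values of $A_{\bfs\beta}$ at the pole $\zeta$ of $v_\zeta$; the residue term at the marked points; and the surviving integral over the slit components $\partial\H_g\setminus\R$, where the reflection converts the mirror contribution into the weight $v_\zeta(\xi)-v_\zeta(\bar\xi)$.

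The heart of the argument, and the step I expect to be the main obstacle, is the generalized Eguchi-Ooguri equation \eqref{EO main statement}, which asserts that the slit integral equals $-\nabla_{\H_g}\E X$. I would first expand $\E A_{\bfs\beta}(\xi)X$ by Wick's formula into an explicit expression in $G_{\H_g}$, its derivatives, and $j_{\bfs\beta}$, evaluated at $\xi$, the insertions, and the background charge points. The essential input is then a Hadamard-type variational formula for how $G_{\H_g}$, equivalently its theta-function representation from Subsection~\ref{Subsection_Green functions}, varies as the Teichm\"uller parameters $\{z_k^l,z_k^r\}$ are moved: such a variation is a boundary integral along the corresponding slit of a Poisson-kernel-type kernel. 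Matching this variational formula against the slit integral of the stress tensor, with the velocity of each endpoint given by the Loewner field evaluated there, should reproduce precisely $\sum_k\big(v_\zeta(z_k^l)\partial_{z_k^l}+\cdots\big)\E X=\nabla_{\H_g}\E X$. Carrying this correspondence out in the higher-genus setting, where closed-form special functions are unavailable, is the technically demanding part.

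The residue form \eqref{Ward eq:residue form} at the marked points is comparatively local: near each $q_k$ the pole of $A_{\bfs\beta}$ arises from the term $-j_{\bfs\beta}J$, with $j_{\bfs\beta}=\E J_{\bfs\beta}$ carrying a simple pole of residue $\beta_k$ at $q_k$ coming from $\bfs\beta=\sum\beta_k\cdot q_k$, while the dependence of $\E X$ on $q_k$ enters through the modification \eqref{Phi beta}. Evaluating $\frac{1}{2\pi i}\oint_{(q_k)}v_\zeta\,\E A_{\bfs\beta}X$ should then give $v_\zeta(q_k)\partial_{q_k}\E X$, with $\partial_{q_k}=\partial+\bar\partial$ the real derivative since $q_k\in\R$. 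Finally, substituting \eqref{EO main statement} and \eqref{Ward eq:residue form} into \eqref{eq: Ward_w_integral_term} replaces the slit integral by $+\nabla_{\H_g}\E X$ and the marked-point residues by $\sum_k v_\zeta(q_k)\partial_{q_k}\E X$, whose sum is exactly $\nabla_{\H_g,\bfs q}\E X$, yielding \eqref{eq: Ward_no_integral}.
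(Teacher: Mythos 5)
Your overall architecture coincides with the paper's: the master identity \eqref{eq: Ward_w_integral_term} is obtained essentially as you describe, by pairing the residue form of Ward's identity with the reflected vector field $v_\zeta^\sharp(z)=\overline{v_\zeta(\bar z)}=v_{\bar\zeta}(z)$ and the relation $A_{\bfs\beta}(\bar\xi)=\overline{A_{\bfs\beta}(\xi)}$ (the paper packages this as Lemma \ref{lem: stress_tensor_representation} on Ward's functionals, using $\bp v_\zeta=-2\pi\delta_\zeta$); the residue form \eqref{Ward eq:residue form} is indeed the local computation you sketch (the paper quotes it from Lemma 7.2 of \cite{KM21}); and \eqref{eq: Ward_no_integral} follows by substitution. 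For the two-point function $\Phi(z_1)\Phi(z_2)$ your matching of the slit integral against a Hadamard-type variation is also the paper's mechanism: it combines the Loewner-flow Hadamard formula \eqref{Hadamard} with the endpoint ODEs \eqref{motion of slits} and then harmonically extends in $\zeta$.

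There are, however, two genuine gaps in your treatment of the Eguchi--Ooguri equation \eqref{EO main statement}, which you correctly identify as the heart of the matter but leave as a plan. First, the Hadamard-matching argument only produces the identity for products of the field $\Phi$ itself; it does not handle the contribution of the $ib\,\pa J$ term of $A_{\bfs\beta}$ to the one-point function, nor the terms generated by the background-charge modification $\jmath_{\bfs\beta}J$. These reduce to the separate Green's-function identity $4\pa_\zeta^2 G_D(\zeta,z)=v_\zeta'(z)-\overline{v_{\bar\zeta}'(z)}$, which is not a consequence of the Hadamard formula; the paper proves it (Lemma \ref{lem: Green_crucial_identity}) by showing that the difference of the two associated meromorphic differentials on the Schottky double is holomorphic with vanishing $a$-periods, hence identically zero. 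Without this input the argument does not close even for $X=\Phi_{\bfs\beta}(z)$ with $b\neq0$. Second, the theorem concerns arbitrary tensor products of fields in $\FF_{\bfs\beta}$, including derivatives and OPE exponentials; passing beyond the two-point case requires the inductive structure of Lemma \ref{Lem_EO beta0} --- Wick's formula and the Leibniz rule for $2n$-point functions, closure of the class of fields satisfying \eqref{EO main statement} under OPE coefficients, and a separate bookkeeping reduction for general $\bfs\beta$. Your proposal treats general $X$ as if a single Wick expansion sufficed, so the extension from $\Phi(z_1)\Phi(z_2)$ to the full OPE family remains unproved.
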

 
\begin{rem}
In a multiply connected domain the correlation functions $\E X$ depends on the the nodes of $X$ (for example $z_1$ and $z_2$ if $X = \Phi_{\bfs \beta}(z_1) \Phi_{\bfs \beta}(z_2)$), the points $q_k$ in the support of the background charge $\bfs \beta$, and also implicitly on the modular parameters of the domain. By convention, the Lie derivative terms only act on the nodes of $X$ in formulas like \eqref{eq: Ward_w_integral_term} and \eqref{eq: Ward_no_integral}. 
However, it is worth noting that the differential operator $\nabla_{\mathbb{H}_g, \bfs q}$ can also be interpreted as a Lie derivative operator, if one so desires. This is achieved by regarding the field $\E X$ as a $(0,0)$-differential  at the modular parameters, i.e. at the support points $\{q_k\}$ of $\bfs \beta$ and at the endpoints $\{z_k^l,z_k^r\}$ of the slits (and their complex conjugates). From this viewpoint, the insertion of the stress tensor $A_{\bfs \beta}$ into a field in $\mathcal{F}_{\bfs \beta}$ in a multiply connected domain genuinely acts as a ``total'' Lie derivative, acting not only the nodes of $X$ but also the modular parameters. This viewpoint is consistent with both the physical and geometric perspectives and aligns with the appearance of such terms in \cite{Du09} via Itô calculus. However, the way in which the operator $\nabla_{\mathbb{H}_g, \bfs q}$ arises in our construction of CFT is far from obvious and requires separate treatment, as is evident from \eqref{EO main statement} and \eqref{Ward eq:residue form}.
\end{rem}

\begin{rem} 
For the simplest case when $X=\Phi(z_1)\Phi(z_2)$ and there is no background charge, Ward equations give rise to the differential form of the Hadamard's variational formula
\begin{equation*}
\begin{split}
\frac{d}{dt}G_{D_t}( g_t(z_1), g_t(z_2) ) \Big|_{t=0} = -\Big( \mathcal{L}_{v_p}+\nabla_{ \mathbb{H}_g } \Big) G_D(z_1,z_2)
\end{split}
\end{equation*}
under the Loewner flow $g_t:(D_t,\gamma_t)\to (D,\xi_t),$ where $ p = \gamma_0$. 
We also mention that a similar form of variational formula can be found in the context of the resolvent kernel on a compact polyhedral surface (a compact Riemann surface equipped with a flat metric that has conical singularities), see \cite[Proposition 2]{KK09}.
It is also worth noting that many correlation functions can be expressed in terms of special transcendental functions, such as Schottky-Klein primes or Riemann theta functions, and Ward's equation can provide non-trivial functional equations among these special functions.
\end{rem}

A natural extension of Ward's equations are the so-called \textit{BPZ equations}. There are several distinct ways of introducing the BPZ equations; in our theory they come about as a special limiting form of Ward's equations applied to a particular representation of the correlation functions. For this representation we introduce the \textit{Wick exponential} of a multi-point version of the field $\Phi$. This multi-point field, which we denote by $\Phi\{\bfs \tau\}$, is indexed by double divisors $\bfs \tau= ( \bfs \tau^+, \bfs \tau^- )$ that satisfy the neutrality condition 
\begin{equation*}
\int \bfs\tau^++ \bfs \tau^- =0. 
\end{equation*}
We write $\bfs\tau^{\pm} = \sum \tau_j^{\pm} \cdot z_j$ (noting that some $\tau_j^{\pm}$ may be zero), and define $\Phi\{\bfs \tau\}$ by 
\begin{equation*}
\Phi\{\bfs \tau\} \equiv \Phi \{ \bfs \tau^+, \bfs \tau^- \}= \sum \tau_j^+ \Phi^+(z_j) - \tau_j^- \Phi^-(z_j).
\end{equation*}
Here $\Phi^{\pm}$ are formal 1-point bosonic fields that can be thought of as the holomorphic and anti-holomorphic part of $\Phi$. They are related as 
\begin{equation*}
\Phi=\Phiplus+\Phiminus, \qquad \Phiminus=\overline{\Phiplus}.
\end{equation*} 
We define the Wick's exponential $V^{ \odot }\{\bfs \tau \}$ by
\begin{equation*}
V^{ \odot }\{\bfs \tau \}= e^{ \odot i \Phi \{ \bfs \tau \} }.
\end{equation*}
Then correlation functions of the form $\E\,V^{\odot} \{ \bfs \tau \} X$ are well-defined by Wick calculus. A closely related correlation function is given by $\E\,\OO_{\bfs \beta} \{ \bfs \tau \} X$ where $\OO_{\bfs \beta} \{ \bfs \tau \}$ is the OPE exponential, which we now define.  Given a background charge $\bfs \beta = \sum \beta_k \cdot q_k, (\beta_k \in \R, q_k \in C_0)$ the OPE exponential $\OO_{\bfs \beta}\{\bfs \tau\}$ is
\begin{equation*}
\OO\{ \bfs \tau \} \equiv \OO_{ \bfs \beta } \{\bfs \tau \} := e^{ \ast i \Phi_{ \bfs \beta }\{ \bfs \tau \} }, 
\end{equation*}
where 
\begin{equation*}
\Phi_{ \bfs \beta }\{ \bfs \tau \} := 
\sum \tau_j^+ \Phi^+_{ \bfs \beta }(z_j) - \tau_j^- \Phi^-_{ \bfs \beta }(z_j)
\end{equation*}
and 
\begin{equation*}
\Phi^+_{ \bfs \beta }(z) = \Phi^+(z) + ib \log w'(z) + i \sum \beta_k \,\E\,\Phi^+(q_k) \Phi^+(z) , \qquad \Phiminus_{ \bfs \beta }=\overline{\Phiplus_{ \bfs \beta }},  
\end{equation*}
cf. see \cite[Eq.(3.28) and p.32]{BKT23}. 
Here, a formal correlation $\E\,\Phi^+(\zeta) \Phi^+(z)$ and the OPE exponential are defined in Subsections~\ref{Subsection_chiral} and \ref{Subsection_background}.
For any choice of $\bfs \tau$ the OPE exponential $\OO_{\bfs \beta} \{ \bfs \tau \}$ is in the OPE family $\mathcal{F}_{\bfs \beta}$. In this paper we use it exclusively as an ``insertion" type operator, meaning that we multiply other fields with it before taking the expectation to produce a correlation function. The main effect of this insertion is to insert level two degeneracy into the correlation functions (at least for some specific choices of $\bfs \tau$), which introduces second order differential operators into the limiting form of the Ward equation. In this way we obtain the following form of BPZ equations \cite{BPZ84} in a multiply connected domain.

\begin{thm}[\textbf{BPZ and BPZ-Cardy equations}] \label{Thm_BPZ}
Fix a background charge $\bfs \beta$ supported on a component $C_0$ of the multiply connected domain $\H_g$, and assume $\bfs \beta$ satisfies the neutrality condition \eqref{NC_b}. For $z \in C_0$ that is distinct from the support points of $\bfs \beta$ let 
\begin{equation*}
\Upsilon(z) := \OO_{\bfs\beta} \{ \bfs \alpha \}, \qquad \bfs \alpha= ( a \cdot z - a \cdot \infty, \bfs 0), 
\end{equation*}
where $a \in \R$ satisfies $2a(a+b)=1$ and $b \in \R$ is determined by the neutrality condition \eqref{NC_b} of $\bfs \beta$. 
Let $r_{0} \equiv r_{D,0}$ and $r_{1} \equiv r_{D,1}$ be the first two coefficients in the holomorphic part of the Loewner vector field's Laurent expansion, i.e.
\begin{equation*}
v_\zeta(z) = \frac{2}{\zeta-z}+ r_{0}(z) + r_{1}(z)(\zeta-z) + o(|\zeta-z|), \qquad \zeta \to z. 
\end{equation*}
Then for any tensor product $X$ of fields in the OPE family $\FF_{ \bfs \beta }$, the following holds.
\begin{itemize}
 \item[(i)] \textup{\textbf{(BPZ equation)}} We have
\begin{equation}
\begin{split} \label{BPZ eq}
\frac{1}{a^2} \pa_z^2 \E \Upsilon(z) X & = \Big( \check{\LL}_{v_z}^+ + \LL_{v_{\bar{z}}}^- + \nabla_{ \mathbb{H}_g, \bfs q }  \Big) \E \Upsilon(z) X + \Big[ h_{1,2}\Big(r_{0}'(z)-r_{1}(z)\Big) + 2\,\E\,T_{\bfs\beta}(z) + r_{0}(z) \pa_z \Big] \E \Upsilon(z) X, 
\end{split}
\end{equation}
where all fields are evaluated in the identity chart of $\H_g$.
Here, $h_{1,2}=a^2/2-ab$ and the check indicates that the Lie derivative operator $\check{\LL}_{v_z}^+$ applies to the punctured domain $\H_g\setminus\{z\}$. 
\smallskip 
\item[(ii)] \textup{\textbf{(BPZ-Cardy equation)}} Let $\xi\in\R$ and 
\begin{equation*}
\E_{\Upsilon} X := \frac{ \E \Upsilon(\xi) X }{ \E \Upsilon(\xi) } = \E V^{ \odot } \{ \bfs \alpha \} X. 
\end{equation*}
Then we have 
\begin{equation} \label{BPZ-Cardy}
\frac{1}{a^2} \pa_\xi^2 \E_\Upsilon X + \Big( \frac{2}{a^2} \pa_\xi \log \E \Upsilon(\xi) - r_{0}(\xi) \Big) \pa_\xi \E_\Upsilon X = \Big( \check{\LL}_{v_\xi}^+ + \LL_{v_\xi}^- +  \nabla_{ \mathbb{H}_g , \bfs q }   \Big) \E_\Upsilon X,
\end{equation}
where $\pa_\xi = \pa + \bp$ is the differential operator with respect to the real variable $\xi.$
\end{itemize}
\end{thm}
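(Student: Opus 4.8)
The plan is to derive both equations as the finite ($(\zeta-z)^0$) part of Ward's equation \eqref{eq: Ward_no_integral} after inserting the degenerate field $\Upsilon$. For part (i) I would first apply Theorem~\ref{Thm_Ward} to the tensor product $\Upsilon(z) X$, obtaining
\[
2\,\E A_{\bfs\beta}(\zeta)\Upsilon(z) X = \big(\LL_{v_\zeta}^+ + \LL_{v_{\bar\zeta}}^-\big)\E\,\Upsilon(z) X + \nabla_{\mathbb{H}_g,\bfs q}\,\E\,\Upsilon(z) X
\]
as an identity valid for all $\zeta$. Since $\Upsilon=\OO_{\bfs\beta}\{\bfs\alpha\}$ is a holomorphic differential of conformal dimension $h_{1,2}$ (its antiholomorphic weight is $0$ and it carries no $\bar z$-dependence), the holomorphic Lie derivative splits by Leibniz as $\LL_{v_\zeta}^+\E\Upsilon(z)X = (v_\zeta(z)\pa_z + h_{1,2} v_\zeta'(z))\E\Upsilon(z)X + \check\LL_{v_\zeta}^+\E\Upsilon(z)X$, where the checked operator acts only on the nodes of $X$, while $\LL_{v_{\bar\zeta}}^-$ acts only on $X$ from the outset by chirality.

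Next I would send $\zeta\to z$ and read off the identity order by order in $(\zeta-z)$. On the left I use the stress-tensor OPE $A_{\bfs\beta}(\zeta)\Upsilon(z) = h_{1,2}\Upsilon(z)/(\zeta-z)^2 + \pa\Upsilon(z)/(\zeta-z) + (A_{\bfs\beta}\ast\Upsilon)(z) + o(1)$, and on the right the Laurent data of $v_\zeta$. Writing the regular part of $v_\zeta$ as $R$, a short computation gives $r_1(z)=\pa_1 R(z,z)$ and $r_0'(z)=\pa_1 R(z,z)+\pa_2 R(z,z)$, so that the finite part of $v_\zeta'(z)$ equals $r_0'(z)-r_1(z)$ and the finite part of $v_\zeta(z)\pa_z$ is $r_0(z)\pa_z$. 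With this, the $(\zeta-z)^{-2}$ and $(\zeta-z)^{-1}$ terms match identically, confirming that $\Upsilon$ is primary of weight $h_{1,2}$; the genuine content is the $(\zeta-z)^0$ identity, whose left side is $2\,\E(A_{\bfs\beta}\ast\Upsilon)(z)X$ and whose right side contributes $r_0(z)\pa_z + h_{1,2}(r_0'(z)-r_1(z))$ acting on $\E\Upsilon X$, together with $\check\LL_{v_z}^+ + \LL_{v_{\bar z}}^- + \nabla_{\mathbb{H}_g,\bfs q}$.

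The final input is the level-two degeneracy. The condition $2a(a+b)=1$ is equivalent to $2h_{1,2}+1=3a^2$, which is exactly the null-vector relation $T_{\bfs\beta}\ast\Upsilon=\tfrac{1}{2a^2}\pa^2\Upsilon$; writing $A_{\bfs\beta}\ast\Upsilon = T_{\bfs\beta}\ast\Upsilon - (\E\,T_{\bfs\beta})\,\Upsilon$ converts $2\,\E(A_{\bfs\beta}\ast\Upsilon)(z)X$ into $\tfrac{1}{a^2}\pa_z^2\E\Upsilon X - 2\,\E\,T_{\bfs\beta}(z)\,\E\Upsilon X$, and rearranging produces \eqref{BPZ eq}. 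I expect the verification of this null-vector relation---the explicit OPE of the Virasoro field $T_{\bfs\beta}$ with the background-charge OPE exponential $\OO_{\bfs\beta}\{\bfs\alpha\}$, confirming that the level-two descendant collapses to $\tfrac{1}{2a^2}\pa^2\Upsilon$ in the presence of the background charge and the domain-dependent $\E\,T_{\bfs\beta}$---to be the main obstacle, together with the care needed in letting $\zeta\to z$ with $z$ a boundary point of $C_0$, where the holomorphic singular channel must be matched against the chiral field $\Upsilon$.

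For part (ii) I would specialize \eqref{BPZ eq} to $\xi\in\R$ and factor $\E\Upsilon(\xi)X = \E\Upsilon(\xi)\cdot\E_\Upsilon X$. Applying \eqref{BPZ eq} once to $\E\Upsilon(\xi)X$ and once to the partition function $\E\Upsilon(\xi)$ (the case $X=1$), then expanding $\pa_\xi^2$ and the modular operator $\nabla_{\mathbb{H}_g,\bfs q}$ by Leibniz---and using that $\check\LL_{v_\xi}^+$ and $\LL_{v_\xi}^-$ annihilate the $X$-free factor $\E\Upsilon(\xi)$---I would subtract $\E_\Upsilon X$ times the partition-function equation from the full one. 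This cancels the scalar multiplier $h_{1,2}(r_0'-r_1)+2\,\E\,T_{\bfs\beta}$ and the $\nabla$- and $r_0\pa_\xi$-contributions hitting $\E\Upsilon(\xi)$; dividing by $\E\Upsilon(\xi)$ then leaves the first-order drift $\tfrac{2}{a^2}\pa_\xi\log\E\Upsilon(\xi) - r_0(\xi)$ of \eqref{BPZ-Cardy}, with the cross term $\tfrac{2}{a^2}\pa_\xi\E\Upsilon(\xi)\,\pa_\xi\E_\Upsilon X$ producing the logarithmic derivative. This last step is essentially bookkeeping once part (i) is in hand.
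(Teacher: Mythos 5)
Your proposal is correct and follows essentially the same route as the paper: take the $\zeta\to z$ limit of Ward's equation \eqref{eq: Ward_no_integral} applied to $\Upsilon(z)X$, strip off the singular (primary) channel, invoke the level-two degeneracy $T_{\bfs\beta}\ast\Upsilon=\tfrac{1}{2a^2}\pa^2\Upsilon$ from $2a(a+b)=1$, and then obtain BPZ-Cardy by the Leibniz/subtraction argument against the null-vector equation. The only cosmetic difference is that you match Laurent coefficients directly, whereas the paper packages the same computation as $\lim_{\zeta\to z}(\LL_{v_\zeta}^+-\LL_{k_\zeta}^+)$ via Ward's OPE, with the degeneracy and primariness of $\Upsilon$ imported from the cited results of Kang--Makarov exactly as you anticipate.
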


\begin{rem}[Null-vector equation]
By letting $X=1$ in the BPZ equations \eqref{BPZ eq}, we obtain the null-vector equation satisfied by $\E \Upsilon(z)$: 
\begin{equation}
\begin{split} \label{null vector eq}
\frac{1}{a^2} \pa_z^2 \E \Upsilon(z) & = \nabla_{ \mathbb{H}_g, \bfs q }\, \E \Upsilon(z)  + \Big[ h_{1,2} \Big(r_{0}'(z)-r_{1}(z)\Big) + 2\,\E\,T_{\bfs\beta}(z) + r_{0}(z) \pa_z \Big] \E \Upsilon(z) . 
\end{split}
\end{equation}
\end{rem}

\begin{rem}[Comparison with the BPZ equations in a simply connected domain]
In a simply connected domain, the associated Loewner vector field is given by 
\begin{equation*}
k_{\zeta}(z)= \frac{2}{\zeta-z}
\end{equation*}
in the identity chart of $\H.$
Then the BPZ equation reads as 
\begin{equation}
\begin{split} \label{BPZ eq simply}
\frac{1}{a^2} \pa_z^2 \E \Upsilon(z) X & = \Big( \check{\LL}_{k_z}^+ + \LL_{k_{\bar{z}}}^- +  \sum_k v_\zeta(q_k) \partial_{q_k} \E X    + 2\,\E\,T_{\bfs\beta}(z) \Big) \E \Upsilon(z) X.
\end{split}
\end{equation}
See \cite[Proposition 5.13]{KM13} and \cite[Theorems 7.11 and 7.12]{KM21} for more about the BPZ equations in a simply connected domain.
Compared to \eqref{BPZ eq simply}, the BPZ equations \eqref{BPZ eq} in a multiply connected domain contain the geometric terms 
\begin{equation*}
\Big[ \nabla_{ \mathbb{H}_g }+ h_{1,2}\Big(r_{0}'(z)-r_{1}(z)\Big) + r_{0}(z) \pa_z \Big] \E \Upsilon(z) X.
\end{equation*}
Among these, the Teichm\"uller term $\nabla_{ \mathbb{H}_g }$ naturally comes from the Eguchi-Ooguri type of Ward's equation (Theorem~\ref{Thm_Ward}) and captures the evolution of slits. 
On the other hand, the remaining terms originate from the infinitesimal changes of the Lie derivatives in the chordal standard domain with respect to those in the upper-half plane; more precisely
\begin{equation*}
\left[ h_{1,2}\Big(r_{0}'(z)-r_{1}(z)\Big) + r_{0}(z) \pa_z \right] \E \Upsilon(z) X = \lim_{\zeta \to z} \left( \LL_{v_\zeta}^+- \LL_{k_\zeta}^+ \right) \E \Upsilon(z) X.
\end{equation*}
See Subsection~\ref{Subsection_BPZ} for further details.
\end{rem}

\begin{rem}
The function $r_0$ describes the difference of each side of a small vertical slit attached to the real axis as it evolves under the Loewner flow. 
To be more precise, if we define 
\begin{equation*}
\widehat{v}_\zeta(z) := v_\zeta(z) -r_0(z) 
\end{equation*}
so that 
\begin{equation*}
\widehat{v}_\zeta(z) = \frac{2}{\zeta-z} +O(\zeta-z), \qquad \zeta \to z, 
\end{equation*}
then the image of each side of a vertical slit under the flow generated by $\widehat{v}_\zeta$ has the same leading-order length on $\R$, see \cite{BF06,BF08}.
See also Remark~\ref{Remark_SLE drift} below for its appearance in the SLE drift functions. 
\end{rem}

The Schramm-Loewner evolution $\mathrm{SLE}(\kappa,\Lambda)$ with positive parameter $\kappa$ and the drift function $\Lambda\equiv\Lambda_D$ in a multiply connected domain $D$ is a conformally invariant law on random curves in $D$ described by the Loewner equation \eqref{chL flow} and driving process
\begin{equation} \label{driving SDE}
d \xi_t= \sqrt{\kappa} \, dB_t+ \Lambda_{D_t}(\xi_t) \,dt. 
\end{equation}
Throughout the paper, we consider the drift function associated with $\E \Upsilon (\xi):$ 
\begin{equation} \label{drift function}
\Lambda(\xi):=\kappa \, \pa_\xi \log \E \Upsilon (\xi) - r_{0}(\xi).
\end{equation}
Note that $\Lambda$ implicitly depends on the domain $D$, the background charge $\bfs \beta$, and the marked points in $\bfs \alpha = (a \cdot \xi + \bfs \alpha_0, \bfs 0)$ that are used to define $\Upsilon(\xi)$. The value of $a$ and $\kappa$ is determined by the neutrality condition of $\bfs \beta$. More precisely, if $\bfs \beta$ satisfies neutrality condition \eqref{NC b}, then $\kappa$ and $a$ are determined by the bijections 
\begin{equation*}
b = \sqrt{\kappa/8} - \sqrt{2/\kappa}, \quad a=\sqrt{2/\kappa}.  
\end{equation*}
This gives the central charge
\begin{equation*}
c= 1-12b^2= \frac{ (3\kappa-8)(\kappa-6) }{ 2\kappa }.
\end{equation*}

The relationship between SLE and our CFT is by the following notion of a \textit{martingale observable}. By definition, a non-random field $M$ is a martingale-observable for the SLE system \eqref{chL flow} if the process $M_t(z_1,\ldots,z_n) = (M_{(D_t,\gamma_t,\bfs q)}\,\|\,\id)(z_1,\ldots,z_n)$ is a local martingale on the SLE probability space, at least until the first time that any marked point is swallowed. Note that $M_t=(M \,\|\, g_t^{-1} )$ and thus $M_t$ becomes random by its evaluation on the random charts $g_t$, which in turn inherits its randomness from the stochastic evolution equation of the driving process $\xi_t$.
Using the BPZ-Cardy equations \eqref{BPZ-Cardy}, the next result shows that all fields in $\mathcal{F}_{\bfs \beta}$ lead to martingale observables for a certain class of stochastic evolution equations for $\xi_t$.

\begin{thm}[\textbf{SLE martingale observables}] \label{Thm_SLE MO}
Let the background charge $\bfs \beta$ and the OPE exponential $\Upsilon(z)$ be defined as in Theorem \ref{Thm_BPZ}.
Suppose that $\XX$ is a string of fields in $\FF_{ \bfs \beta }$. 
Then the non-random field 
\begin{equation*}
M = \E_\Upsilon \XX 
\end{equation*}
is a martingale observable for SLE($\kappa,\Lambda_D$) with drift function $\Lambda_D$ given by \eqref{drift function}.
\end{thm}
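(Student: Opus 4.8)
The plan is to apply It\^o's formula to the process $M_t = (M \| g_t^{-1})$ generated by the driving diffusion \eqref{driving SDE} and the Loewner flow \eqref{chL flow}, and to read off that its finite-variation part vanishes, which is exactly the content of the BPZ-Cardy equation \eqref{BPZ-Cardy} once the parameters are matched. Since $a=\sqrt{2/\kappa}$ we have $1/a^2=\kappa/2$ and $2/a^2=\kappa$, so that with the drift \eqref{drift function}, namely $\Lambda(\xi)=\kappa\,\pa_\xi\log\E\Upsilon(\xi)-r_0(\xi)$, the left-hand side of \eqref{BPZ-Cardy} is precisely
\[
\frac{\kappa}{2}\,\pa_\xi^2 M + \Lambda(\xi)\,\pa_\xi M, \qquad M=\E_\Upsilon\XX .
\]
Thus \eqref{BPZ-Cardy} asserts that this stochastic-looking expression equals the transport operator $\big(\check{\LL}_{v_\xi}^+ + \LL_{v_\xi}^- + \nabla_{\H_g,\bfs q}\big)M$, and the whole proof reduces to identifying the former with the It\^o drift of $M_t$ produced by \eqref{driving SDE} and the latter with the deterministic drift produced by the conformal transport of the domain.

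First I would record the It\^o decomposition. The time dependence of $M_t$ enters through two distinct mechanisms: the stochastic motion of the growth point $\xi_t$, which is also the insertion point of the one-leg operator $\Upsilon(\xi_t)$, and the purely deterministic conformal transport, under $g_t$, of all the remaining data --- the nodes of $\XX$, the marked points $\bfs q$, and the Teichm\"uller parameters $\{z_k^l,z_k^r\}$ of the slits. Viewing $M$ as a function of the insertion point $\xi$ and applying It\^o's formula to $d\xi_t=\sqrt{\kappa}\,dB_t+\Lambda_{D_t}(\xi_t)\,dt$, the contribution of the driving diffusion to $dM_t$ is the martingale term $\sqrt{\kappa}\,\pa_\xi M\,dB_t$ together with the drift $\big(\tfrac{\kappa}{2}\pa_\xi^2 M+\Lambda\,\pa_\xi M\big)\,dt$. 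Here the normalization $\E_\Upsilon\XX=\E\,\Upsilon(\xi)\XX/\E\,\Upsilon(\xi)$ is what manufactures the logarithmic derivative $\pa_\xi\log\E\Upsilon$, and hence the precise drift $\Lambda$ of \eqref{drift function}, out of the BPZ equation \eqref{BPZ eq}.

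Next I would compute the deterministic transport drift. Using the Loewner equation $\pa_t g_t=-v_{\xi_t,\H_g(t)}(g_t)$ together with the conformal covariance of the fields in $\FF_{\bfs\beta}$, the rate of change of $(M\|g_t^{-1})$ coming from the motion of the nodes is $-\big(\check{\LL}_{v_\xi}^+ + \LL_{v_\xi}^-\big)M$, the check recording that the coincident insertion/growth point $\xi$ is treated as a puncture (as in Theorem~\ref{Thm_BPZ}), so that the motion of the insertion point itself is booked entirely in the $\pa_\xi$ channel above and is not double-counted. The same flow drags the marked points with velocities $v_\xi(q_k)$ and the slit endpoints with velocities $v_\xi(z_k^l),v_\xi(z_k^r)$ and their conjugates, and the dependence of the correlation function on these moduli contributes exactly $-\nabla_{\H_g,\bfs q}M$. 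Collecting terms, the transport drift is $-\big(\check{\LL}_{v_\xi}^+ + \LL_{v_\xi}^- + \nabla_{\H_g,\bfs q}\big)M\,dt$.

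Adding the two contributions, the finite-variation part of $dM_t$ is
\[
\Big[\tfrac{\kappa}{2}\pa_\xi^2 M + \Lambda\,\pa_\xi M - \big(\check{\LL}_{v_\xi}^+ + \LL_{v_\xi}^- + \nabla_{\H_g,\bfs q}\big)M\Big]\,dt,
\]
which vanishes identically by the BPZ-Cardy equation \eqref{BPZ-Cardy}; only the martingale term $\sqrt{\kappa}\,\pa_\xi M\,dB_t$ survives, so $M_t$ is a local martingale. Because the fields of $\FF_{\bfs\beta}$ and the vector field $v_\zeta$ stay smooth as long as no node, marked point, or slit endpoint collides with the tip of $\gamma_t$, this conclusion holds up to the first swallowing time, which is the asserted range of the local-martingale property. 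I expect the main obstacle to be the rigorous identification of the deterministic transport drift with $-\big(\check{\LL}_{v_\xi}^+ + \LL_{v_\xi}^- + \nabla_{\H_g,\bfs q}\big)M$: this demands careful bookkeeping of the covariance factors as the chart $g_t$ evolves, the correct regularization of the Lie derivative at the coincident insertion and growth point (the check), and --- most delicately --- the verification that the Loewner flow drags the Teichm\"uller moduli with exactly the velocities $v_\xi(z_k^{l}),v_\xi(z_k^{r})$ appearing in $\nabla_{\H_g}$, so that the moduli-dependence of $\E_\Upsilon\XX$ reproduces $\nabla_{\H_g,\bfs q}$ and not some other combination. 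Once this geometric transport lemma is in place, the cancellation against \eqref{BPZ-Cardy} is immediate.
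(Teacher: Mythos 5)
Your proposal is correct and follows essentially the same route as the paper: It\^o's formula applied to $M_t=(M\,\|\,g_t^{-1})$, identification of the It\^o drift $\tfrac{\kappa}{2}\pa_\xi^2 M+\Lambda\,\pa_\xi M$ from the driving SDE, identification of the deterministic transport drift $-\big(\check{\LL}_{v_\xi}^+ + \LL_{v_\xi}^- + \nabla_{\H_g,\bfs q}\big)M$ via the Loewner flow (with the slit endpoints and marked points moving with velocities $-v_{\xi_t}$, as in the paper's equation \eqref{motion of slits}), and cancellation by the BPZ-Cardy equation \eqref{BPZ-Cardy}. The "geometric transport lemma" you flag as the main obstacle is handled in the paper exactly as you anticipate, by computing $\pa m/\pa t=-\check{\LL}_{v_\xi}m$ from $f_{s,t}=g_{t+s}\circ g_t^{-1}=\id-sv_{\xi_t}+o(s)$ and invoking the known motion of the moduli under the Loewner flow.
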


\begin{rem}[SLE drift function] \label{Remark_SLE drift}
In general, there are many candidates for SLE drift functions. 
Our drift function \eqref{drift function} is motivated by the BPZ-Cardy equation \eqref{BPZ-Cardy}, which leads to the martingale observable property. 
Our drift function is also natural from the viewpoint of the GFF/SLE coupling, see also \cite{IK13,Du09}.
On the other hand, Dub\'{e}dat proposed SLE drift functions using the commutation relation \cite[Section 9]{Du07}.
More precisely, in terms of a partition function $Z$ satisfying a null-vector equation similar to \eqref{null vector eq}, the drift function of the form 
\begin{equation*}
\kappa \, \pa_\xi \log Z(\xi) -r_0(\xi)
\end{equation*}
was considered, see \cite[pp.1834--1835]{Du07}.
Let us also mention that Zhan \cite{Zh04} used the drift function of the form 
\begin{equation*}
\Big(3- \frac{\kappa}{2}\Big) \frac{ \pa_\xi \pa_\eta G_D(\xi,\eta) }{ \pa_\xi G_D(\xi,\eta) }
\end{equation*}
to define the harmonic random Loewner chain in a multiply connected domain.
This in particular gives the scaling limit of the loop-erased random walk \cite{Zh08} for $\kappa=2$.
\end{rem}

\section{Green's functions and Loewner vector field}

In this section we present a concise exposition of the fundamental concepts pertaining to Green's functions and Loewner vector fields in a multiply connected domain.

\subsection{Special functions} \label{Subsection_special functions}

We first collect basic facts on fundamental geometric functions in a multiply connected domain. To make the presentation more concrete we frequently use either chordal standard domains or circular domains as a model of a multiply connected domain. The circular domains are more convenient for expressing the geometric functions such as the Schottky-Klein prime function, while the chordal standard domains are more useful for describing the Loewner evolutions. There is no loss of generality in this approach since every multiply connected domain is conformally equivalent to a chordal standard domain.

\subsubsection{Chordal Standard Domains and their Schottky Doubles}
Our presentation also uses the notion of the \textit{Schottky double} of the multiply connected domain. If $D$ is a multiply connected domain then, informally, its Schottky double is obtained by gluing a copy of $D$ to itself along the boundary components. This makes the Schottky double a compact Riemann surface with genus equal to the number of ``holes'' in the original domain $D$. In many ways the Schottky double is the natural setting for complex analysis, and analytical results on the original domain are a shadow of results on the double.

\subsubsection{Circular Domains and their Schottky Doubles}
We call a domain $D$ circular if it is a finitely connected domain all of whose boundary components are circles. This domain is often employed as a prototypical example of a multiply connected domain in geometric function theory. We assume that the outer boundary $C_0$ is the unit circle and write $C_j$ ($j=1,\cdots,g$) for each boundary component. Let $\delta_j \in \C$ and $r_j >0$ be the centres and radii of the $C_j$'s, i.e.
\begin{equation}\label{circles C_j}
C_j=\{ \zeta \in \C : |\zeta-\delta_j|=r_j \}. 
\end{equation}
By assumption the $\delta_j$ and $r_j$ are such that $C_j \subset \{ z : |z| < 1 \}$ for each $j$, and the $C_j$, $j=1,\ldots,g$ are all disjoint. We sometimes write $\bfs{r} = (r_1, \ldots, r_g)$, $\bfs{\delta} = (\delta_1, \ldots, \delta_g)$, and $\CC(\bfs{r}, \bfs{\delta})$ for the domain
\[
\CC(\bfs{r}, \bfs{\delta}) = \left \{ z \in \C : |z| < 1, |z - \delta_j| > r_j \textrm{ for } j=1,\ldots, g \right \}.
\]

For a circular domain $\CC(\bfs{r}, \bfs{\delta})$ the Schottky double has a straightforward representation derived from reflection across the unit circle. For each $j=1,\ldots,g$ let $C'_j$ be the circle obtained by the reflection of $C_j$ with respect to $C_0$, i.e. $C'_j=\{ \zeta^* \in \C : \zeta \in C_j \}$, where $\zeta^*=1/\bar{\zeta}$. The $2g$ circles $\{ C_j, C_j' : j=1,\dots, g \}$ are known as the \textit{Schottky circles}. The complement of the (union of) the Schottky circles is a model of the Schottky double for the circular domain. This complement is naturally divided into two halves: the part in the interior of the open unit circle and then the reflection of this interior set through the unit circle. In the usual nomenclature these two parts are the ``front'' and ``back'' sides of the Schottky double. The double has no boundary since each point $\zeta \in C_j$ is identified with its reflection $\zeta^* \in C_j'$, and points on the unit circle $\{ |z| = 1 \}$ are interior points of the double.

\subsubsection{Schottky-Klein prime function} 
The Schottky-Klein prime function has many uses, but for our purposes it is a convenient tool for representing meromorphic functions on a compact Riemann surface. We introduce it in the context of circular domains. For each $j=0,1,\cdots, g$, and with $\delta_0 = 0$ and $r_0 = 1$, we define M\"{o}bius maps 
\begin{equation}\label{conj map}
\displaystyle \phi_j(\zeta) := \bar{\delta}_j+\frac{r_j^2}{\zeta-\delta_j}.
\end{equation}
Note that $\phi_j$ conjugates points on the circle $C_j$, i.e. for $\zeta \in C_j$, $\bar{\zeta}=\phi_j(\zeta)$.
Set 
\begin{equation}\label{gen SK}
\displaystyle \theta_j(\zeta):=\bar{\phi}_j(\zeta^{-1})=\delta_j+\frac{r_j^2\zeta}{1-\bar{\delta}_j\zeta}. 
\end{equation}
Then it is easy to see that the image of $C'_j$ under $\theta_j$ is $C_j$. The \emph{Schottky group} $\Theta$ is defined to be the infinite free group of mappings generated by compositions of basic M\"{o}bius maps $\theta_j$ and their inverses $\theta_j^{-1}$, $j=0,1,\cdots, g$. We refer to \cite{MSW02} for more about Schottky group.

From the Schottky group we construct the \emph{Schottky-Klein (S-K) prime function} $\omega(\zeta,z)$. The first step in the construction is to choose a semigroup $\Theta''$ within the Schottky group $\Theta$. Thus $\Theta''$ should be closed under composition but exclude the identity and all inverse maps. For instance, if $\theta_1$, $\theta_1 \theta_2^{-1} \in \Theta''$, then $\theta_1^{-1}$, $\theta_1^{-1}\theta_2 \not \in \Theta''$. Given $\Theta''$ the Schottky-Klein prime function is defined by the infinite product
\begin{equation}\label{S-K ftn}
\displaystyle \omega(\zeta,z):=(\zeta-z) \, \wt{\omega}(\zeta,z), \qquad \wt{\omega}(\zeta,z):= \prod_{\ti{\theta}\in \Theta''} \frac{(\ti{\theta}(\zeta)-z) (\ti{\theta}(z)-\zeta ) }{ (\ti{\theta}(\zeta)-\zeta) (\ti{\theta}(z)-z ) }.
\end{equation}
The function $\omega$ is well defined even though $\Theta''$ is not determined uniquely. See \cite[Sections 4 and 5]{CM05} and \cite{He72} for more about S-K prime function. 
The S-K prime function satisfies the rules 
\begin{equation}\label{SK prime transformation}
 \omega(\zeta,z)=-\omega(z,\zeta), \qquad \overline{\omega(1/\bar \zeta,1/\bar z)}=-\frac{1}{\zeta z} \omega(\zeta,z).
\end{equation}
The skew-symmetry of $\omega$ follows from \eqref{S-K ftn} and the symmetry of $\wt{\omega}$. For the second rule, we refer to \cite[Appendix~A]{CM05}.

\begin{example}{S-K prime function on an annulus}
On the annulus $\A_r = \{ z \in \C : e^{-r} < |z| < 1\}$ the Schottky group has $\theta_1(\zeta)=e^{-2r}\zeta$ as a generator, and we may take 
\[
\Theta'' = \left \{ \theta_1^{(k)} : k \in \mathbb N \right \}
\]
as our semigroup.
Therefore, the S-K prime function $\omega$ in $\A_r$ is represented as 
\begin{equation}\label{S-K annuli}
\displaystyle \omega(\zeta,z)= (\zeta-z)\prod_{k=1}^{\infty} \frac{ (e^{-2rk}\zeta-z) (e^{-2rk}z-\zeta) } { (e^{-2rk}\zeta-\zeta) (e^{-2rk}z-z) }.
\end{equation}
\end{example}

\subsubsection{Riemann theta function}

For a general compact Riemann surface $M$ of genus $g$, let us fix a canonical basis $\{a_j,b_j\}$ of the homology $H_1=H_1(M)$ satisfying $a_j \cdot b_k = \delta_{jk}$ (recall that $a \cdot b$ is the intersection number of the curves $a$ and $b$). On a circular domain the $a_j$ can be chosen as the circles $C_j$, $j=1,\ldots,g$, each of which is identified with the circles $C_j'$. Each $b_j$ can be taken as a curve in the complement of the Schottky circles that connects a $\zeta \in C_j$ to $\zeta^* \in C_j'$. The space of holomorphic $1$-differentials on $M$ is $g$-dimensional, and there is a basis $\{\eta_j \}$ of this space which is uniquely determined by the equations 
\begin{equation} \label{vj def}
\oint_{a_k} \eta_j = \delta_{jk}.
\end{equation}
Given the basis $\{ \eta_j \}$ the period matrix $\tau$ is defined as 
\begin{equation*}
\tau=\{ \tau_{jk}\}, \qquad \tau_{jk}=\oint_{b_k} \eta_j.
\end{equation*}
It is well known that the period matrix is symmetric and $\im \tau$ is positive definite, see e.g. \cite[Chapter III.3]{FK92}. From the period matrix we construct the lattice $\Lambda=\mathbb{Z}^g+ \tau\mathbb{Z}^g$ in $\C^g$ and denote by 
$$
\mathbb{T}_\Lambda \equiv \mathbb{T}_\Lambda^g:= \C^g /\Lambda
$$ 
the Jacobi variety of $M$.
For each $p_0 \in M$, we write
\begin{equation*}
\mathcal{A}_{p_0} \equiv \mathcal{A} : M \rightarrow \mathbb{T}_\Lambda, \qquad p \mapsto \int_{p_0}^{p} (\eta_1,\cdots,\eta_g)^T 
\end{equation*}
for the \emph{Abel-Jacobi} map.

The \textit{Riemann theta function} $\Theta(\cdot| \tau) $ associated with $\tau$ is given as 
\begin{equation*}
\Theta \equiv \Theta(Z|\tau):=\sum_{N \in \mathbb{Z}^g} e^{2\pi i (Z \cdot N +\frac{1}{2}\tau N \cdot N) }, \qquad Z\in \C^g.
\end{equation*}
The theta function is an even, entire function on $\C^g$ and has the periodicity properties
\begin{equation}\label{perio R-theta}
\Theta(Z+N)=\Theta(Z), \qquad \Theta(Z+\tau N)=e^{-2\pi i (Z \cdot N + \frac{1}{2}\tau N \cdot N)} \Theta (Z), \qquad \textrm{ for } N \in \mathbb{Z}^g.
\end{equation}

\subsection{Green's function with Dirichlet/excursion reflected boundary condition} \label{Subsection_Green functions}
We now introduce the Green's function of a multiply connected domain. In contrast to the last section we first phrase the discussion on general domains. Throughout we assume that $D$ is a bounded planar domain whose boundary is $g+1$ disjoint, smooth Jordan curves $C_j$, $j=0,1 \cdots, g$. We take $C_0$ to be the outermost boundary. A prototypical example is given by a circular domain.

For a specified boundary condition and each $z \in D$, the Green's function $\zeta \mapsto G_D(\zeta, z)$ is the unique function such that 
$G_D(\zeta,z) + \log |\zeta-z|$ is harmonic with respect to $\zeta$ throughout the region $D$, including at $z$, and that obeys the prescribed boundary condition. 

We shall consider two fundamental boundary conditions: the zero Dirichlet boundary condition and the excursion reflected (ER) boundary condition. 
The Green's function with ER boundary condition, also referred to as the hydrodynamic or modified Green's function \cite{Sc50,CM05}, is often used in the theory of conformal mappings in a multiply connected domain \cite{FK92}. 
To distinguish Green's functions with Dirichlet and ER boundary conditions, we use the notations $G_D^{Diri}$ and $G_D^{ER}$, respectively. 
The boundary conditions are given as follows. 
\begin{itemize}
 \item \emph{Dirichlet boundary condition}: 
 	\begin{equation} \label{DG cond}
	G_D^{Diri}(\zeta,z)=0 \qquad \text{if} \quad \zeta \in C_j, \quad (j=0,1, \ldots, g).
	\end{equation}
	\item \emph{Excursion reflected boundary condition}: 
	\begin{align}\label{ERG cond}
	\begin{split}
	G_D^{ER}(\zeta,z)&=\begin{cases}
	0 &\text{if}\quad \zeta \in C_0,
	\smallskip	
	\\
	\gamma_j(z) &\text{if} \quad \zeta \in C_j, \quad (j=1, \ldots, g),
	\end{cases}
	\end{split}
	\end{align}
	where $\gamma_j$'s are functions depending only on $z$, and
	\begin{equation*}
	\displaystyle \oint_{C_j} \frac{\pa G_D^{ER}(\zeta,z) }{\pa n_z}\, ds_z = 0, \quad \text{for}\quad j=1, \ldots, g.
	\end{equation*}
	Here, $\pa/\pa n_z$ is the derivative in the direction of the outwarding point normal and $ds_z$ denotes the arclength element with integration variable $z$. These requirements uniquely determine the functions $\gamma_j$.
\end{itemize}

The Green's function with Dirichlet boundary conditions measures local times for Brownian motion on $D$ that is killed when it hits one of the $C_j$. For the ER Green's function the associated stochastic process, called ER Brownian motion, was studied in \cite{CFR16,Dren11}.

\subsubsection{Relation between Dirichlet and ER Green's function}

We denote by $h_j, j=0,1,\ldots, g$, the harmonic functions on $D$ with boundary conditions
\begin{equation}\label{harmonic msrs}
h_j(\zeta)=
\begin{cases}
1 \quad \text{on} & C_j, 
\\
0 \quad \text{on} & C_k, \quad k\neq j.
\end{cases}
\end{equation}
Note that $h_j(\zeta)$ corresponds to the probability that a Brownian motion starting at $\zeta$ exits $D$ through $C_j$.  
It is well known that they can be expressed in terms of $G_D^{Diri}$ as
\begin{equation*}
\displaystyle h_j(\zeta)=-\frac{1}{2\pi} \oint_{C_j} \frac{\pa {G_D^{Diri}}(\zeta,z)}{\pa n_z}\, ds_z.
\end{equation*}
From the $h_j$ we also extract the collection of integrals
\begin{equation}\label{Pkj}
\displaystyle P_{kj}=\frac{1}{2\pi} \oint_{C_k}\frac{\pa h_j(z)}{\pa n_z}ds_z, 
\end{equation}
which are known as the periods of $h_j$. Then it is known that the \textit{period matrix} 
$\displaystyle \bfs{P}=[P_{kj}]_{k,j=1}^g$
is real, symmetric and positive-definite. 
We write $\bfs{h}=(h_1,\ldots,h_g)^T$. The following relation follows from \cite[Proposition 5.2]{Dren11}, \cite[Eq.(45)]{CM07} and \cite[Eq.(6)]{BF06}.

\begin{prop}\label{prop:ER_Diri_relation}
In a multiply connected domain $D$, we have
\begin{align}
 G_D^{Diri}(\zeta,z) & = G_D^{ER}(\zeta,z)-\sum_{j=1}^{g} \gamma_j (z) \, h_j(\zeta)
= G_D^{ER}(\zeta,z)- \Big\langle \bfs{h}(\zeta), \bfs{P}^{-1} \bfs{h}(z) \Big\rangle, \label{ER-Diri 2}
\end{align}
where $\langle \cdot, \cdot \rangle$ is the standard inner product. 
\end{prop}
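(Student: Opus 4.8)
The plan is to establish the two equalities in turn: the first by a uniqueness argument for the Dirichlet problem, and the second by a short linear-algebra computation that pins down the constants $\gamma_j(z)$ through the vanishing-period requirement built into the definition of $G_D^{ER}$.

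First I would set $F(\zeta) := G_D^{ER}(\zeta,z) - G_D^{Diri}(\zeta,z)$. Both Green's functions carry the same logarithmic singularity, so that $G_D^{ER}(\zeta,z)+\log|\zeta-z|$ and $G_D^{Diri}(\zeta,z)+\log|\zeta-z|$ are each harmonic in $\zeta$ on all of $D$; hence the singularities cancel and $F$ is harmonic in $\zeta$ throughout $D$. The boundary values of $F$ are read off from \eqref{DG cond} and \eqref{ERG cond}: $F$ equals $0$ on $C_0$ and the constant $\gamma_j(z)$ on each $C_j$, $j=1,\dots,g$. Since the harmonic measures $h_j$ from \eqref{harmonic msrs} are precisely the harmonic functions with these locally constant boundary data (each $h_j$, $j\ge 1$, vanishes on $C_0$), uniqueness for the Dirichlet problem forces $F(\zeta)=\sum_{j=1}^g \gamma_j(z)\,h_j(\zeta)$, which is the first equality in \eqref{ER-Diri 2}.

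To upgrade this to the closed form I would determine the vector $\bfs\gamma(z)=(\gamma_1(z),\dots,\gamma_g(z))^T$ explicitly. Taking the flux around a fixed inner boundary $C_k$ (the integral of the $\zeta$-normal derivative over $C_k$) in the identity $G_D^{ER}=G_D^{Diri}+\sum_j\gamma_j(z)h_j$, the left-hand side vanishes by the vanishing-period condition in \eqref{ERG cond} (read in the first variable, which is legitimate by the symmetry of $G_D^{ER}$), while the term $\sum_j\gamma_j(z)h_j$ contributes $2\pi\sum_j P_{kj}\gamma_j(z)$ by the definition \eqref{Pkj} of the period matrix. The remaining contribution is the flux of $G_D^{Diri}(\cdot,z)$ around $C_k$; invoking the symmetry $G_D^{Diri}(\zeta,z)=G_D^{Diri}(z,\zeta)$ together with the Poisson-kernel representation of $h_k$ displayed just after \eqref{harmonic msrs}, this flux equals $-2\pi h_k(z)$. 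Collecting the three terms yields the linear system $\sum_j P_{kj}\gamma_j(z)=h_k(z)$, i.e. $\bfs P\,\bfs\gamma(z)=\bfs h(z)$. As $\bfs P$ is real, symmetric and positive-definite, hence invertible, I conclude $\bfs\gamma(z)=\bfs P^{-1}\bfs h(z)$; substituting this back and using the symmetry of $\bfs P^{-1}$ turns $\sum_j\gamma_j(z)h_j(\zeta)$ into $\langle \bfs h(\zeta),\bfs P^{-1}\bfs h(z)\rangle$, which is the second equality.

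The main obstacle I anticipate is the bookkeeping in the flux computation: one must track which variable carries the outward normal derivative and the arclength element, and correctly use the symmetry of the Dirichlet Green's function to convert the flux of $G_D^{Diri}(\cdot,z)$ around $C_k$ into $-2\pi h_k(z)$ with the right sign. A secondary point worth verifying is that all the period integrals are genuine boundary fluxes despite the logarithmic singularity of $G_D^{Diri}(\cdot,z)$: this is automatic because $z$ is an interior point while each $C_k\subset\partial D$, so both $G_D^{Diri}(\cdot,z)$ and the $h_j$ are harmonic in a neighbourhood of $C_k$.
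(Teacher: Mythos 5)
Your proof is correct. Note that the paper itself offers no proof of this proposition; it simply cites \cite{Dren11}, \cite{CM07} and \cite{BF06}, so there is no ``paper's approach'' to compare against --- but your argument is the standard one underlying those references and you have filled in the details accurately. The first equality is exactly the right uniqueness argument (the logarithmic singularities cancel, the difference is harmonic with locally constant boundary data, hence a combination of the harmonic measures $h_j$), and the flux computation pinning down $\bfs\gamma(z)=\bfs P^{-1}\bfs h(z)$ is carried out with the correct signs: the flux of $G_D^{Diri}(\cdot,z)$ around $C_k$ is indeed $-2\pi h_k(z)$ by the Poisson-kernel formula for $h_k$ together with the symmetry of the Dirichlet Green's function, and the $h_j$-terms contribute $2\pi\sum_j P_{kj}\gamma_j(z)$ by the definition \eqref{Pkj}. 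You also correctly flag the one genuine subtlety, namely that the vanishing-period condition in \eqref{ERG cond} is stated in the second variable and must be transported to the first variable via the symmetry of $G_D^{ER}$; since that symmetry is a standard property of the modified Green's function, the argument is complete.
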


\subsubsection{Green's functions on circular domains}

In \cite{CM07a}, the authors express the ER Green's function defined in a circular domain in terms of the S-K prime function. 

\begin{prop}
On a circular domain $D$, we have 
\begin{equation}\label{ER Green v1}
\displaystyle G_D^{ER}(\zeta,z)= \frac{1}{2}\log \left| \frac{ \omega(\zeta, \bar{z}^{-1}) \, \omega(\zeta^{-1},\bar{z}) }{\omega(\zeta,z) \, \omega(\zeta^{-1},z^{-1}) } \right| = \log \left| \frac {z \, \omega(\zeta, \bar{z}^{-1})} {\omega(\zeta,z)} \right|.
\end{equation}
and so, by Proposition \ref{prop:ER_Diri_relation},
 \begin{equation}\label{Diri Green v1}
 \displaystyle G_D^{Diri}(\zeta,z)= \log \Big| \frac{z \, \omega(\zeta, \bar{z}^{-1})}{\omega(\zeta,z)} \Big|-\Big\langle \bfs{h}(\zeta), \bfs{P}^{-1} \bfs{h}(z) \Big\rangle. 
 \end{equation}
\end{prop}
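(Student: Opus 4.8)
The plan is to reduce everything to the closed form \eqref{ER Green v1} for the ER Green's function, since the Dirichlet formula \eqref{Diri Green v1} then follows at once: substituting \eqref{ER Green v1} into the relation \eqref{ER-Diri 2} of Proposition \ref{prop:ER_Diri_relation}, and keeping the second of the two displayed expressions for $G_D^{ER}$, reproduces \eqref{Diri Green v1} verbatim. So the entire task is to establish \eqref{ER Green v1}, and for this I would verify that the proposed expression satisfies the three defining properties of the ER Green's function and then invoke uniqueness.

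Write $F(\zeta,z):=\log\left|\dfrac{z\,\omega(\zeta,\bar z^{-1})}{\omega(\zeta,z)}\right|$. First I would check the correct logarithmic singularity: by \eqref{S-K ftn} we have $\omega(\zeta,z)=(\zeta-z)\,\wt\omega(\zeta,z)$ with $\wt\omega$ holomorphic and non-vanishing near the diagonal, so $F(\zeta,z)+\log|\zeta-z|=\log|z|+\log|\omega(\zeta,\bar z^{-1})|-\log|\wt\omega(\zeta,z)|$ is harmonic in $\zeta$ across $\zeta=z$. Here one uses that $\bar z^{-1}=1/\bar z$ is the reflection of $z$ across $C_0$, so that $|\bar z^{-1}|>1$ and $\bar z^{-1}$ together with its relevant $\Theta$-images lies outside $D$; hence $\omega(\cdot,\bar z^{-1})$ has no zero in $D$. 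Away from $\zeta=z$ the function $F$ is the log-modulus of a locally holomorphic function of $\zeta$, hence harmonic, and single-valuedness of the modulus is tied to the period discussion below.

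The heart of the argument is the boundary behavior. On the outer circle $C_0=\{|\zeta|=1\}$ one has $1/\bar\zeta=\zeta$, and applying the reflection rule in \eqref{SK prime transformation} to $\omega(\zeta,\bar z^{-1})$ gives $|\omega(\zeta,\bar z^{-1})|=|z|^{-1}|\omega(\zeta,z)|$ on $C_0$, whence $F\equiv 0$ there, matching \eqref{ERG cond}. On an inner circle $C_j$ ($j\ge 1$) I would use the conjugation $\bar\zeta=\phi_j(\zeta)$ together with the quasi-periodicity of $\omega$ under the Schottky generator $\theta_j$ to show that $F(\cdot,z)$ is constant along $C_j$, equal to some $\gamma_j(z)$, and that the integrated normal derivative $\oint_{C_j}\pa_{n}F\,ds$ vanishes. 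Equivalently, these two facts assert that $F(\cdot,z)$ descends to a single-valued, flux-free function on the Schottky double across the handles, which is exactly the content of \eqref{ERG cond}. This is the step I expect to be the main obstacle, as it rests entirely on the automorphy factors of the Schottky-Klein prime function under the Schottky group $\Theta$ rather than on elementary manipulations; I would import the requisite transformation properties from the construction in \cite{CM07a, CM05}.

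With these three properties in hand, uniqueness of the ER Green's function gives $F=G_D^{ER}$, i.e. the second equality in \eqref{ER Green v1}. Finally, the first (manifestly symmetric) expression in \eqref{ER Green v1} has precisely the same singularity at $\zeta=z$ — the extra factor $\omega(\zeta^{-1},z^{-1})$ contributes the complementary $-\tfrac12\log|\zeta-z|$ — and the same vanishing on $C_0$ and flux-free behavior on the $C_j$, again via \eqref{SK prime transformation} and the skew-symmetry of $\omega$; so by uniqueness it too coincides with $F$. This establishes both equalities in \eqref{ER Green v1}, and \eqref{Diri Green v1} follows immediately from Proposition \ref{prop:ER_Diri_relation}.
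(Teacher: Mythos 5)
Your strategy is sound, but it is genuinely different from what the paper does: the paper offers no proof of \eqref{ER Green v1} at all --- it imports the formula directly from Crowdy--Marshall \cite{CM07a}, remarks that the second equality follows from the transformation rule \eqref{SK prime transformation}, and then obtains \eqref{Diri Green v1} by substituting into Proposition \ref{prop:ER_Diri_relation}, exactly as you do in your first paragraph. Your verification-plus-uniqueness argument is the standard way one would actually prove the cited formula, and the parts you carry out explicitly are correct: the singularity analysis via $\omega(\zeta,z)=(\zeta-z)\,\wt\omega(\zeta,z)$, the vanishing on $C_0$ from $1/\bar\zeta=\zeta$ combined with \eqref{SK prime transformation}, and the uniqueness argument (harmonic, zero on $C_0$, constant with zero flux on each $C_j$ forces the difference to vanish by the Dirichlet energy identity). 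The one place where you differ from the paper on a step the paper does address is the equality of the two displayed expressions in \eqref{ER Green v1}: the paper derives it algebraically from \eqref{SK prime transformation}, whereas you run the uniqueness argument a second time on the symmetric expression; both work, though your route requires rechecking the boundary behavior of the extra factors $\omega(\zeta^{-1},\bar z)$ and $\omega(\zeta^{-1},z^{-1})$ on the back side of the Schottky double. The only real incompleteness --- which you candidly flag --- is the constancy and flux-free property on the inner circles $C_j$, which rests on the automorphy factors of $\omega$ under the generators $\theta_j$; since you defer this to \cite{CM05,CM07a} and the paper defers the entire proposition to the same sources, your proposal is at least as complete as the paper's treatment, and the trade-off is that your version makes visible exactly which properties of the prime function the formula depends on.
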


The second equality of \eqref{ER Green v1} is from the transformation rule \eqref{SK prime transformation} of the S-K prime function. 

\begin{example}{Green's functions on an annulus}
Let us return to the annulus $\A_r = \{ z \in \C : e^{-r} < |z| < 1\}$ as a reference domain. From formula \eqref{S-K annuli} for the S-K prime function and Proposition 2.2 we have
\begin{equation}\label{ER Green annul}
G_{\A_r}^{ER}(\zeta,z)=-\log \left| \frac{z-\zeta }{1-\zeta \bar{z}} \prod_{k=1}^{\infty} \frac{(1-e^{-2rk}\zeta z^{-1}) (1-e^{-2rk} z \zeta^{-1}) }{ (1-e^{-2rk} \zeta \bar{z}) (1-e^{-2rk} \zeta^{-1} \bar{z}^{-1}) } \right|.
\end{equation}
It is also easy to see that in $\A_r$, 
\begin{equation*}
h_1(\zeta)=-\frac{\log|\zeta|}{r}, \qquad \bfs{P}=P_{11}=\frac{1}{r}.
\end{equation*}
Therefore by \eqref{ER-Diri 2}, the Dirichlet Green's function on $\A_r$ is given by
\begin{equation} \label{Diri Green annul}
 G_{\A_r}^{Diri}(\zeta,z)=
 -\log \left| \frac{z-\zeta }{1-\zeta \bar{z}} \prod_{k=1}^{\infty} \frac{(1-e^{-2rk}\zeta z^{-1}) (1-e^{-2rk} z \zeta^{-1}) }{ (1-e^{-2rk} \zeta \bar{z}) (1-e^{-2rk} \zeta^{-1} \bar{z}^{-1}) } \right|-\frac{\log|\zeta| \log|z|}{r}. 
\end{equation}
\end{example}

\subsubsection{Domain constant and the conformal radius}
Returning to the general setting, as $\zeta \to z$ the Green's functions have the asymptotic expansions 
\begin{align}
G_D^{Diri}(\zeta,z)=\log \frac{1}{|\zeta -z|}+d_D(z) +o(1), \label{domain const}
\\
G_D^{ER}(\zeta,z)=\log \frac{1}{|\zeta -z|}+c_D(z) +o(1). \label{conformal radius}
\end{align}
We recall that the function $d_D$ is called the domain constant, whereas $c_D$ is called the (logarithmic) conformal radius, see e.g. \cite[Section 4]{BF06}.

\subsubsection{Bipolar Green's function on a compact Riemann surface} 
We will also need to make use of Green's functions on the Schottky double, but for this it is more useful to consider the more general \textit{bipolar} Green's function on a compact Riemann surface. We now recall a representation of the bipolar Green's function in terms of the Riemann theta function; the full details can be found in \cite{KM17}. 

Let $M$ be a compact Riemann surface of genus $g$, and let $p,q$ be distinct marked points of $M$. The bipolar Green's function $z \mapsto G_{p,q}(z)$ is determined (up to additive constants) by the requirement that it be harmonic and have the following asymptotic expansion near marked points: in some/any chart
\begin{align*}
G_{p,q}(z)&= \phantom{+}\log\frac{1}{|z-p|}+O(1) \quad (z \rightarrow p),
\\
G_{p,q}(z)&= -\log\frac{1}{|z-q|}+O(1) \quad (z \rightarrow q).
\end{align*}
It follows from \cite[Appendix A]{KM17} that the bipolar Green's function $G_{p,q}$ is expressed as 
\begin{equation}
\label{BG HG}
G_{p,q}(z)=\log \left| \frac{\Theta(\mathcal A(z)- \mathcal A(q)-e)}{\Theta(\mathcal A(z)- \mathcal A(p)-e)} \right|
-2\pi \Big\langle (\im \tau)^{-1} \im[\mathcal A(p)- \mathcal A (q)] , \im \mathcal{A}(z) \Big\rangle,
\end{equation}
where the point $e$ is chosen to be an element of $\mathbb{T}_{\Lambda}$ satisfying $\Theta(e) = 0$ and such that neither of the maps 
\begin{equation*} 
z \mapsto \begin{cases}
 \Theta(\mathcal A(z)-\mathcal A(p)-e)
 \\
 \Theta(\mathcal A(z)-\mathcal A(q)-e)
\end{cases} 
\end{equation*}
is identically zero. 
Note that the expression \eqref{BG HG} depends on the choice of $e$, but not the choice of the base point $p_0$ of the Abel-Jacobi map $\mathcal{A}$. 

\begin{example}{The genus one case} 
We consider the complex torus $\mathbb{T}_{\Lambda}:=\C/\Lambda$ of genus one, where $\Lambda=\mathbb{Z}+\tau \mathbb{Z}$ is the group generated by $z \mapsto z+1, z \mapsto z+\tau$. 
Here, the modular parameter $\tau$ is in the upper-half plane $\mathbb H$. The holomorphic differentials on $\mathbb{T}_{\Lambda}$ form a one-dimensional vector space spanned by $dz$, therefore the Abel-Jacobi map is simply the identity, and according to \eqref{BG HG} the bipolar Green's function $G_{p,q}$ on $M=\mathbb{T}_{\Lambda}$ is given by (up to an additive constant)
\begin{equation}\label{BH g=1}
G_{p,q}(z)=\log \left| \frac{\theta(z-q)}{\theta(z-p)} \right| - 2\pi \, \frac{\im (p-q) \, \im z}{\im \tau}, \qquad \theta(z)\equiv \theta_1(z|\tau).
\end{equation}
\end{example}

For an example of a Riemann surface with boundary, consider the case of the Dirichlet and ER Green's function on a cylinder.

\begin{example}{On a cylinder}
In the cylinder
\begin{equation}\label{eq: cylinder_representation}
\mathcal C_r:=\S_r / \langle z \mapsto z+1 \rangle, \qquad \S_r := \Big \{z \in \C: 0 <\im z < \frac{r}{2\pi} \Big\},
\end{equation}
the Green's functions are expressed in terms of Jacobi theta function $\theta(r,z) \equiv \theta_1(z|\frac{ir}{\pi})$ as
\begin{equation}
G^{ER}_{\mathcal{C}_r}(\zeta,z)= \log \left| \frac{\theta (r,\zeta-\bar{z})}{\theta(r,\zeta-z)} \right|, \qquad 	G_{\mathcal{C}_r}^{Diri}(\zeta,z)=\log \left| \frac{\theta(r,\zeta-\bar{z})}{\theta(r,\zeta-z)} \right| -4\pi^2 \, \frac{\im \zeta \, \im z}{r}	 \label{Diri Green cyl}.
\end{equation}
\end{example}

\subsection{Loewner vector field in a multiply connected domain}\label{Subsec_Loewner vector field}

Let $D$ be a multiply connected domain bounded by $g+1$ smooth Jordan curves $C_j$ ($j=0,1 \cdots, g$). 
We define the \emph{complex ER Green's function} by
\begin{equation} \label{C.ERG}
\displaystyle G^{ER+}_{D} := \frac{1}{2} \left( G^{ER}_{D}+i\wt{G}^{ER}_{D} \right),
\end{equation}
where $\wt{G}^{ER}_D$ is the harmonic conjugate of the ER Green's function. 
We denote by
\begin{equation}\label{ER Po}
H_D^{ER}(\zeta,q):=\frac12 \frac{\pa G_D^{ER}(\zeta,z)}{\pa n_z} \Big|_{z=q}, \qquad (q \in C_0)
\end{equation}
the \emph{ER Poisson kernel}. 
Let us also define \emph{complex ER Poisson kernel} by 
\begin{equation} \label{C.ER Po}
\displaystyle H^{ER+}_{D} = \frac12 \left( H^{ER}_{D} + i \wt{H}^{ER}_{D} \right),
\end{equation}
where $\wt{H}^{ER}_D$ is the harmonic conjugate of the ER Poisson kernel. We mention that the complex ER Poisson kernel is uniquely defined, see e.g. \cite[Proposition 6.4]{Dren11}.

The complex ER Poisson kernel can be used to construct a conformal map from $D$ onto a chordal standard domain. More precisely, the conformal map is given by
\begin{equation}\label{conformal maps chordal}
f(\zeta):= -i a \, H_D^{ER+}(\zeta,q) +b, \qquad (a,b \in \mathbb R_{\ge 0}), 
\end{equation}
which maps $(C_0;q) \to (\R, \infty)$, see e.g. \cite{CM06,Dren11}. 
In particular, if $D$ is contained in the upper-half plane and $q=\infty$, the \emph{canonical mapping}
\begin{equation}\label{cano map}
g(\zeta):= -2i H_D^{ER+}(\zeta,\infty) 
\end{equation}
satisfies the hydrodynamic normalization
\begin{equation}\label{hd nor}
\lim_{\zeta \rightarrow \infty} ( g(\zeta)-\zeta ) =0. 
\end{equation}
See also \cite[Proposition 6.4]{Dren11} for a probabilistic representation. 

The Loewner vector field $v_\zeta(z)$ is the analytic function in $z$ with
\begin{equation*}
\im v_\zeta(z)=- \frac{\pa G_D^{ER}(z,\zeta)}{\pa n_\zeta}, \qquad (\zeta \in C_0), 
\end{equation*}
see e.g. \cite[Theorem 3.1]{BF08}. Then in terms of the ER Poisson kernel, we have 
\begin{equation}\label{def: v_zeta}
v_{\zeta}(z) \equiv v_{\zeta,D}(z) := 2i H^{ER+}_D(z,\zeta). 
\end{equation}

Let $\gamma: [0, \infty) \rightarrow D$ be a simple curve with $\gamma(0) \in \R$ and $g_t$ be the canonical mapping from $D \setminus \gamma[0,t]$ onto $D$. 
Write 
\begin{equation*}
\Omega_t= g_t (D \sm \gamma[0,t] ), \qquad \xi_t= g_t (\gamma(t) ).
\end{equation*}
Then by \cite[Theorem 3.1]{BF08} (see also \cite[Theorem 6.14]{Dren11}), the family of conformal mappings $g_t$ satisfies the \emph{chordal Loewner equation} \eqref{chL flow}, which we recall is
\begin{equation*}
\displaystyle \pa_t g_t(z)=-v_{\xi_t, \Omega_t}(g_t(z)), \qquad g_0(z) = z.
\end{equation*}
Equation \eqref{chL flow} was derived in \cite{Ko50} (resp., \cite{BF06}) for the annulus (resp., disc) with circular slits. See also \cite{CFM23}.

\begin{example}{Loewner vector field on a simply connected domain}
In the upper-half plane $\H$ with Dirichlet boundary conditions we have
\begin{equation} \label{Green's function in H}
 G_{\H}(\zeta,z)= \log \left| \frac{\zeta-\bar{z}}{\zeta-z} \right|, \qquad G_{\H}^{+}(\zeta,z)= \frac{1}{2}\log \left( \frac{\zeta-\bar{z}}{\zeta-z} \right) . 
\end{equation}
On the other hand, by \eqref{Green's function in H}, we have
\begin{equation*}
H_\H(z,x)= \left. \frac12 \frac{d}{dy} \left[ \log \left( \frac{z-x+iy}{z-x-iy} \right) \left( \frac{\bar{z}-x-iy}{\bar{z}-x+iy} \right) \right] \right|_{y=0} = \frac{i}{z-x}- \frac{i}{\overline{z}-x}
\end{equation*}
and 
\begin{equation*}
H_{\H}^+(z, x) = \frac{1}{2} \frac{d}{d y} \log \Big(\frac{z-x+iy}{z-x-iy}\Big) \Big|_{y=0} = \frac{i}{z-x}. 
\end{equation*}
Therefore in this case, the equation \eqref{chL flow} gives rise to the well-known chordal Loewner equation
\begin{equation*}
\pa_t g_t(z)= \frac{2}{g_t(z)-\xi_t}, \qquad g_0(z)=z.
\end{equation*}
\end{example}

\begin{example}{Loewner vector field on an annulus}
Note that by \eqref{Diri Green annul}, we have
\begin{equation*}
G^{ER,+}_{\mathcal{C}_r}(\zeta,z)=\frac12 \log \left( \frac{\theta (\zeta-\bar{z})}{\theta(\zeta-z)} \right).
\end{equation*}
Then by using 
\begin{equation*}
\frac{\theta'(z)}{ \theta(z) } = \zeta(r,z) + \frac13 \frac{ \theta'''(0) }{ \theta'(0) } z,
\end{equation*}
where $\zeta(r,z)$ is the Weierstrass zeta function, we have
\begin{equation*}
H^{ER,+}_{\mathcal{C}_r}(z,x)= \left. \frac12 \frac{d}{dy} \log \left( \frac{\theta (z-x+iy)}{\theta(z-x-iy)} \right) \right|_{y=0} = i \left( \zeta(z-x)+\frac13 \frac{ \theta'''(0) }{ \theta'(0) } (z-x) \right).
\end{equation*}
This gives that in the cylinder $\CC_r$, 
\begin{equation*}
v_p(z)= -2\,\zeta(z-p)- \frac{2}{3} \frac{ \theta'''(0) }{ \theta'(0) } (z-p) .
\end{equation*}
By \cite[Eq.(23.9.3)]{NIST}, $\zeta(z)$ has the asymptotic expansion 
\begin{equation*}
\zeta(z)= \frac{1}{z} +O(z^3) \qquad \textrm{as } z \to 0, 
\end{equation*}
hence it follows that as $p \to z$, 
\begin{equation*}
v_p(z) = \frac{2}{p-z} + \frac23 \frac{\theta'''(0)}{ \theta'(0) } (p-z) +O( |p-z|^2 ).
\end{equation*}
\end{example}

\subsubsection{Asymptotics of $v_{\zeta}$ in a chordal standard domain}
In a general chordal standard domain with $\zeta \in \R$ the Loewner vector field has the asymptotic expansion 
\begin{equation} \label{v diagonal}
v_\zeta(z) = \frac{2}{\zeta-z}+ r_{D,0}(z) + r_{D,1}(z)(\zeta-z) + o(|\zeta-z|), \qquad z \to \zeta,
\end{equation}
see e.g. \cite[Proposition 6.4]{Dren11} and \cite[p.1834]{Du07}. 
In particular, 
\begin{equation} \label{v' diagonal}
v_\zeta'(z) = \frac{2}{(\zeta-z)^2}+ \Big(r_{D,0}'(z)-r_{D,1}(z)\Big) + O(|\zeta-z|), \qquad z \to \zeta. 
\end{equation} 
The $z$-dependence of the coefficients in \eqref{v diagonal} comes from the fact that the (ER) Green's function in the chordal standard domain is not translation invariant, i.e. $G_D^{ER}(\zeta,z) \not = G_D^{ER} (\zeta+t,z+t). $

\section{Gaussian free fields in a multiply connected domain} \label{Section_GFF}

\subsection{GFFs and their correlation functions}\label{sec:GFF_correlation}
Let $D$ be a multiply connected domain bounded by $g+1$ smooth Jordan curves $C_j$, $j=0,1 \cdots, g$. We assume $C_0$ is the outermost boundary component. 
 
Let $\Phi$ be the mean zero GFF with zero Dirichlet boundary condition and 2-point correlation function 
\begin{equation} \label{2pt cor_Diri}
\E\,\Phi(\zeta) \Phi(z) = 2G_{D}^{Diri}(\zeta,z).
\end{equation}
Although $\Phi$ cannot be realized as a random function, we write it as a (correlation functional-valued) function (or a Fock space field) since we are only interested in studying its correlation functions, which are well defined.
For example, in a cylinder $\mathcal C_r$, by \eqref{Diri Green cyl},
\begin{equation} \label{2pt cor_Diri cyl}
\E\,\Phi(\zeta) \Phi(z) = \log \left| \frac{\theta(r,\zeta-\bar{z})}{\theta(r,\zeta-z)} \right|^2 - 8 \pi^2 \frac{\im \zeta \cdot \im z}{r}.
\end{equation}	
In a general circular domain $\mathcal C$, it follows from \eqref{Diri Green v1} that 
\begin{equation}\label{2pt cor_MCD}
\E\,\Phi(\zeta)\Phi(z) = \log \left| \frac{z \, \omega(\zeta, \bar{z}^{-1})} {\omega(\zeta,z)} \right|^2 - 2 \Big \langle \bfs{h}(\zeta), \bfs{P}^{-1} \bfs{h}(z) \Big \rangle.
\end{equation} 

To work with the Gaussian free field on the Schottky double we first turn to the more general setting of the GFF on a compact Riemann surface. On a compact Riemann surface $M \equiv M_g$ of genus $g$, the Gaussian free field $\Psi$ is defined as a bi-variant Fock space field $(z,z_0) \mapsto \Psi(z,z_0)$, see \cite[Section 2]{KM17}. 
The correlation functions of $\Psi$ are given in terms of the bipolar Green's function as
\begin{equation*}
	\E\,\Psi(\zeta,z)\Psi(\ti{\zeta},\ti{z}) = 2\Big( G_{\zeta,z}(\ti{\zeta})-G_{\zeta,z}(\ti{z})\Big)=2\Big( G_{\ti{\zeta},\ti{z}}(\zeta)-G_{\ti{\zeta},\ti{z}}(z)\Big).
\end{equation*}
For example, when $g=1$, by \eqref{BH g=1}, 
\begin{equation}\label{2pt cor_g=1}
	\E\,\Psi(\zeta,z)\Psi(\ti{\zeta},\ti{z}) = \log \left| \frac{\theta(\ti{\zeta}-z)\theta(\ti{z}-\zeta)}{\theta(\ti{\zeta}-\zeta)\theta(\ti{z}-z)} \right|^2 - 4\pi \frac{\im(\zeta-z)\cdot\im (\ti{\zeta}-\ti{z})}{\im \tau}.
\end{equation}
In general, by \eqref{BG HG}, 
\begin{align}
\begin{split}
\label{2pt cor_HG}
\E\,\Psi(\zeta,z)\Psi(\ti{\zeta},\ti{z}) &= \log \left| \frac{\Theta(\mathcal A(\ti{\zeta})- \mathcal A(z)-e) \Theta(\mathcal A(\ti{z})- \mathcal A(\zeta)-e)}{\Theta(\mathcal A(\ti{\zeta})- \mathcal A(\zeta)-e) \Theta(\mathcal A(\ti{z})- \mathcal A(z)-e)} \right|^2
\\
&\quad -4\pi \Big \langle (\im \tau)^{-1} \im [\mathcal A(\zeta)- \mathcal A (z)], \im [\mathcal A(\ti{\zeta})- \mathcal A (\ti{z})] \Big \rangle. 
\end{split}
\end{align}

\subsection{Schottky double construction of GFFs} \label{Subsec_Schottky double}

In this subsection, we explain that GFF in a multiply connected domain can be constructed from GFF on a compact Riemann surface via the Schottky double. 
For the case $g=0$, such a construction was extensively studied in \cite{KM21}. As a pre-cursor to the general setting we give the standard example for the case $g=1$.

\begin{example}{GFF on a cylinder via the GFF on a torus}
Recall \eqref{eq: cylinder_representation} for the representation of a cylinder $\CC_r$. The Schottky double of the cylinder $\CC_r$ is a complex torus $\mathbb{T}_{r}:=\C/\Lambda$, where $\Lambda=\mathbb{Z}+\frac{ir}{\pi} \mathbb{Z}$.
Note that by \eqref{2pt cor_Diri cyl} and \eqref{2pt cor_g=1},
\begin{equation*}
\E\,\Phi(\zeta) \Phi(z) = \log \left| \frac{\theta(r,\zeta-\bar{z})}{\theta(r,\zeta-z)} \right|^2 -8 \pi^2 \frac{\im \zeta \cdot \im z}{r} = \frac{1}{2}\, \E\,\Psi(\zeta,\bar{\zeta})\Psi(z,\bar{z}). 
\end{equation*}
Therefore we have the Schottky double relation 
\begin{equation*}
\Phi(z)=\frac{1}{\sqrt{2}}\,\Psi(z,\bar{z})
\end{equation*}
within correlations. 
\end{example}

To describe the Schottky doubly relation on higher genus surfaces, it is convenient to take the reference domain as the circular domain $\CC=\CC(\bfs{r}, \bfs{\delta})$ in \eqref{circles C_j}. 
Recall that $C'_j$ is the reflection of $C_j$ with respect to $C_0$, and that the image of $C'_j$ under $\theta_j$ is $C_j$. Moreover for $\zeta \in C'_j$, we have $\theta_j(\zeta) = \zeta^* = 1 / \overline{\zeta}$.

We will denote by $F$ the region of the complex plane that is exterior to the $2g$ circles $\{ C_j, C'_j | j= 1, \cdots, g \}$. 
This is called the \emph{fundamental region} associated with the Schottky group generated by the M\"{o}bius maps $\{ \theta_j | j=1,\cdots, g \}$. 
Here, the two halves of $F$ (one is the reflection of the other one with respect to the unit circle) can be interpreted as two sides of a symmetric compact Riemann surface $\wh{\mathcal{C}}$ known as the \emph{Schottky double} of $\mathcal C$. The bi-variant field $\Psi(z, z_0)$ exists on the Schottky double (see the last section) and can be used to define the Dirichlet GFF in the circular domain $\mathcal{C}$ in the following way.

\begin{prop}\label{prop:Schottky_Dirichlet_GFF}
Within correlations, the Dirichlet GFF on a circular domain $\mathcal{C}$ satisfies the Schottky double relation
\begin{equation*}
	\Phi(z) =\frac{1}{\sqrt{2}}\,\Psi(z,z^*).
\end{equation*}
\end{prop}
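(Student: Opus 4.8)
The plan is to exploit the fact that both $\Phi$ and the field $z \mapsto \tfrac{1}{\sqrt2}\Psi(z,z^*)$ are centered Gaussian (Fock space) fields, so that the asserted identity ``within correlations'' is equivalent to equality of their two-point functions; all higher correlations then agree automatically by Wick's formula. Since $\E\,\Psi\equiv 0$ the one-point functions both vanish, so it remains to check that
\begin{equation*}
\E\,\Phi(\zeta)\Phi(z) = \tfrac12\,\E\,\Psi(\zeta,\zeta^*)\Psi(z,z^*).
\end{equation*}
Using \eqref{2pt cor_Diri} on the left and the bipolar representation $\E\,\Psi(\zeta,\zeta^*)\Psi(z,z^*) = 2\big(G_{\zeta,\zeta^*}(z)-G_{\zeta,\zeta^*}(z^*)\big)$ on the right, this reduces to the purely potential-theoretic identity
\begin{equation*}
G_{\zeta,\zeta^*}(z)-G_{\zeta,\zeta^*}(z^*) = 2\,G_{\mathcal C}^{Diri}(\zeta,z), \qquad z \in \mathcal C.
\end{equation*}

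I would prove this by showing that $F(z) := \tfrac12\big(G_{\zeta,\zeta^*}(z)-G_{\zeta,\zeta^*}(z^*)\big)$ satisfies the three defining properties of $G_{\mathcal C}^{Diri}(\zeta,\cdot)$ and then invoke uniqueness. First, $F$ is well defined independently of the additive-constant ambiguity in the bipolar Green's function, since that constant cancels in the difference. Harmonicity of $F$ on $\mathcal C\setminus\{\zeta\}$ is immediate: $G_{\zeta,\zeta^*}$ is harmonic on the double away from its two poles, and $z\mapsto G_{\zeta,\zeta^*}(z^*)$ is harmonic because $z\mapsto z^*$ is anti-holomorphic and harmonicity is preserved under anti-conformal maps. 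For the singularity at $z=\zeta$, the term $G_{\zeta,\zeta^*}(z)$ contributes $\log\frac{1}{|z-\zeta|}$, while the $(-)$-pole of $G_{\zeta,\zeta^*}$ at $\zeta^*$, pulled back through $z^*\to\zeta^*$, contributes a second $\log\frac{1}{|z-\zeta|}$ up to the conformal factor $|z^*-\zeta^*| = |z-\zeta|/(|z|\,|\zeta|)$, which only affects the bounded part. Hence $F(z) = \log\frac{1}{|z-\zeta|}+O(1)$ near $\zeta$, matching \eqref{domain const}, and $F$ has no other singularity in $\mathcal C$ since the remaining poles sit at $\zeta^*$, on the back side of the double.

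The crux---and the step I expect to be the main obstacle---is verifying that $F$ vanishes on every boundary circle $C_j$, $j=0,1,\dots,g$. On the outer circle $C_0=\{|z|=1\}$ the involution $z\mapsto z^*$ is the identity, so $F$ vanishes trivially. The inner circles require the Schottky double structure: for $z\in C_j$ with $j\ge 1$ the reflected point $z^*$ lies on $C_j'$, and the gluing map $\theta_j$ identifies $z^*\in C_j'$ with $\theta_j(z^*)=(z^*)^*=z\in C_j$. Thus $z$ and $z^*$ represent the \emph{same} point of the Schottky double $\wh{\mathcal C}$, whence $G_{\zeta,\zeta^*}(z)=G_{\zeta,\zeta^*}(z^*)$ and $F(z)=0$. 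Recognizing that the would-be Dirichlet boundary on the inner holes is encoded precisely by the Schottky identifications is what makes the inner boundary conditions automatic rather than something imposed by hand. With harmonicity, the correct logarithmic singularity, and vanishing on all of $\partial\mathcal C$ established, uniqueness of the Dirichlet Green's function gives $F=G_{\mathcal C}^{Diri}(\zeta,\cdot)$ and hence the claim.

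As a consistency check one can instead argue entirely through explicit formulas, comparing the theta-function representation \eqref{2pt cor_HG} of $\E\,\Psi(\zeta,\zeta^*)\Psi(z,z^*)$ with the S-K prime function representation \eqref{2pt cor_MCD} of $\E\,\Phi(\zeta)\Phi(z)$; the transformation rules \eqref{SK prime transformation} together with the Abel--Jacobi relation between the theta and prime functions should convert one into the other, with the harmonic-measure term $\langle\bfs h(\zeta),\bfs P^{-1}\bfs h(z)\rangle$ matching the quasi-periodicity correction in \eqref{BG HG}. I would keep the potential-theoretic argument as the primary proof and relegate this computation to a remark, since the uniqueness route isolates the one genuinely nontrivial point, namely the inner boundary conditions.
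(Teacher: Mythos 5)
Your proof is correct, but it takes a genuinely different route from the paper's. Both arguments reduce the claim to the same Green's function identity $G^{Diri}_{\mathcal C}(\zeta,z)=\tfrac12\bigl(G_{\zeta,\zeta^*}(z)-G_{\zeta,\zeta^*}(z^*)\bigr)$ (this is \eqref{S-D of Green} in the paper), but the paper establishes it by explicit computation: Lemma~\ref{Lem_Green Riemann Theta} derives closed-form theta-function expressions for $G^{ER}_D$ and $G^{Diri}_D$ using the quasi-periodicity \eqref{perio R-theta} of the Riemann theta function and the period-matrix relation \eqref{Imtau-P}, and the proposition then follows by matching these against the bipolar formula \eqref{BG HG} via \eqref{2pt cor_HG}. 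You instead verify the characterizing properties of the Dirichlet Green's function for $F(z)=\tfrac12\bigl(G_{\zeta,\zeta^*}(z)-G_{\zeta,\zeta^*}(z^*)\bigr)$ directly: harmonicity, the logarithmic singularity at $\zeta$ (with the correct coefficient, since the anti-conformal pullback of the pole at $\zeta^*$ only shifts the bounded part), and the vanishing on $\partial\mathcal C$ — trivially on $C_0$ where the involution fixes points, and on the inner circles because $z$ and $z^*$ are identified on the Schottky double so that the single-valued function $G_{\zeta,\zeta^*}$ takes the same value at both. Your route is more conceptual and essentially computation-free; it isolates the real content (the inner boundary condition is encoded by the Schottky identifications) and sidesteps the theta-function machinery. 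What the paper's route buys is the explicit formulas of Lemma~\ref{Lem_Green Riemann Theta} themselves, which are needed elsewhere (e.g.\ in the proof of Lemma~\ref{lem: Green_crucial_identity} and in the expression for $G_D^+$ in Subsection~\ref{Subsection_background}), so the authors get the proposition as a corollary of formulas they must derive anyway. Your closing observation that the explicit comparison can be relegated to a remark is exactly the trade-off at play.
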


Formally speaking, the idea behind Proposition \ref{prop:Schottky_Dirichlet_GFF} is just an instance of the method of images. This makes the statement intuitively clear, but we prove it by verifying that the correlation functions of the two sides match. To accomplish this we independently compute the Dirichlet and ER Green's functions in a circular domain.

\begin{lem} \label{Lem_Green Riemann Theta}
In a circular domain $D$, we have 
\begin{equation} \label{ER Green v2}
 G_D^{ER}(\zeta,z) = \frac12 \log \left| \frac{\Theta(\mathcal A(z)- \mathcal A(\zeta^*)-e) \Theta(\mathcal A(z^*)- \mathcal A(\zeta)-e)}{\Theta(\mathcal A(z)- \mathcal A(\zeta)-e) \Theta(\mathcal A(z^*)- \mathcal A(\zeta^*)-e)} \right|,
\end{equation}
where $z^*=1/\bar{z}$ and 
\begin{equation}
\label{Diri Green v2}
 G_D^{Diri}(\zeta,z)=G_D^{ER}(\zeta,z)-\pi \Big\langle (\im \tau)^{-1} \im[\mathcal A(\zeta^*)
 - \mathcal A (\zeta)], \im [\mathcal A(z^*)- \mathcal A (z)] \Big\rangle.
\end{equation}
\end{lem}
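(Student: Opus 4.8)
The plan is to prove \eqref{ER Green v2} by showing that its right-hand side, which I will call $R(\zeta,z)$, satisfies the three properties that uniquely characterize the ER Green's function $G_D^{ER}(\cdot,z)$: the prescribed logarithmic singularity at $\zeta=z$, harmonicity elsewhere in $\zeta$, and the ER boundary conditions \eqref{ERG cond}. The Dirichlet formula \eqref{Diri Green v2} will then follow by subtracting the harmonic-measure correction of Proposition \ref{prop:ER_Diri_relation}. Before starting I would record two structural inputs: the reflection $\zeta\mapsto\zeta^*=1/\bar\zeta$ fixes exactly the outer circle $C_0=\{|\zeta|=1\}$, and the anticonformal involution of the Schottky double acts on the Abel--Jacobi image by $\mathcal A(\zeta^*)=\overline{\mathcal A(\zeta)}$ modulo periods. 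Harmonicity of $R$ away from $\zeta=z$ is then immediate, since each theta factor in $R$ is either holomorphic in $\zeta$ (those with argument $\mathcal A(\zeta)$) or antiholomorphic in $\zeta$ (those with argument $\mathcal A(\zeta^*)$), and $\log|\cdot|$ of such a factor is harmonic off its zeros.

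For the singularity, as $\zeta\to z$ both denominator factors degenerate: $\mathcal A(z)-\mathcal A(\zeta)\to 0$ and $\mathcal A(z^*)-\mathcal A(\zeta^*)\to 0$, so each tends to $\Theta(-e)=\Theta(e)=0$. Since $e$ is a nonsingular point of the theta divisor, each vanishing is simple, giving $|\Theta(\mathcal A(z)-\mathcal A(\zeta)-e)|\asymp|\zeta-z|$ and, using $|\zeta^*-z^*|=|\zeta-z|/|\zeta z|$, also $|\Theta(\mathcal A(z^*)-\mathcal A(\zeta^*)-e)|\asymp|\zeta-z|$; the two numerator factors stay bounded away from $0$ for $z\notin C_0$. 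Hence $R(\zeta,z)=\log\frac{1}{|\zeta-z|}+O(1)$, the factor $\tfrac12$ in front of the log being compensated exactly by the two simple zeros of the denominator.

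The boundary conditions come from the periodicities \eqref{perio R-theta} together with the two inputs above. On $C_0$ one has $\zeta^*=\zeta$, so the numerator and denominator of $R$ agree termwise and $R\equiv 0$ there, matching the $C_0$ condition in \eqref{ERG cond}. On an inner circle $C_j=a_j$, traversing $\zeta$ once around it shifts $\mathcal A(\zeta)$ by a standard lattice vector in $\mathbb Z^g$, under which $\Theta$ is invariant by the first relation in \eqref{perio R-theta}; combined with $\mathcal A(\zeta^*)=\overline{\mathcal A(\zeta)}$ this forces $R(\cdot,z)$ to be single-valued and locally constant on each $C_j$, while the parallel computation against the $b_j$-periods (which produce the quasi-periodic exponential factor in \eqref{perio R-theta}) yields the zero-flux condition. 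Uniqueness of the solution of \eqref{ERG cond} with the given singularity then gives $R=G_D^{ER}$, proving \eqref{ER Green v2}.

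Finally, \eqref{Diri Green v2} follows from Proposition \ref{prop:ER_Diri_relation} once $\langle\bfs h(\zeta),\bfs P^{-1}\bfs h(z)\rangle$ is identified with the stated theta term. Using $\mathcal A(\zeta^*)=\overline{\mathcal A(\zeta)}$ one computes $\im[\mathcal A(\zeta^*)-\mathcal A(\zeta)]=-2\,\im\mathcal A(\zeta)$, so the task reduces to the standard dictionary between the harmonic measures $h_j$ and $\im\mathcal A$, and between the period matrix $\bfs P$ of the $h_j$ and $\im\tau$, as set up on the Schottky double in \cite{KM17}. I expect the main obstacle to be precisely this last bookkeeping for $g\ge 1$: establishing the reflection identity $\mathcal A(\zeta^*)=\overline{\mathcal A(\zeta)}$ with the correct normalizing constants relating $(\bfs h,\bfs P)$ to $(\im\mathcal A,\im\tau)$, and tracking the $b_j$-quasi-periodicity factors so that $R$ is genuinely single-valued with constant boundary values. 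This is where the higher-dimensional Jacobian and the absence of explicit theta formulas make the verification delicate.
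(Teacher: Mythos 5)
Your overall strategy is sound and is close in spirit to the paper's: the paper also characterizes the Green's function by its singularity and boundary behavior (working with $G_D^{Diri}$, i.e.\ showing that the theta expression minus the $\pi\langle(\im\tau)^{-1}\cdots\rangle$ correction vanishes on every $C_j$, rather than verifying the ER conditions for the theta part directly), and your final ``dictionary'' between $(\bfs h,\bfs P)$ and $(\im\mathcal A,\im\tau)$ is exactly the relation $(\im\tau)^{-1}=\pi\bfs P$ the paper uses. However, there is a genuine gap at the crucial step, and it is precisely the step you defer to ``bookkeeping'': your justification for why $R(\cdot,z)$ is \emph{constant} on each inner circle $C_j$ is wrong as stated. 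Single-valuedness under the $a_j$-monodromy (the $\mathbb Z^g$-invariance of $\Theta$) plus the reflection identity $\mathcal A(\zeta^*)=\overline{\mathcal A(\zeta)}$ does not force a harmonic function to be constant on $C_j$; a single-valued harmonic function restricted to a circle is generically non-constant. The mechanism is different: for $\zeta\in C_j$ the points $\zeta$ and $\zeta^*\in C_j'$ are \emph{identified} on the Schottky double, so $\mathcal A(\zeta^*)-\mathcal A(\zeta)$ is the fixed $b_j$-period $\tau e_j$ (more generally $\mathcal A(\zeta^*)-\mathcal A(\zeta)=\tau\,\bfs h(\zeta)$ for all $\zeta$, which is how the paper gets the dictionary for free). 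Plugging the shift by $\tau e_j$ into the \emph{second} relation of \eqref{perio R-theta} collapses the four-theta ratio to the explicit exponential $\exp[-2\pi i\langle\mathcal A(z)-\mathcal A(z^*),e_j\rangle]$, whose modulus is independent of $\zeta\in C_j$. This computation is the entire content of the paper's proof: it simultaneously proves constancy on $C_j$ and identifies the constant $\gamma_j(z)=\pi\langle e_j,\im[\mathcal A(z)-\mathcal A(z^*)]\rangle$, which is what makes \eqref{Diri Green v2} consistent with Proposition \ref{prop:ER_Diri_relation}. You have the roles of the two periodicities reversed: the $\tau\mathbb Z^g$ quasi-periodicity gives the constant boundary values, while the $\mathbb Z^g$-periodicity (single-valuedness around the $a_j$-cycle) is what gives the zero-flux condition.

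Two smaller points. First, your harmonicity and singularity analysis tacitly assumes each theta factor vanishes only at the obvious diagonal point; by Riemann's vanishing theorem each factor $\zeta\mapsto\Theta(\mathcal A(w)-\mathcal A(\zeta)-e)$ generically has $g-1$ additional zeros. These cancel between numerator and denominator because they depend on $e$ but not on the other argument, but this should be said (the paper inherits it from the bipolar Green's function representation \eqref{BG HG}). Second, $\mathcal A(\zeta^*)=\overline{\mathcal A(\zeta)}$ holds only for a reflection-symmetric base point; the base-point-free statement you actually need is $\mathcal A(\zeta^*)-\mathcal A(\zeta)=-2i\,\im(v_1(\zeta),\dots,v_g(\zeta))^T=\tau\,\bfs h(\zeta)$, which is both the reflection identity and the dictionary in one formula.
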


To prove Lemma \ref{Lem_Green Riemann Theta} first requires a few basic facts about the period matrices $\bfs P$ and $\tau$. We explain this relation next, but on first reading it is enough to simply accept \eqref{Imtau-P} as fact and return to this argument later.

On the Schottky double $\wh{\mathcal{C}}$ take the circle $C_j$ (which is identified with $C_j'$) as a $j$-th $a$-cycle of $\wh{\mathcal C}$. 
Also, for $\zeta \in C'_j$, take any line joining $\zeta$ and $\theta_j(\zeta)=\zeta^*$ as a $j$-th $b$-cycle. 
Let $\{\eta_j | j=1,\cdots, g \}$ be the basis of the space of all holomorphic 1-differentials on $\wh{\mathcal C}$ which is uniquely determined by \eqref{vj def}.
We now recall $\eta_j = dv_j$ and express the \emph{integral of the first kind} $\{v_j \, | \, j=1,\cdots, g \}$ in terms of harmonic functions $\{ h_j(\zeta) \, | \, j=0,1,\cdots, g \}$ given by \eqref{harmonic msrs}.
Recall that the period matrix $\bfs{P}=(P_{jk})_{j,k=1}^g$ of $h_j$ is given by \eqref{Pkj}. 
Since $h_j$ vanishes on $C_0$, one can extend $h_j$ to the fundamental region $F$ as 
\begin{equation*}
\wt{h}_j(\zeta)=
\begin{cases}
h_j(\zeta) &\text{on} \quad \overline{\mathcal{C}} 
\smallskip 
\\
-h_j(\zeta^*) &\text{on} \quad F \backslash \mathcal{C}.
\end{cases}
\end{equation*}
Let $\wh{v}_j$ be the holomorphic function on $F$ whose imaginary part is $\wt{h}_j$. Note that
\[
\wh{v}_j (\zeta^*)= \overline{\wh{v}_j (\zeta)}, \quad \zeta \in \mathcal C. 
\]
Then it follows that 
\begin{equation*}
(v_1,\cdots,v_g)^T= -\frac{1}{2\pi} \bfs{P}^{-1} (\wh{v}_1, \cdots, \wh{v}_g)^T,
\end{equation*}
where $\bfs P$ is given by \eqref{Pkj}, see e.g. \cite{CM07}. 
Moreover, the period matrix $\tau=(\tau_{jk})_{j,k=1}^g$ with $
\tau_{jk}=\oint_{b_k} dv_j$ is related to $\bfs{P}$ as 
\begin{equation} \label{Imtau-P}
\tau=\frac{i}{\pi}\bfs{P}^{-1}, \qquad (\im \tau)^{-1}=\pi \bfs{P}.
\end{equation}

\begin{proof}[Proof of Lemma \ref{Lem_Green Riemann Theta}]
By the characteristic property of the Green's function, all we need to show is that for $\zeta \in C_j$,
\begin{align*}
\frac{1}{2} \log \left| \frac{\Theta(\mathcal A(z)- \mathcal A(\zeta^*)-e) \Theta(\mathcal A(z^*)- \mathcal A(\zeta)-e)}{\Theta(\mathcal A(z)- \mathcal A(\zeta)-e) \Theta(\mathcal A(z^*)- \mathcal A(\zeta^*)-e)} \right|
=\pi \left \langle (\im \tau)^{-1} \im [\mathcal A(\zeta^*)- \mathcal A (\zeta)] , \im[\mathcal A(z^*)- \mathcal A (z)] \right \rangle. 
\end{align*}
For any $\zeta \in \mathcal C$, we have
\begin{align}
\begin{split}
\mathcal{A}(\zeta^*)-\mathcal{A}(\zeta)&= \Big( \int_{\zeta}^{\zeta^*} dv_1, \cdots, \int_{\zeta}^{\zeta^*} dv_g \Big)^T \label{diffA}
=\Big(v_1(\zeta^*)-v_1(\zeta),\cdots, v_g(\zeta^*)-v_g(\zeta) \Big)^T 
\\
&=\Big(\overline{v_1(\zeta)}-v_1(\zeta),\cdots, \overline{v_g(\zeta)}-v_g(\zeta) \Big)^T
= -2i \, \im \Big(v_1(\zeta),\cdots, v_g(\zeta) \Big)^T 
\\
&= \frac{i}{\pi} \bfs{P}^{-1} \bfs{h}(\zeta) = \tau \, \bfs{h}(\zeta). 
\end{split}
\end{align}
The last equality used \eqref{Imtau-P}. In particular, for $\zeta \in C_j$, the definition of the $h_k$ gives that $\bfs{h}(\zeta) = e_j$ for $e_j=(0,\cdots,1, \cdots,0)^T$, and so
\[
\mathcal{A}(\zeta^*)-\mathcal{A}(\zeta) = \tau e_j.
\]
For $\zeta \in C_j$ this gives rise to 
\[
\pi \Big \langle (\im \tau)^{-1} \im [\mathcal A(\zeta^*)- \mathcal A (\zeta) ] , \im[\mathcal A(z^*)- \mathcal A (z)] \Big\rangle = \pi \Big\langle e_j,\im [ \mathcal A(z)-\mathcal{A}(z^*)] \Big\rangle.
\]
On the other hand, it follows from \eqref{perio R-theta} that 
\begin{align*}
\frac{\Theta(\mathcal A(z)- \mathcal A(\zeta^*)-e) }{\Theta(\mathcal A(z)- \mathcal A(\zeta)-e) }&=\frac{\Theta(\mathcal A(z)- \mathcal A(\zeta)-e - \mathcal{A}(\zeta^*)+ \mathcal{A}(\zeta)) }{\Theta(\mathcal A(z)- \mathcal A(\zeta)-e) }
\\
&= \frac{\Theta(\mathcal A(z)- \mathcal A(\zeta)-e-\tau e_j) }{\Theta(\mathcal A(z)- \mathcal A(\zeta)-e) }= \exp \Big[ -2\pi i \Big( \langle \mathcal A(z)-\mathcal A(\zeta) , e_j \rangle -\frac{1}{2}\tau_{jj} \Big) \Big]
\end{align*}
and 
\[
\frac{ \Theta(\mathcal A(z^*)- \mathcal A(\zeta)-e)}{\Theta(\mathcal A(z^*)- \mathcal A(\zeta^*)-e)}=\exp \Big[ 2\pi i \Big( \langle \mathcal A(z^*)-\mathcal A(\zeta) , e_j \rangle -\frac{1}{2}\tau_{jj} \Big) \Big].
\]
Therefore we obtain
\[
\frac{1}{2} \log \left| \frac{\Theta(\mathcal A(z)- \mathcal A(\zeta^*)-e) \Theta(\mathcal A(z^*)- \mathcal A(\zeta)-e)}{\Theta(\mathcal A(z)- \mathcal A(\zeta)-e) \Theta(\mathcal A(z^*)- \mathcal A(\zeta^*)-e)} \right| 
= \pi \Big\langle e_j,\im[ \mathcal A(z)-\mathcal{A}(z^*)] \Big\rangle,
\]
which completes the proof.
\end{proof}

\begin{rem}
As a consequence of Lemma~\ref{Lem_Green Riemann Theta}, we have the functional relations: 
\begin{equation*}
\log \left| \frac{z \, \omega(\zeta, \bar{z}^{-1})} {\omega(\zeta,z)} \right|^2 = \log \left| \frac{\Theta(\mathcal A(z)- \mathcal A(\zeta^*)-e) \Theta(\mathcal A(z^*)- \mathcal A(\zeta)-e)}{\Theta(\mathcal A(z)- \mathcal A(\zeta)-e) \Theta(\mathcal A(z^*)- \mathcal A(\zeta^*)-e)} \right|
\end{equation*}
and
\begin{equation*} 
\Big\langle \bfs{h}(\zeta) , \bfs{P}^{-1} \bfs{h}(z) \Big\rangle =\pi \Big\langle (\im \tau)^{-1} \im[\mathcal A(\zeta^*)- \mathcal A (\zeta)], \im [\mathcal A(z^*)- \mathcal A (z)] \Big\rangle.
\end{equation*}
\end{rem}

\begin{proof}[Proof of Proposition \ref{prop:Schottky_Dirichlet_GFF}]
We need to check 
\begin{equation} \label{S-D of two pt cor}
	 \E\,\Phi (\zeta) \Phi (z) = \frac{1}{2} \,\E\,\Psi(\zeta,\zeta^*) \Psi(z,z^*).
	\end{equation}
By definition, the identity \eqref{S-D of two pt cor} is equivalent to 
\begin{equation} \label{S-D of Green}
G^{Diri}_D(\zeta,z)= \frac{1}{2}( G_{\zeta,\zeta^*}(z)-G_{\zeta,\zeta^*}(z^*) ).
\end{equation} 
Then the desired identity follows from Lemma~\ref{Lem_Green Riemann Theta} and the discussion in Section \ref{sec:GFF_correlation}.
\end{proof}

\subsection{Chiral fields and OPE exponentials}
\label{Subsection_chiral}
Chiral fields are defined by taking the holomorphic derivative of $\Phi$ (which complexifies it) and integrating it along paths. More precisely, for a path $\gamma$ in $D$ we define 
\[
\Phi^+(\gamma) = \int_\gamma J, \qquad J = \pa \Phi
\]
and write $\Phi^+(\zeta,z)$ for the collection of $\Phi^+(\gamma)$, where $\gamma$ is a path from $z$ to $\zeta.$
Then $(\zeta,z)\mapsto\Phi^+(\zeta,z)$ is a bi-variant multivalued field with correlation
\[
\E\,\Phi^+(\zeta,\zeta_0)\Phi(z) = G_D^+(\zeta,z)-G_D^+(\zeta_0,z),
\]
where $G_D^+$ is the complex Green's function defined in Subsection~\ref{Subsection_setup}. Furthermore
\[
\E\,\Phi^+(\zeta,\zeta_0)\Phi^+(z, z_0) = \log \lambda(\zeta,\zeta_0;z,z_0) + \pi \langle(\Im\,\tau)^{-1}(\mathcal{A}(\zeta)-\mathcal{A}(\zeta_0)),(\overline{\mathcal{A}(z)-\mathcal{A}(z_0)})\rangle,
\]
where $\lambda$ is the (generalized) cross-ratio,
\[
\lambda(p,q;\tilde p,\tilde q) = \frac{\Theta(\mathcal{A}(\tilde p) - \mathcal{A}(q)-e)\Theta(\mathcal{A}(\tilde q) - \mathcal{A}(p)-e)}{\Theta(\mathcal{A}(\tilde p) - \mathcal{A}(p)-e)\Theta(\mathcal{A}(\tilde q) - \mathcal{A}(q)-e)}.
\]
It is well known that the multivalued function $\lambda(p,q;\tilde p,\tilde q)$ is independent of $e,$ see \cite[Section VII.6]{FK92}.
It can be rewritten in terms of the prime form as 
\[
\lambda(p,q;\ti p,\ti q) = \frac{E(\ti p,q)E(\ti q,p)}{E(\ti p,p) E(\ti q,q)}
\]
by using the uniqueness of the cross-ratio (e.g. see \cite{Poor92}) and the periodicity properties of the prime form below.
The (Schottky-Klein) prime form $E$ (see \cite{Fay92,Mumford84} for its definition and basic properties) on a compact Riemann surface is the skew-symmetric bi-differential with periodicity properties
\begin{align*}
E(p+a_j,q) &= E(p,q),\\
E(p+b_j,q) &= \exp\Big(-2\pi i \big(\frac{\tau_{jj}}2 + \int_p^q \omega_j \big)\Big)E(p,q)
\end{align*}
and asymptotic behavior
\[
E(p,q) = \frac{z-z'}{\sqrt{dz}\sqrt{dz'}}\Big(1+O\big((z-z')^2\big)\Big)
\]
in a local chart that contains both $p$ and $q$, where $z$ is the image of $p$ for that chart and $z'$ the image of $q$. Its conformal dimension is $-1/2$ with respect to each variable.

For calculations it is sometimes convenient to introduce a formal 1-point field $z \mapsto \Phi^+(z)$ such that $\Phi^+(\zeta,z) = \Phi^+(\zeta)- \Phi^+(z).$ The 1-point field $\Phi^+$ would then have correlation
\[
\E\,\Phi^+(\zeta)\Phi^+(z) = -\log E(\zeta,z) + \pi \langle(\Im\,\tau)^{-1} \mathcal{A}(\zeta),\overline{\mathcal{A}(z)})\rangle.
\]
By defining $\Phi^-= \overline{\Phi^+}$ we then obtain the decomposition
\[
\Phi=\Phiplus+\Phiminus,
\]
in the sense that all correlations against $\Phi$ can be replaced by correlations against $\Phiplus + \Phiminus$.

We now define the OPE exponential of the field $\Phi^+(\zeta, z)$. As explained in \cite{KM17, KM21}, the OPE exponential is a conformal field defined via a power series expansion of the exponential function, but for which multiplication is replaced by the operator product expansion. For $\sigma \in \R$ we denote the OPE exponential by 
\[
\OO\{\sigma\cdot z - \sigma\cdot z_0\} = e^{*i\sigma\Phi^+(z,z_0)}.
\]
See \cite{KM17, KM21} for the precise power series expansion. Each term in the expansion of the OPE exponential is an element of the Fock space family generated by $\Phi$, therefore so too is the sum. Although the OPE exponential appears to be more complicated than the Wick exponential of $\Phi^+(z, z_0)$, they turn out to be the same up to a factor of a deterministic conformal field. This is shown in \cite[Proposition 5.3]{KM17}, which we now recall.

\begin{prop}[Proposition 5.3 of \cite{KM17}]
\label{Prop 5.3 of KM17}
Let $X = \sum_j \tau_j \Phi(z_j)$ for a divisor $\bfs \tau$ satisfying $\int \bfs \tau = 0$. Then for $\alpha \in \C$
\[
e^{* \alpha X } = \exp \Big( \frac{\alpha^2}{2} \E[X * X] \Big) e^{\odot \alpha X}.
\]
\end{prop}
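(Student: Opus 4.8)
The plan is to reduce the identity to a purely combinatorial relation between the OPE powers $X^{*n}$ and the Wick powers $X^{\odot n}$ of the single ``mode'' $X$, and then resum the resulting power series. The starting point is the base relation between the two products for a field that is linear in $\Phi$. Writing $C := \E[X*X]$, I first record that
\begin{equation*}
X * X = X \odot X + C,
\end{equation*}
which is the exact analogue of the displayed identity $\Phi * \Phi(z) = 2u(z,z) + \Phi(z)\odot\Phi(z)$ and follows from Wick's formula together with the definition of the OPE product: expanding $X(\zeta)X(z)$ and keeping the zeroth OPE coefficient, the only contribution beyond $X \odot X$ is the single contraction, whose regularized finite value is $\E[X*X]$. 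Since $X = \sum_j \tau_j\Phi(z_j)$ is first order it carries no internal self-contraction, so $C = 2\sum_{j\neq k}\tau_j\tau_k G_D(z_j,z_k) + 2\sum_j\tau_j^2\, u(z_j,z_j)$ is a genuine finite constant; the neutrality hypothesis $\int\bfs\tau = 0$ is the standing assumption that makes the OPE exponential $e^{*\alpha X}$ well defined as a Fock space field in the first place.

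The key lemma, which I would prove by induction on $m$, is the one-contraction rule
\begin{equation*}
X * X^{\odot m} = X^{\odot(m+1)} + m\,C\,X^{\odot(m-1)}.
\end{equation*}
This says that OPE-multiplying by the linear field $X$ acts as Wick multiplication plus the sum over the $m$ ways of contracting $X$ against one factor of $X^{\odot m}$, each contributing the single finite contraction $C$; no multiple-contraction terms arise precisely because $X$ is linear and thus contracts with at most one factor. The substantive work is to justify this at the level of OPE calculus, tracking which contractions are singular (hence subtracted in the zeroth OPE coefficient) and which survive as the regularized constant $C$, and this is where I expect the only real obstacle to lie. Once the recursion is in hand it is exactly the Hermite recursion, and it yields $X^{*n} = \sum_{k} \binom{n}{2k}(2k-1)!!\,C^k\,X^{\odot(n-2k)}$.

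Finally I would resum. Substituting the last display into $e^{*\alpha X} = \sum_n \frac{\alpha^n}{n!}X^{*n}$, reindexing by $m = n-2k$, and using $(2k-1)!! = (2k)!/(2^k k!)$, the double sum factors as
\begin{equation*}
e^{*\alpha X} = \Big(\sum_k \frac{(\alpha^2 C/2)^k}{k!}\Big)\Big(\sum_m \frac{\alpha^m}{m!}X^{\odot m}\Big) = \exp\!\Big(\frac{\alpha^2}{2}\E[X*X]\Big)\,e^{\odot\alpha X},
\end{equation*}
which is the claim. As a cross-check one can instead verify that both sides solve the first-order equation $y'(\alpha) = X * y(\alpha)$ with $y(0) = 1$: the one-contraction rule gives $X * e^{\odot\alpha X} = X\odot e^{\odot\alpha X} + \alpha C\, e^{\odot\alpha X}$, so the right-hand side $\exp(\frac{\alpha^2}{2}C)\,e^{\odot\alpha X}$ satisfies the same equation, and uniqueness then forces equality.
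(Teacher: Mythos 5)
Your argument is correct, and it is essentially the standard proof: the paper itself states this result without proof (importing it as Proposition 5.3 of \cite{KM17}), and the argument given there is precisely the Wick-calculus computation you describe, namely the one-contraction recursion $X * X^{\odot m} = X^{\odot(m+1)} + m\,C\,X^{\odot(m-1)}$ identifying OPE powers with Hermite-type combinations of Wick powers, followed by resummation of the exponential series. Your justification of the recursion (only single contractions survive because $X$ is linear in $\Phi$, the diagonal logarithmic divergences are exactly what the zeroth OPE coefficient subtracts, leaving the finite constant $C = \E[X*X]$) is the right point to dwell on, and the ODE cross-check is a nice confirmation.
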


Thus the OPE exponential can be expressed entirely in terms of the Wick exponential and the zero order coefficient in the OPE expansion of the field $X$ with itself. The operator product expansion is computed by using Wick's formula to compute the expansion of 
\[
\Big( \sum_j \tau_j \Phi(z_j) \Big) \Big( \sum_j \tau_j \Phi(z_j + \epsilon_j) \Big),
\]
as the support points $\epsilon_j$ all converge to the origin. The same methods as in Proposition~\ref{Prop 5.3 of KM17} can be straightforwardly extended to apply this calculation to the field $\Phi^+(z, z_0)$, the only difference being that $\Phi^+(z, z_0)$ is specified if an integral of the current field is taken along a given path and so we must specify how the two paths $\gamma,\ti\gamma$ are related in the product
\[
\Phi^+(z, z_0)\Phi^+(\ti z, \ti z_0).
\]
We work with the natural choice that the endpoints $(\ti z, \ti z_0)$ approach $(z, z_0)$ while the paths $\gamma,\ti\gamma$ stay disjoint but close to each other. For this choice it can be shown by straightforward calculation that $c^+(z,z_0)$ depends only on the choice of $\gamma$ (as the prime form does, see \cite[Chapter 1]{Fay92}) and
\[
c^+(z,z_0) := \frac12 \, \E\, \Phi^+*\Phi^+(z, z_0) = \log E(z,z_0) +  \frac\pi2 \langle(\Im\,\tau)^{-1}(\mathcal{A}(z)-\mathcal{A}(z_0)),(\overline{\mathcal{A}(z)-\mathcal{A}(z_0)}) \rangle.
\]
Furthermore $c^+(z,z_0)$ is a PPS form of order $(-\frac12,0)$ with respect to $z$ and $z_0$. Therefore, by the same reasoning as in Proposition~\ref{Prop 5.3 of KM17} we have
\begin{align}\label{eq:two_point_OPE_exponential}
\OO\{\sigma\cdot z - \sigma \cdot z_0 \} = e^{-\sigma^2 c^+(z,z_0)} \, e^{\odot i \sigma \Phi^+(z,z_0) }.
\end{align}
As a conformal field the OPE exponential is a holomorphic differential with conformal dimension $\sigma^2/2$ at $z,z_0$, and satisfies
\[
\E\,e^{*i\sigma\Phi^+(z,z_0)} = e^{-\sigma^2 c^+(z, z_0)} = \frac1{E(z,z_0)^{\sigma^2}} \exp \left \{ -\frac{\pi}{2} \sigma^2 \langle (\Im\,\tau)^{-1}(\mathcal{A}(z)-\mathcal{A}(z_0)), (\overline{\mathcal{A}(z)-\mathcal{A}(z_0))} \right \}.
\]

Finally we introduce the OPE exponential of more complicated linear combinations of the fields $\Phi^{\pm}$. Recall that in the introduction we defined $\Phi \{ \bfs \tau \}$ as the linear combination
\[
\Phi \{ \bfs \tau \} = \Phi \{ \bfs \tau^+, \bfs \tau^- \} = \sum \tau_j^+ \Phi^+(z_j) - \tau_j^- \Phi^-(z_j),
\]
for any double divisor $\bfs \tau= ( \bfs \tau^+, \bfs \tau^- )$ ($\bfs\tau^{\pm} = \sum \tau_j^{\pm} \cdot z_j$) satisfying the neutrality condition 
\[
\int \bfs \tau^+ + \bfs \tau^- = 0.
\]
The OPE exponential $\OO\{\bfs \tau\}$ is then defined by 
\[
\OO\{ \bfs \tau \} := e^{ \ast i \Phi\{ \bfs \tau \} }.
\]
Although these OPE exponentials depend on many more points than $\OO \{ \sigma \cdot z - \sigma \cdot z_0 \}$, it can be shown that
\begin{align}\label{eq:OPE_product_formula}
\OO\{\bfs\tau\} = \OO^{(\tau_1^+)}(z_1)\overline{\OO^{(\overline{\tau_1^-})}(z_1)}\cdots\OO^{(\tau_n^+)}(z_n)\overline{\OO^{(\overline{\tau_n^-})}(z_n)},
\end{align}
where $\OO^{(\tau)}(z) = \OO\{\tau\cdot z - \tau\cdot z_0\}$. The result does not depend on the choice of a base point $z_0$ due to the neutrality condition. This representation shows that the multiplication rule
\[
\OO\{ \bfs \tau_1 \}\OO\{ \bfs \tau_2 \} = \OO\{ \bfs \tau_1 + \bfs \tau_2 \}.
\]
holds. This shows that while the fields $\OO \{ \bfs \tau \}$ may appear to be more complicated, they can be entirely expressed in terms of products of the simpler field $\OO \{ \sigma \cdot z - \sigma \cdot z_0 \}$. Products of the latter fields can be computed using equation \eqref{eq:two_point_OPE_exponential} and the Wick rule
\[
e^{\odot \alpha X} e^{\odot \beta Y} = e^{\alpha \beta \operatorname{Cov}(X,Y)} e^{\odot (\alpha X + \beta Y)}.
\] 
For a more detailed discussion of OPE exponentials in a simply connected domain see \cite{KM21}.

\subsection{Background charge modification} \label{Subsection_background}

To implement the $c < 1$ version of our conformal field theory we introduce the background charge modification. It is simplest to introduce on the Schottky double $\wh{\CC}$ and then transfer the results back to the circular domain $\CC$. The presentation here is a specialization of the results found in \cite[Section 6]{KM17}. We remind the reader of the following basic facts:

\begin{itemize}
\item The surface $\wh{\CC}$ is covered by a collection of charts $(\phi, U)$, $U \subset \wh{\CC}$ open, where $\phi : U \to \C$. The transition map between overlapping charts is always assumed to be holomorphic. A non-random field is an assignment of a smooth function to each chart. For a non-random field $\psi$ and charts $\phi, \wt{\phi}$, we write $\psi=(\psi \, \| \, \phi)$ and $\wt{ \psi }= ( \psi \, \| \, \wt{\phi} ) $ for the assignment of $\psi$ to each of the charts. 
\smallskip 
 \item A non-random field $\psi$ is called a PPS$(\mu,\nu)$ form if $\psi$ satisfies the transformation law
\begin{equation*}
\psi=\tilde{\psi} \circ h+ \mu \log h'+\nu \log \overline{ h' },
\end{equation*}
where $h$ is the transition map between two overlapping charts $\phi, \wt{\phi}$. 

\smallskip 
\item A non-random field $\psi$ is called a harmonic form if $\psi = \psi^+ + \psi^-$ for some (multivalued) holomorphic form $\psi^+$ and (multivalued) anti-holomorphic form $\psi^-$. 
\smallskip 
\item A (harmonic) PPS form $\psi$ is simple if $\pa \bp \psi$ is a finite linear combination of Dirac delta measures. 
\smallskip 
\item For a given point $q \in \wh{\CC}$, we call $\psi_q$ a basic form if it is a simple form with a sole singularity at $q$, 
\begin{equation*}
\psi_q(z) \sim \log \frac{1}{|z-q|^2}, \qquad \textup{equivalently} \qquad \pa \bp \psi_q = -\pi \delta_q, 
\end{equation*}
in any chart. 
\smallskip
\end{itemize}

Simple PPS forms have the following properties. These results are proved in Theorem 6.4, Proposition 6.6, and Corollaries 6.3 and 6.5 of \cite{KM17}.

\begin{prop}
Throughout let $M$ be a compact Riemann surface with genus $g \neq 1$ and $\chi = 2 - 2g$. Then
\begin{enumerate}
 \item For any $q \in M$ there exists a unique basic form $\psi_q$ (unique up to an additive constant), and it is a PPS$(1/\chi, 1/\chi)$ form.
 \smallskip 
 \item A finite linear combination of basic forms is a simple form.
 \smallskip 
 \item Let $\psi$ be a simple PPS$(ib, ib)$ form on a compact Riemann surface of genus $g$. Then 
 \begin{equation}\label{NC b}
 \frac{i}{\pi} \int \pa \bp \psi = b(2-2g) = b \chi.
 \end{equation}
 \item Suppose that $g \not= 1$ and a finite atomic measure $\bfs{\beta}=\sum \beta_k \delta_{q_k}$ on the surface satisfies $\sum \beta_k = b(2-2g)$. Then there is a unique (up to an additive constant) simple PPS($ib,ib$) form $\psi_{\bfs{\beta}}$ satisfying 
 \begin{equation} \label{psi beta from beta}
 \bfs{\beta}=\frac{i}{\pi} \pa \bp \psi_{ \bfs{\beta} } .
 \end{equation}
 \item Given $q \in M$ and $\bfs \beta$ of part (iv), any $\psi_{\bfs \beta}$ in part (iv) satisfies
 \[
 \psi_{\bfs \beta}(z) - \psi_{\bfs \beta}(z_0) = ib \chi (\psi_q(z) - \psi_q(z_0)) + 2 i \sum_k \beta_k (G_{q_k, q}(z) - G_{q_k, q}(z_0)),
 \]
 and the right hand side does not depend on the choice of $q$.
\end{enumerate}
\end{prop}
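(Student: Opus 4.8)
The plan is to route every part through the potential theory of the operator $\pa\bp$ on the compact surface $M$, taking classical Gauss--Bonnet as the only external input. Concretely I would fix a smooth conformal metric $\lambda\,|dz|^2$ and set $\sigma:=\log\lambda$. Because $\lambda_z|h'|^2=\lambda_w$ under a holomorphic transition $h$, the field $\sigma$ is a \emph{smooth} PPS$(1,1)$ form, and in the normalization of the excerpt (where $\pa\bp\log\tfrac1{|z|^2}=-\pi\delta_0$) the Gauss--Bonnet theorem reads $\tfrac1\pi\int_M\pa\bp\sigma=-\chi$. Two structural remarks will be used throughout: first, the anomaly $\mu\log h'+\nu\log\overline{h'}$ is annihilated by $\pa\bp$, so $\pa\bp\psi$ is a globally well-defined $(1,1)$-current for any PPS form $\psi$; second, a PPS$(0,0)$ form is just a single-valued function $u$, for which Stokes' theorem gives $\int_M\pa\bp u=0$. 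With these I would prove part (iii) first, since it pins down the admissible weights everywhere else. Given a simple PPS$(ib,ib)$ form $\psi$, the difference $u:=\psi-ib\,\sigma$ is PPS$(0,0)$, hence single-valued, so $\int_M\pa\bp u=0$; expanding gives $\int_M\pa\bp\psi=ib\int_M\pa\bp\sigma=-\pi ib\chi$, i.e. $\tfrac{i}\pi\int\pa\bp\psi=b\chi$. The identical computation for a real weight $\mu$ yields the master identity $\tfrac1\pi\int_M\pa\bp\psi=-\chi\mu$, which I would record because it forces the weight of any form with prescribed singularities.

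For parts (i) and (ii) I would construct the basic form by correcting $\sigma$. Writing $\psi_q=\tfrac1\chi\sigma+u$ (legitimate precisely because $g\neq1$, so $\chi\neq0$), the requirement $\pa\bp\psi_q=-\pi\delta_q$ becomes the Poisson equation $\pa\bp u=-\pi\delta_q-\tfrac1\chi\pa\bp\sigma$. Its right-hand side has total mass $-\pi-\tfrac1\chi(-\pi\chi)=0$, so the solvability criterion on the closed surface (the cokernel of $\Delta$ being the constants) is met and $u$ exists, unique up to an additive constant. This produces a PPS$(\tfrac1\chi,\tfrac1\chi)$ form with the prescribed logarithmic singularity, and the master identity shows the weight $1/\chi$ is forced. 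Uniqueness up to a constant follows since two basic forms differ by a PPS$(0,0)$ form that is harmonic on all of $M$, hence constant. Part (ii) is then immediate, as $\pa\bp\big(\sum_k a_k\psi_{q_k}\big)=-\pi\sum_k a_k\delta_{q_k}$ is a finite atomic measure and linearity preserves the PPS structure.

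Parts (iv) and (v) I would reduce to the basic forms. Setting $\psi_{\bfs\beta}:=i\sum_k\beta_k\psi_{q_k}$ gives $\tfrac{i}\pi\pa\bp\psi_{\bfs\beta}=\sum_k\beta_k\delta_{q_k}=\bfs\beta$, while the neutrality $\sum_k\beta_k=b\chi$ turns the weight $\tfrac1\chi\sum_k i\beta_k$ into exactly $ib$; uniqueness again follows from harmonic-implies-constant. For (v) I would check that both sides have the same image under $\pa\bp$: using $\pa\bp G_{q_k,q}=\tfrac\pi2(\delta_q-\delta_{q_k})$ and $\pa\bp\psi_q=-\pi\delta_q$, the right-hand side has $\pa\bp$ equal to $-\pi ib\chi\,\delta_q+\pi i\big(\sum_k\beta_k\big)\delta_q-\pi i\sum_k\beta_k\delta_{q_k}$, in which the coefficient of $\delta_q$ is $-\pi ib\chi+\pi ib\chi=0$ by neutrality, leaving $-\pi i\sum_k\beta_k\delta_{q_k}=\pa\bp\psi_{\bfs\beta}$. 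Since both sides carry the same PPS$(ib,ib)$ weight and agree at $z=z_0$, their difference is a single-valued harmonic function vanishing at $z_0$, hence identically zero. The same cancellation of $\delta_q$ shows the right-hand side has no singularity at $q$, so changing $q$ alters it by a single-valued harmonic function vanishing at $z_0$, which gives the asserted $q$-independence.

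I expect the genuine obstacle to be part (iii): the realization that the PPS weight is a \emph{topological} quantity fixed by Gauss--Bonnet is what both determines the correct weight $1/\chi$ of the basic form and supplies the mean-zero condition making the Poisson equation in (i) solvable; everything after it is $\pa\bp$-bookkeeping with Dirac masses. I would also flag the degeneracy at $g=1$, where $\chi=0$ renders $1/\chi$ meaningless and the basic-form construction collapses, consistent with the hypothesis $g\neq1$ imposed in (i), (iv) and (v).
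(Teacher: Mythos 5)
Your argument is correct, and all the sign and normalization checks go through: with $\pa\bp\log|z|^{-2}=-\pi\delta_0$ one indeed has $\tfrac1\pi\int_M\pa\bp\sigma=-\chi$ for $\sigma=\log\lambda$, the anomaly $\mu\log h'+\nu\log\overline{h'}$ is $\pa\bp$-closed so that $\pa\bp\psi$ is a well-defined $(1,1)$-current, and the coefficient of $\delta_q$ in part (v) cancels exactly by neutrality. Note, however, that the paper does not supply its own proof of this proposition: it is imported verbatim from Theorem~6.4, Proposition~6.6 and Corollaries~6.3 and~6.5 of \cite{KM17}, so the comparison is really with that reference. Your route --- subtract a multiple of the logarithm of a smooth background metric to reduce everything to single-valued functions, invoke Gauss--Bonnet for the total mass, and solve the resulting mean-zero Poisson equation abstractly --- is a clean, self-contained alternative to \cite{KM17}, where the basic forms and the forms $\psi_{\bfs\beta}$ are realized more concretely through meromorphic differentials, prime forms and bipolar Green's functions; the underlying mechanism (the degree $2g-2$ of the canonical divisor, equivalently Gauss--Bonnet) is the same, and the present paper itself flags this in Subsection~\ref{Subsection_background}. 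What your approach buys is brevity and independence from theta-function machinery; what the explicit construction buys is usable formulas, which the paper needs later (e.g.\ the expression for $G_D^+$ and for $\varphi_{\bfs\beta}$). Two small points worth tightening: in part (i) your uniqueness argument compares two basic forms of equal PPS weight, whereas ruling out a basic form of weight $(\mu,\nu)$ with $\mu\neq\nu$ requires the additional (standard) fact that a compact surface of genus $g\neq1$ carries no global holomorphic affine connection; and in part (v) you should say explicitly that $\pa\bp R_q$ is independent of $q$ (which your computation already shows), so that $R_q-R_{q'}$ is a global harmonic function vanishing at $z_0$. Neither affects the substance of the proof.
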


In the case $g=1$ the last formula specializes to
\[
\psi_{\bfs \beta}(z) - \psi_{\bfs \beta}(z_0) = 2ib \log |\omega(z)| -2ib \log |\omega(z_0)|   + 2 i \sum_k \beta_k (G_{q_k, q}(z) - G_{q_k, q}(z_0)),
\]
where $\omega$ is a holomorphic form (equivalently, a $(1,0)$ differential). 

Given a simple PPS($ib,ib$) form $\psi$, the corresponding background charge $\bfs \beta $ is defined by
\begin{equation*} 
\bfs{\beta}=\frac{i}{\pi} \pa \bp \psi= \sum \beta_k \delta_{q_k}= \sum \beta_k \cdot q_k.
\end{equation*}
Here, we think of $\bfs \beta$ as a measure. In the algebraic geometry literature it would often be called a divisor. Equation \eqref{NC b} is referred to as the \textit{neutrality condition} of the simple form $\psi$, it is also called the neutrality condition of the corresponding background charge $\bfs \beta$. Given the background charge $\bfs \beta$, we will typically define $b$ as the unique constant such that
\begin{align}\label{NC b2}
\int \bfs \beta = \sum_k \beta_k = b (2 - 2 g) = b \chi.
\end{align}

We write $\psi^+_q, \psi^-_q$ for holomorphic part and anti-holomorphic part of a basic simple form $\psi_q$. This gives the decomposition 
\[
\psi_q = \psi^+_q + \psi^-_q.
\]
Let $\bfs \beta^\pm$ be background charges (atomic divisors), which are allowed to satisfy different neutrality conditions 
\[
\sum_k \beta_k^\pm = b^\pm \chi,
\]
although in the rest of this paper they will always be the same. The divisors $\bfs \beta^{\pm}$ keep track of modifications of the field $\Psi$ via linear combinations of the basic forms $\psi_q^{\pm}$. More precisely, we define the field $\Psi_{\bfs \beta^+, \bfs \beta^-}(z, z_0)$ via
\begin{equation*}
\Psi_{\bfs \beta^+, \bfs \beta^-}(z, z_0) = \Psi(z, z_0) + i \sum_k \beta_k^+ \left( \psi_{q_k}^+(z) - \psi_{q_k}^+(z_0) \right) - i \sum_k \beta_k^- \left( \psi_{q_k}^-(z) - \psi_{q_k}^-(z_0) \right).
\end{equation*}

Using these definitions for the fields on a compact Riemann surface, we now define the field $\Phi_{\bfs \beta}$. The idea is to define $\Phi_{\bfs \beta}$ and its mean $\varphi_{\bfs \beta}$ as a ``symmetric'' case of $\Psi_{\bfs \beta^+, \bfs \beta^-}$ on the Schottky double. The special case that we use is simply $\bfs \beta^+ = \bfs \beta^- = \bfs \beta$, which is an example of the method of images. We assume that the background charge $\bfs\beta$ is supported on $C_0$, and all charges in $\bfs \beta$ are real (i.e. $\bfs\beta(q_k) \in \R$ for all $k$). We now define 
\begin{equation}
\Phi_{\bfs{\beta}}= \Phi + \varphi_{\bfs{\beta}}, \qquad \varphi_{\bfs{\beta}}(z) = \E\, \Psi_{\bfs\beta,\bfs\beta}(z,z^*) = -2 \sum_k \beta_k\, \Im (\psi_{q_k}^+(z) - \Psi_{q_k}^+(z^*)). 
\end{equation}
This definition makes $\Phi_{\bfs{\beta}}$ a PPS($ib,-ib$) form with $ \varphi_{ \bfs{\beta} }=\E\,\Phi_{\bfs{\beta}}$. An equivalent way of defining it is
\[
\Phi_{\bfs \beta}(z) = \frac{1}{\sqrt{2}}\,\Psi_{\sqrt{2} \bfs \beta, \sqrt{2} \bfs \beta}(z, z^*),
\]
which makes the method of images representation more transparent, and matches the representation in Proposition \ref{prop:Schottky_Dirichlet_GFF}. In a multiply connected domain $D$, this leads to the formula presented in equation \eqref{Phi beta}, i.e.
\[
\varphi_{\bfs \beta}(z) = - 2b \arg w'(z) - 2 \sum \beta_k\, \Im\, G^+_D(w(q_k),w(z)), 
\]
where $w : (D, C_0, q) \to (\H_g, \R, \infty)$ is a conformal map satisfying a hydrodynamic normalization near $w(q) = \infty$. 
Here, we recall that $G_D^+$ is the complex Green's function. 
In particular, in a circular domain, it follows from Lemma~\ref{Lem_Green Riemann Theta} that 
\begin{align*}
G_D^+(z,z_0) & = \frac14 \log \left(   \frac{\Theta(\mathcal A(z)- \mathcal A(1/\bar{z}_0)-e) \Theta(\mathcal A(1/z)- \mathcal A(\bar{z}_0)-e)}{\Theta(\mathcal A(z)- \mathcal A(z_0)-e) \Theta(\mathcal A(1/z)- \mathcal A(1/z_0)-e)}    \right)  
\\
&\quad + \frac{ \pi i}{2} \Big\langle (\im \tau)^{-1} ( \mathcal A(z)+\mathcal A(1/z) ) ,  \im [\mathcal A(1/\bar{z}_0)- \mathcal A (z_0)] \Big\rangle
\end{align*} 
up to periods. 
In addition, we define formal 1-point fields $\Phi_{\bfs\beta}^\pm$ as 
\[
\Phi_{\bfs\beta}^\pm= \Phi^\pm + \varphi_{\bfs\beta}^\pm,
\]
where 
$\varphi_{\bfs\beta}^+(z) = \psi^+_{\bfs{\beta}}(z) - \psi^-_{\bfs \beta}(z^*)$,
and $\varphi_{\bfs\beta}^-(z) = \psi^-_{\bfs \beta}(z) - \psi^+_{\bfs \beta}(z^*)$. The current field $J_{ \bfs{\beta} }$ is defined as 
\begin{equation*} 
J_{ \bfs{\beta} } := \pa \Phi_{ \bfs{\beta} }=J+\jmath_{ \bfs{\beta} }, \qquad
\jmath_{ \bfs{\beta} }(z)= i \sum \beta_k \pa_z (G_D^+(q_k,z)-G_D^-(q_k,z)). 
\end{equation*} 
Based on the definition of the current field we have the following.

\begin{prop}
The field $\Phi_{ \bfs{\beta} }$ has a stress tensor 
\begin{equation*}
A_{ \bfs{\beta} }:= A_{ (0) }+\Big(ib\pa-\jmath_{ \bfs{\beta} } \Big)J, 
\end{equation*}
and its Virasoro field is given as
\begin{equation*}
T_{ \bfs{\beta} } :=-\dfrac{1}{2}J_{ \bfs{\beta} } \ast J_{ \bfs{\beta} }+ib\,\pa J_{ \bfs{\beta} }. 
\end{equation*}
\end{prop}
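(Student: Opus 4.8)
The plan is to verify directly the two defining properties of a stress tensor recalled in Subsection~\ref{Subsection_setup}: that $A_{\bfs\beta}$ is a holomorphic quadratic differential, and that the residue form of Ward's identity $\LL_v^+ X = \frac{1}{2\pi i}\oint_{(z)} v A_{\bfs\beta} X$ holds for every $X \in \FF_{\bfs\beta}$. Since both the Lie derivative and the contour-residue operation obey Leibniz rules with respect to the OPE product, it suffices to check the identity on the generators of $\FF_{\bfs\beta}$, namely the current $J_{\bfs\beta}$ (equivalently the basic field $\Phi_{\bfs\beta}$) and the OPE exponentials $\OO_{\bfs\beta}\{\sigma\cdot z - \sigma\cdot z_0\}$; it then propagates to all sums and OPE products. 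The essential observation is that the stress-tensor property is a purely local (residue) statement, so the computation takes place in a single chart and the global data of the multiply connected geometry (periods, theta functions) do not intervene. Consequently the verification reduces to the local Wick/OPE calculus already carried out in the simply connected and compact-surface settings of \cite{KM13,KM17,KM21}; alternatively, via the Schottky double relation $\Phi_{\bfs\beta} = \tfrac{1}{\sqrt2}\Psi_{\sqrt2\bfs\beta,\sqrt2\bfs\beta}(z,z^*)$, one may simply transport the stress tensor of $\Psi_{\bfs\beta^+,\bfs\beta^-}$ from the compact double established in \cite{KM17}.

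The heart of the matter is the OPE of $A_{\bfs\beta}$ with the current. Since $\Phi_{\bfs\beta}$ is a PPS$(ib,-ib)$ form, its derivative $J_{\bfs\beta} = \pa\Phi_{\bfs\beta}$ is a pre-Schwarzian form of order $ib$, so that $\LL_v^+ J_{\bfs\beta} = (v\pa + v')J_{\bfs\beta} + ib\,v''$. Matching this to a residue integral forces the singular part $A_{\bfs\beta}(\zeta)J_{\bfs\beta}(z) = \frac{2ib}{(\zeta-z)^3} + \frac{J_{\bfs\beta}(z)}{(\zeta-z)^2} + \frac{\pa J_{\bfs\beta}(z)}{\zeta-z} + O(1)$, and I would obtain exactly this by applying Wick's formula to $A_{\bfs\beta}(\zeta)J_{\bfs\beta}(z) = \big(-\tfrac12 J\odot J + ib\,\pa J - \jmath_{\bfs\beta}J\big)(\zeta)\,\big(J + \jmath_{\bfs\beta}\big)(z)$ together with $\E J(\zeta)J(z) = -(\zeta-z)^{-2} + O(1)$. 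The mechanism is transparent: the term $ib\,\pa J$ produces the cubic pole $2ib/(\zeta-z)^3$; the Wick square $-\tfrac12 J\odot J$ supplies $J(z)/(\zeta-z)^2 + \pa J(z)/(\zeta-z)$; and the term $-\jmath_{\bfs\beta}J$ contributes $\jmath_{\bfs\beta}(z)/(\zeta-z)^2 + \pa\jmath_{\bfs\beta}(z)/(\zeta-z)$, precisely upgrading $J$ to $J_{\bfs\beta} = J + \jmath_{\bfs\beta}$ in the double and simple poles. An entirely parallel computation with the exponential yields $A_{\bfs\beta}(\zeta)\OO(z) = \frac{\lambda\,\OO(z)}{(\zeta-z)^2} + \frac{\pa\OO(z)}{\zeta-z} + O(1)$, with no cubic pole and $\lambda = \tfrac{\sigma^2}{2} - \sigma b$, identifying $\OO \equiv \OO_{\bfs\beta}\{\sigma\cdot z - \sigma\cdot z_0\}$ as a holomorphic $\lambda$-differential and establishing the Ward identity on this generator.

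For the Virasoro field I would argue purely algebraically within the OPE calculus. Writing $J_{\bfs\beta} = J + \jmath_{\bfs\beta}$ with $\jmath_{\bfs\beta}$ deterministic, and using $J\ast J = J\odot J + \E[J\ast J]$ together with $\jmath_{\bfs\beta}\ast J = \jmath_{\bfs\beta}J$, I expand
\[
T_{\bfs\beta} = -\tfrac12 J_{\bfs\beta}\ast J_{\bfs\beta} + ib\,\pa J_{\bfs\beta} = \Big(-\tfrac12 J\odot J + (ib\,\pa - \jmath_{\bfs\beta})J\Big) + \Big(-\tfrac12\E[J\ast J] - \tfrac12\jmath_{\bfs\beta}^2 + ib\,\pa\jmath_{\bfs\beta}\Big).
\]
The first bracket is exactly $A_{\bfs\beta}$, and since $\E J = 0$ one checks $\E A_{\bfs\beta} = 0$; hence the second bracket must be the deterministic mean $\E T_{\bfs\beta}$, giving the asserted $T_{\bfs\beta} = A_{\bfs\beta} + \E T_{\bfs\beta}$ and confirming that $A_{\bfs\beta}$ is the fluctuation part of $T_{\bfs\beta}$ lying in $\FF_{\bfs\beta}$.

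I expect the main obstacle to be the verification that $A_{\bfs\beta}$ is a genuine \emph{holomorphic quadratic differential}, i.e.\ its conformal transformation law, rather than the residue identities. The Wick square $-\tfrac12 J\odot J$ alone is not a true $(2,0)$-tensor: its transformation carries a Schwarzian anomaly, which is the source of the central charge. The correction $ib\,\pa J - \jmath_{\bfs\beta}J$ is engineered precisely so that the pre-Schwarzian transformation of $\pa J$ and the inhomogeneous transformation of $\jmath_{\bfs\beta}$ (which tracks the mean of the pre-Schwarzian form $J_{\bfs\beta}$) exactly cancel the anomalous and non-tensorial pieces, leaving a bona fide quadratic differential and producing the central charge $c = 1 - 12b^2$ in $\E T_{\bfs\beta}$. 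Tracking these transformations, and confirming that $A_{\bfs\beta}$ carries no spurious singularities so that it is genuinely holomorphic, is the delicate step; fortunately it is entirely local, so it mirrors the computation in \cite{KM13,KM21} and requires no new input from the multiply connected topology.
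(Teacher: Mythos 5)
Your proposal is correct and follows essentially the same route as the paper, which simply defers to the local Wick/OPE computations of \cite[Proposition 3.5]{BKT23} and \cite[Theorem 5.2]{KM21}; the key point you make explicit --- that the residue form of Ward's identity and the transformation law are purely local statements, so the multiply connected topology contributes nothing new --- is precisely the justification underlying the paper's one-line citation. Your OPE singular parts, the dimension $\lambda=\sigma^2/2-\sigma b$, and the decomposition $T_{\bfs\beta}=A_{\bfs\beta}+\E\,T_{\bfs\beta}$ all check out against the conventions used in the paper.
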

\begin{proof}
The proof is the same as the one in \cite[Proposition 3.5]{BKT23} and \cite[Theorem 5.2]{KM21}.
\end{proof}

Given the field $\Phi_{\bfs \beta}$ we can also consider its OPE exponential. In \cite[Lemma 5.2]{KM17} it is proved that for a smooth function $f$ and a mean zero Gaussian conformal field $X$ (say $X = \Phi(z) - \Phi(z_0)$) that
\[
e^{*(X + f)} = e^f e^*X.
\]
Consequently, if we denote the OPE exponential of $\Phi_{\bfs \beta}^+(z, z_0)$ by
\[
\OO_{\bfs\beta}\{\tau\cdot z - \tau\cdot z_0\} = e^{*i\tau\Phi_{\bfs\beta}^+(z,z_0)}\]
then we have the formula
\[
\OO_{\bfs\beta}\{\tau\cdot z - \tau\cdot z_0\} = e^{i\tau(\varphi_{\bfs\beta}^+(z)-\varphi_{\bfs\beta}^+(z_0))} \OO\{\tau\cdot z - \tau\cdot z_0\}.
\]
The last formula combined with the OPE product formula \eqref{eq:OPE_product_formula} us to compute the OPE exponential of $\Phi_{\bfs \beta} \{ \bfs \tau \}$ for a double divisor $\bfs\tau = (\bfs\tau^+,\bfs\tau^-), \bfs \tau^\pm = \sum_{j=1}^n \tau^\pm_j\cdot z_j$ with neutrality condition $\int \bfs \tau^+ + \bfs \tau^- = 0$. Recalling that $\OO_{\bfs \beta} \{ \bfs \tau \} = e^{*i \Phi_{\bfs \beta} \{ \bfs \tau \}}$, the precise formula is that    
\[
\OO_{\bfs\beta}\{\bfs\tau\} = \OO_{\bfs\beta}^{(\tau_1^+)}(z_1)\overline{\OO_{\bfs\beta}^{(\overline{\tau_1^-})}(z_1)}\cdots\OO_{\bfs\beta}^{(\tau_n^+)}(z_n)\overline{\OO_{\bfs\beta}^{(\overline{\tau_n^-})}(z_n)},
\]
where we set $\OO_{\bfs\beta}^{(\tau)}(z) = \OO_{\bfs\beta}\{\tau\cdot z - \tau\cdot z_0\}$. As before this representation does not depend on the choice of the base point $z_0$.
If the support of $\bfs\tau$ contains the point at infinity, or if the support of $\bfs \tau$ intersects the support of $\bfs\beta$, then a rooting procedure or a rescaling procedure is needed. See \cite[Section~12.3]{KM13} or \cite{KM21} for more details.

\section{Ward's and BPZ equations} \label{Section_Ward and BPZ}

In this section we prove Theorem \ref{Thm_Ward}, which establishes the Eguchi-Ooguri version of Ward's equation in canonical multiply connected domains, and Theorem \ref{Thm_BPZ}, which takes an OPE limit of the Ward equation to obtain the BPZ and BPZ-Cardy equation. The latter makes use of the level two degeneracy property of vertex exponentials, and is the key step in establishing the connection between conformal field theory and SLE.

\subsection{Ward's identity}

We begin by introducing the global Ward's functional, which plays a key role in understanding the relationship between stress tensors and vector fields. 
To define Ward's functional, we consider a meromorphic vector field $v$ defined on $\overline{D}$, which is continuous up to its boundary. 
Ward's functionals $W^{\pm}_{  \epsilon  }(v)$ are then given by
\begin{equation} \label{W A relation}
W^{+}_{ \epsilon }(v)= W^+(v;  D_\epsilon  )	=	\frac{1}{2\pi i}\int_{\pa  D_\epsilon  } vA_{ \bfs \beta } -\frac{1}{\pi} \int \!\! \int_{  D_\epsilon} (\bp v)A_{ \bfs \beta }, \qquad	W^-_{   \epsilon}  (v)=\overline{W^+_{ \epsilon } (v)},
\end{equation}
where $D_\epsilon= D \setminus \bigcup B(q_j, \epsilon)$. Note here that we have excluded the nodes of $\bfs \beta$, where the integrand $v A_{ \bfs \beta }$ has singularities.

Since $v$ is meromorphic the term $\bp v$ is zero except at the poles of $v$, and so $\bp v$ is interpreted in the sense of distributions. For a chordal standard domain $D$, the boundary $\pa D$ is the real axis together with a collection of contours encircling each horizontal slit . 
By \cite[Proposition 5.9]{KM13}, it follows that for $X$ in the OPE family $\mathcal{F}_{ \bfs \beta }$,
\begin{equation} \label{L W relation}
\LL_{v}^+ X = W^+_{  \epsilon  } X, \qquad \LL_{v}^- X = W^-_{  \epsilon   } X. 
\end{equation}
Here, the Lie derivatives do not act on the $q_k$ variables.
The next lemma gives a representation of the stress tensor in terms of the Ward's functional. In the next lemma and the rest of the paper we will frequently use that for a chordal standard domain $D$
\[
\overline{\int_{\pa D} f(\xi) \, d \xi} = \int_{\pa D} \overline{f(\xi)} \, d \xi,
\]
which follows because $\pa D$ consists of $\R$ and a collection of horizontal slits.

\begin{lem}\label{lem: stress_tensor_representation}
In the identity chart of a chordal standard domain $D$, we have 
\begin{equation*}
2	A_{ \bfs \beta } (\zeta) = W^+(v_\zeta)+W^- (v_{\bar{\zeta}})- \frac{1}{2\pi i} \int_{\pa D  \setminus \R }   \Big( v_\zeta(\xi)-v_\zeta( \bar{\xi}) \Big) A_{ \bfs \beta }(\xi)\,d\xi +   \frac{1}{2\pi i} \sum_k \oint_{ (q_k) } v_\zeta(\xi) A_{ \bfs \beta }(\xi)     
\end{equation*}
within correlations. Here, $\pa D  \setminus \R $ is a combination of small contours encircling each slit of $D$. 
\end{lem}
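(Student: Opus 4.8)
The plan is to work directly from the definition \eqref{W A relation} of the Ward functionals, using only the pole structure of the Loewner vector field together with the reflection symmetry inherited from the Dirichlet boundary condition. The starting observation is that, as a function of $z$, $v_\zeta$ is meromorphic with a single simple pole at the interior point $z=\zeta$ and principal part $2/(\zeta-z)$, so that $\bp v_\zeta=-2\pi\delta_\zeta$ in the sense of distributions, whereas $v_{\bar\zeta}(z)=\overline{v_\zeta(\bar z)}$ has its only pole at $z=\bar\zeta$, which lies in the lower half-plane and hence outside $\overline{D}$. Substituting into \eqref{W A relation}, the area integral in $W^+(v_\zeta)$ collapses to $-\tfrac1\pi(-2\pi)A_{\bfs\beta}(\zeta)=2A_{\bfs\beta}(\zeta)$, while the area integral in $W^+(v_{\bar\zeta})$ vanishes identically. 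Using $W^-(v_{\bar\zeta})=\overline{W^+(v_{\bar\zeta})}$, this gives, within correlations,
\[
W^+(v_\zeta)+W^-(v_{\bar\zeta})=2A_{\bfs\beta}(\zeta)+\frac{1}{2\pi i}\int_{\pa D_\epsilon}v_\zeta A_{\bfs\beta}+\overline{\frac{1}{2\pi i}\int_{\pa D_\epsilon}v_{\bar\zeta}A_{\bfs\beta}},
\]
so everything reduces to showing that the two boundary terms produce the slit integral and the residue sum in the statement.

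Next I would split $\pa D_\epsilon$ into three pieces: the real line with small intervals around each $q_k$ deleted, the slit contours $\pa D\setminus\R$, and the small upper semicircles $S_k=\pa B(q_k,\epsilon)\cap\H$. The two reflection identities I rely on are $\overline{v_{\bar\zeta}(\xi)}=v_\zeta(\bar\xi)$, immediate from $v_{\bar\zeta}(z)=\overline{v_\zeta(\bar z)}$, and the corresponding relation for the stress tensor. For the latter I would use that on a horizontal boundary carrying the Dirichlet condition the tangential derivative of $\Phi$ vanishes, so $J=\pa\Phi$ is purely imaginary there and $A_{\bfs\beta}$ is real, giving $\overline{A_{\bfs\beta}(\xi)}=A_{\bfs\beta}(\xi)$ along $\R$ and along the slits, while in the interior the Schottky double construction of Subsection~\ref{Subsec_Schottky double} supplies the reflection $\overline{A_{\bfs\beta}(\xi)}=A_{\bfs\beta}(\bar\xi)$ across $C_0=\R$. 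Combining these with the conjugation rule $\overline{\int_{\pa D}f\,d\xi}=\int_{\pa D}\bar f\,d\xi$, I would treat each piece in turn. On $\R$ one has $\bar\xi=\xi$, so the conjugated term equals $-\tfrac{1}{2\pi i}\int_\R v_\zeta A_{\bfs\beta}$ and cancels the contribution from $W^+(v_\zeta)$. On the slits the reality relation converts the conjugated integral into $-\tfrac{1}{2\pi i}\int_{\pa D\setminus\R}v_\zeta(\bar\xi)A_{\bfs\beta}(\xi)\,d\xi$, and the two integrals fuse into $\tfrac{1}{2\pi i}\int_{\pa D\setminus\R}\big(v_\zeta(\xi)-v_\zeta(\bar\xi)\big)A_{\bfs\beta}(\xi)\,d\xi$.

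For the semicircles I would apply the substitution $\eta=\bar\xi$ in the conjugated integral over $S_k$; since $\xi\mapsto\bar\xi$ reflects the upper semicircle $S_k$ onto the lower semicircle, and the two reflection identities restore the integrand to $v_\zeta A_{\bfs\beta}$, the half-contour from $W^+(v_\zeta)$ along $S_k$ and the reflected half-contour from $W^-(v_{\bar\zeta})$ assemble into a full loop around $q_k$. Letting $\epsilon\to0$ and tracking that the semicircles inherit the clockwise orientation of $\pa D_\epsilon$ around the deleted disks, this yields $-\tfrac{1}{2\pi i}\sum_k\oint_{(q_k)}v_\zeta A_{\bfs\beta}$. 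Substituting the three evaluations back and solving for $2A_{\bfs\beta}(\zeta)$ produces exactly the claimed identity.

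The hard part will be the bookkeeping at the marked points $q_k$: because $A_{\bfs\beta}$ is itself singular at each $q_k$ through the drift $\jmath_{\bfs\beta}$ in its definition, the semicircles cannot simply be shrunk away, and one must verify that the holomorphic half-contour and the reflected anti-holomorphic half-contour glue with matching orientation into the residue contours $\oint_{(q_k)}$ with no leftover boundary terms. This sign-and-orientation tracking—together with confirming that the reflection relation $\overline{A_{\bfs\beta}(\xi)}=A_{\bfs\beta}(\bar\xi)$ persists in the presence of the background-charge singularities on $\R$—is the delicate step; by contrast, the cancellation on $\R$ and the fusion on the slits are routine once the reflection identities are in place.
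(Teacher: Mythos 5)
Your proposal is correct and follows essentially the same route as the paper's proof: start from the definition \eqref{W A relation}, use $\bp v_\zeta=-2\pi\delta_\zeta$ and the holomorphy of the reflected field $v_{\bar\zeta}(z)=\overline{v_\zeta(\bar z)}$ on $D$ to produce $2A_{\bfs\beta}(\zeta)$, then split $\pa D_\epsilon$ into $\R$, the slit contours, and the semicircles at the $q_k$, using $\overline{v_{\bar\zeta}(\xi)}=v_\zeta(\bar\xi)$ and $A_{\bfs\beta}(\bar\xi)=\overline{A_{\bfs\beta}(\xi)}$ together with $\overline{\int_{\pa D}f\,d\xi}=\int_{\pa D}\overline{f}\,d\xi$. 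The orientation bookkeeping you flag as delicate is resolved exactly as you anticipate: the clockwise upper semicircle from $W^+(v_\zeta)$ and the reflected (clockwise) lower semicircle from $W^-(v_{\bar\zeta})$ glue into a full clockwise loop, giving the term $-\tfrac{1}{2\pi i}\oint_{(q_k)}v_\zeta A_{\bfs\beta}$.
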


\begin{proof} 
For $\zeta \in D,$ we define the reflected vector field
\[
v_\zeta^\sharp(z):=\overline{v_\zeta(\bar{z}) }=v_{\bar{\zeta}}(z).
\]
By using the definition \eqref{W A relation}, we have
\[
W^+_{   \epsilon }(v_\zeta)=- \frac{1}{\pi}\int_{D_{   \epsilon } }(\bp v_\zeta)	A_{ \bfs \beta }+ \frac{1}{2\pi i}\int_{\pa D_{   \epsilon } } v_{\zeta}	A_{ \bfs \beta },
	\qquad W^+_{  \epsilon }(v_{\zeta}^{\sharp})=\frac{1}{2\pi i}\int_{\pa D_{  \epsilon }}v_\zeta^{\sharp}	A_{ \bfs \beta }.
\]
The second equation uses that $v_\zeta^\sharp$ is holomorphic on $D$, which follows from $v_\zeta$ being meromorphic on $D$ with sole pole at $\zeta$. 
Since $	A_{ \bfs \beta }(\bar{\xi}) =\bar{	A }_{ \bfs \beta }(\xi)$ on $D,$ we have 
\begin{align*}
W^-_{\epsilon}(v_{\bar{\zeta}}) &  = \overline{W^{+}_{ \epsilon }(v_{\zeta}^{\sharp})} = \overline{\frac{1}{2\pi i}\int_{\pa D   \setminus \R   } v_{\bar{\zeta}} (\xi )	A_{ \bfs \beta }(\xi)\, d\xi}  + \frac{1}{2\pi i} \sum_k \int_{ \partial B(q_k,\epsilon) \cap \mathbb{H} } v_\zeta( \bar{\xi} ) A_{ \bfs \beta }(\bar{\xi}) \,d\xi 
 \\
&=	-\frac{1}{2\pi i}\int_{\pa D  \setminus \R  } v_\zeta (\bar{\xi} )	A_{ \bfs \beta }(\xi)\,d\xi  - \frac{1}{2\pi i} \sum_k \int_{ \partial B(q_k,\epsilon) \cap \mathbb{H}_- } v_\zeta( \xi ) A_{ \bfs \beta }( \xi) \,d\xi,  
\end{align*}
where $\mathbb{H}_-:=\{ \xi :\im \xi <0 \}$ and, by convention, the curve $\partial B(q_k, \epsilon) \cap \mathbb{H}_{-}$ is oriented in the counterclockwise direction.
Then it follows from $\bp v_\zeta=-2\pi \delta_\zeta$ that 
	\begin{align*}
	W^+(v_\zeta)+W^- (v_{\bar{\zeta}})
	&=- \frac{1}{\pi}\int_{D}(\bp v_\zeta)	A_{ \bfs \beta }+\frac{1}{2\pi i} \int_{\pa D  \setminus \R   } \Big( v_\zeta(\xi)-v_\zeta( \bar{\xi}) \Big) A_{ \bfs \beta }(\xi)\,d\xi   - \frac{1}{2\pi i} \sum_k \oint_{ (q_k) } v_\zeta(\xi) A_{ \bfs \beta }(\xi) \, d \xi   
	 \\
	&= 2	A_{ \bfs \beta }(\zeta) +\frac{1}{2\pi i} \int_{\pa D  \setminus \R   } \Big( v_\zeta(\xi)-v_\zeta( \bar{\xi}) \Big) A_{ \bfs \beta }(\xi)\,d\xi   - \frac{1}{2\pi i} \sum_k \oint_{ (q_k) } v_\zeta(\xi) A_{ \bfs \beta }(\xi) \, d \xi   .
	\end{align*}
\end{proof}

Together, Lemma \ref{lem: stress_tensor_representation} and equation \eqref{L W relation} complete the proof of equation \eqref{eq: Ward_w_integral_term} in Theorem \ref{Thm_Ward}. Equation \eqref{Ward eq:residue form} in Theorem \ref{Thm_Ward} is proved in \cite[Lemma 7.2]{KM21}. Although the statement in \cite{KM21} is originally formulated for the $g=0$ case, the proof applies to our current setup as well since it pertains to a local property independent of the global underlying geometry.  We restate the lemma here for the reader's convenience.

\begin{lem}[Lemma 7.2 of \cite{KM21}]  \label{Lem_Ward residue q_k}
We have 
\begin{equation}
\E\oint_{ (q_k) } v_\zeta A_{ \bfs \beta }X = v_\zeta (q_k) \partial_{q_k} \E\,X,
\end{equation}
where $X$ is a tensor product of fields in $\FF_{\bfs \beta}$ whose nodes do not contain any of the $q_k$ points in $\bfs \beta$. 
\end{lem}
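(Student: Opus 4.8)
The plan is to treat this as a purely local statement about the singularity of the stress tensor $A_{\bfs\beta}$ at the support point $q_k$. Since $q_k$ is not among the nodes of $X$, the contour $\oint_{(q_k)}$ only sees the behavior of $\E\,A_{\bfs\beta}(\xi)X$ as $\xi\to q_k$, and this behavior is governed entirely by local data: the global features of the multiply connected geometry enter $\E\,A_{\bfs\beta}(\xi)X$ only through terms that are holomorphic across $q_k$ and hence invisible to the residue. This is precisely why the $g=0$ argument of \cite[Lemma 7.2]{KM21} transfers verbatim.

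First I would expand the stress tensor as $A_{\bfs\beta}=-\tfrac12 J\odot J+ib\,\pa J-j_{\bfs\beta}J$ and examine each term inside the correlation. Because $J=\pa\Phi$ pairs only with the random part of the fields in $X$, every Wick contraction produces a factor $2\pa_\xi G_D(\xi,\cdot)$ evaluated at a node of $X$, and all nodes avoid $q_k$; consequently $\E\,(J\odot J)(\xi)X$ and $\pa_\xi\E\,J(\xi)X$ are holomorphic at $\xi=q_k$ (in contrast to the situation at an actual node of $X$, where the Wick square generates the usual double and simple poles). The only surviving singularity comes from the scalar coefficient $j_{\bfs\beta}(\xi)=\E\,J_{\bfs\beta}(\xi)$, which carries a simple pole at $q_k$ inherited from $\pa_\xi G_D^+(q_k,\xi)$. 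Thus near $q_k$,
\[
\E\,A_{\bfs\beta}(\xi)X = -\,j_{\bfs\beta}(\xi)\,\E\,J(\xi)X + (\text{holomorphic at }q_k),
\]
and since $v_\zeta$ is regular at $q_k$ (as $\zeta\neq q_k$), the residue theorem yields
\[
\frac{1}{2\pi i}\oint_{(q_k)}v_\zeta\,\E\,A_{\bfs\beta}(\xi)X
= v_\zeta(q_k)\,\Res_{\xi=q_k}\!\big[-\,j_{\bfs\beta}(\xi)\,\E\,J(\xi)X\big].
\]

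It then remains to identify this residue with $\pa_{q_k}\E\,X$. The only $q_k$-dependence of $\E\,X$ sits in the mean $\varphi_{\bfs\beta}$ of each field of $X$, so differentiating in $q_k$ replaces one node at a time by $\pa_{q_k}\varphi_{\bfs\beta}$; on the other hand, contracting $J(q_k)$ against $X$ replaces one node at a time by a derivative of the Green's function. Matching these two operations amounts to the statement that moving the background charge at $q_k$ is implemented by the insertion of the current there, which is exactly the local computation of \cite[Lemma 7.2]{KM21}.

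The step I expect to be most delicate is the bookkeeping at the boundary. Since $q_k$ lies on $C_0=\R$, the method-of-images/Schottky-double description ties the holomorphic current $J$ to its conjugate, and one must use the reflection to convert the purely holomorphic residue above into the full real derivative $\pa_{q_k}=\pa+\bp$ rather than just $\pa$, while correctly tracking the factor arising from the coincidence of $q_k$ with its own reflection across $\R$. Pinning down these constants, and confirming that a contour around a boundary point is interpreted correctly on the double, is the only nontrivial point; it is here that one genuinely relies on the statement being local and therefore identical to the simply connected case.
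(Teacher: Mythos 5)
Your proposal is correct and takes essentially the same route as the paper, which does not spell out a proof but simply cites \cite[Lemma 7.2]{KM21} with the remark that the statement is a local property at $q_k$ independent of the global geometry of the multiply connected domain. Your residue computation — isolating the simple pole of $\jmath_{\bfs\beta}$ at $q_k$ as the only singular contribution, since all Wick contractions of $J(\xi)$ land at nodes of $X$ away from $q_k$, and then matching $i\beta_k\,\E\,J(q_k)X$ with $\pa_{q_k}\E\,X$ through the $q_k$-dependence of $\varphi_{\bfs\beta}$ — is exactly the local argument being invoked, and the boundary/reflection bookkeeping you flag does work out as you expect.
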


To complete the proof of Theorem 1.1 only requires showing the generalized Eguchi-Ooguri equation \eqref{EO main statement}. This is done in the next section. Once the generalized Eguchi-Ooguri equations are proved, the final equation \eqref{eq: Ward_no_integral} follows directly from \eqref{eq: Ward_w_integral_term}~--~\eqref{Ward eq:residue form}.

\subsection{Generalized Eguchi-Ooguri equation}

In this subsection we give the proof of the generalized Eguchi-Ooguri equations \eqref{EO main statement}, thereby completing the proof of Theorem \ref{Thm_Ward}. We start by showing that \eqref{EO main statement} holds for the OPE family $\FF_{(b)}$, i.e. when the background charge modification only happens at one point, and then extend it to the case of general background charge for a divisor $\bfs \beta$ that satisfies \eqref{NC b2}.

\begin{lem} \label{Lem_EO beta0}
For any $\XX \in \FF_{(b)}$, the generalized Eguchi-Ooguri equation 
\begin{equation} \label{EO eq v1}
\frac{1}{2\pi i} \int_{\pa D  \setminus \R}   \Big( v_\zeta(\xi)-v_\zeta( \bar{\xi}) \Big) \E A(\xi) \XX \,d\xi +\nabla_{ \mathbb{H}_g } \E \XX=0
\end{equation}
holds in the identity chart of a chordal standard domain $D$.  
\end{lem}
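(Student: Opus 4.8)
The plan is to reduce \eqref{EO eq v1} to a single variational identity for the Dirichlet Green's function, and then to prove that identity by differentiating the theta-function representation of $G_D$ with respect to the Teichm\"uller parameters (the slit endpoints).

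First I would argue that it suffices to verify \eqref{EO eq v1} for $\XX=\Phi(z_1)\Phi(z_2)$ together with its renormalized coincidence limit. Both sides are determined by Wick's calculus: the operator $\nabla_{\mathbb{H}_g}$ is a first-order differential operator in the slit endpoints and hence a derivation, while the insertion functional $\XX\mapsto \frac1{2\pi i}\int_{\pa D\setminus\R}\big(v_\zeta(\xi)-v_\zeta(\bar\xi)\big)\,\E A(\xi)\XX\,d\xi$ expands, through the two-leg structure of $A=-\tfrac12 J\odot J+(ib\pa-\jmath_{\bfs\beta})J$, into a sum over Wick pairings of the factors of $\XX$. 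Matching the two expansions term by term reduces everything to the generating two-point function $\E\,\Phi(z_1)\Phi(z_2)=2G_D(z_1,z_2)$ and to its diagonal (the domain constant $d_D$). Here one uses $\E A(\xi)\Phi(z_1)\Phi(z_2)=-4\,\pa_\xi G_D^+(\xi,z_1)\,\pa_\xi G_D^+(\xi,z_2)$, the $\pa J$ and $\jmath_{\bfs\beta}J$ contributions vanishing by oddness of the three-field correlator.

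The crux is then the Hadamard-type variational formula
\[
\nabla_{\mathbb{H}_g}\,G_D(z_1,z_2)=\frac{2}{2\pi i}\int_{\pa D\setminus\R}\big(v_\zeta(\xi)-v_\zeta(\bar\xi)\big)\,\pa_\xi G_D^+(\xi,z_1)\,\pa_\xi G_D^+(\xi,z_2)\,d\xi.
\]
I would prove this from the Schottky-double picture, writing $G_D$ via the Riemann theta function $\Theta(\cdot\,|\,\tau)$, the Abel--Jacobi map $\mathcal A$, and the period matrix $\tau$ as in \eqref{BG HG}--\eqref{Diri Green v2}. Since $\nabla_{\mathbb{H}_g}$ is exactly the real directional derivative that moves each endpoint $z_k^{l},z_k^{r}$ by the Loewner velocity $v_\zeta(z_k^{l}),v_\zeta(z_k^{r})$ (recall $v_\zeta(\bar z)=\overline{v_\zeta(z)}$ for $\zeta\in\R$), the chain rule produces two kinds of terms: those in which the derivative hits the explicit Abel--Jacobi positions, and those in which it hits $\tau=\tau(z_1^l,\dots)$. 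For the latter I would invoke the Rauch variational formula, which expresses $\pa\tau_{jk}$ through the normalized holomorphic differentials $\eta_j,\eta_k$, together with the heat equation
\[
\pa_{\tau_{jk}}\Theta=\frac{1}{2\pi i\,(1+\delta_{jk})}\,\pa_{Z_j}\pa_{Z_k}\Theta.
\]
On the other side, since $G_D$ vanishes on each slit, $\pa_\xi G_D^+$ reduces on the slit contour to the normal derivative, and the slit integral collapses to a sum of $a$-period integrals of the meromorphic quadratic differential $\pa G_D^+(\cdot,z_1)\,\pa G_D^+(\cdot,z_2)$; Riemann's bilinear relations then identify these periods with the same $\eta_j$, matching the two sides.

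The main obstacle is precisely this period-matrix step. In genus one one has the explicit heat equation for the Jacobi theta function and can compute everything in closed form, as in \cite{KM17,BKT23}; for $g\ge1$ there is no such closed form, so the delicate point is to confirm that the Abel--Jacobi positional terms and the Rauch period-matrix terms recombine \emph{exactly} into the contour integral, with the weight $v_\zeta(\xi)-v_\zeta(\bar\xi)$ emerging from the reflection symmetry of the double. A secondary technical point is the coincidence limit $z_1\to z_2$ needed for the bootstrapping: one must establish the same variational formula for the renormalized domain constant $d_D$ (and, for $b\neq0$, for the background pieces $\arg w'$ and $\jmath_{\bfs\beta}$), which follows by subtracting the universal, moduli-independent $\log|z_1-z_2|$ singularity before differentiating.
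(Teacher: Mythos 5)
Your reduction to the two-point function via Wick calculus and the derivation property of $\nabla_{\mathbb{H}_g}$ matches the paper's overall strategy, and your treatment of the coincidence limit is in the spirit of the paper's closure of the class of fields satisfying \eqref{EO eq v1} under OPE products. However, there are two genuine gaps. First, the reduction is incomplete: for $b\neq 0$ the term $ib\,\pa J$ in $A_{(b)}$ does \emph{not} drop out of the one-point insertion $\XX=\Phi_{(b)}(z)$, nor of any odd tensor product, since $\E\,A_{(b)}(\xi)\Phi_{(b)}(z)=2ib\,\pa_\xi^2 G_D(\xi,z)\neq 0$; your appeal to ``oddness of the three-field correlator'' only kills this term for even insertions. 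Handling the odd case requires the separate identity $4\pa_\zeta^2 G_D(\zeta,z)=v_\zeta'(z)-\overline{v_{\bar\zeta}'(z)}$ (Lemma~\ref{lem: Green_crucial_identity} of the paper), which is not a consequence of your two-point variational formula and is proved in the paper by showing that a certain meromorphic differential on the Schottky double is a holomorphic differential with vanishing $a$-periods.

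Second, and more importantly, your ``crux'' identity is asserted but not proved. You propose to differentiate the theta-function representation of $G_D$ in the slit endpoints, invoking Rauch's variational formula, the heat equation for $\Theta$, and Riemann's bilinear relations, and you yourself flag that the exact recombination of the Abel--Jacobi positional terms with the period-matrix terms into the contour integral is the delicate point. That recombination is precisely the content of the two-point case, so leaving it unverified leaves the proof incomplete; moreover, translating $\nabla_{\mathbb{H}_g}$ (endpoint derivatives weighted by the Loewner velocities) into the Schiffer/Beltrami deformation to which Rauch's formula applies is itself a nontrivial step you have not supplied. The paper avoids this computation entirely: it evaluates the slit-contour integral by residue calculus and then recognizes the remaining identity as the Hadamard variational formula \eqref{Hadamard} for the Loewner flow combined with the known equations of motion \eqref{motion of slits} of the slit endpoints, followed by harmonic extension in $\zeta$ from the real line into the domain. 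Your route may well be workable, and would have the merit of producing explicit functional identities for theta functions, but as written the central identity rests on an unexecuted and genuinely delicate computation.
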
 

\begin{proof}
First recall that $\FF_{(b)}$ is the OPE family by generated by $\Phi_{(b)}$, which is simply $\Phi_{\bfs \beta}$ in the special case $\bfs \beta = b \chi \cdot \infty$. Placing the background charge at infinity means that $\Phi_{(b)} = \Phi$ in the standard identity chart.

Let $\mathcal{W}$ be the family of conformal fields that satisfy the equation \eqref{EO eq v1}. 
In order to prove that $\FF_{(b)} \subset \mathcal{W}$, it suffices to show that any tensor product of GFFs and their derivatives is an element of $\mathcal{W}$, and that the family $\mathcal{W}$ is closed under OPE products. We break the proof into several parts.

\subsubsec{Proof that $\Phi_{(b)}(z) \in \mathcal{W}$}
We first show that in the identity chart the identity
\begin{equation} \label{EO 2pt}
\frac{1}{2\pi i}\int_{\partial D \setminus \R} v_{\zeta} (\xi) \E A(\xi)\XX\,d\xi+\overline{\frac{1}{2\pi i}\int_{\partial D \setminus \R} v_{\bar{\zeta}}(\xi) \E A(\xi) \XX \,d \xi }+\nabla_{ \mathbb{H}_g } \E \XX=0
\end{equation}
holds for $\XX=\Phi_{(b)}(z)$ and $A = A_{(b)}$. Note that we wrote the term $\nabla_{\mathbb{H}_g} \E \XX$ to show the similarity with \eqref{EO eq v1}, but that in this case $\E \XX = \E\,\Phi_{(b)} = 0$ so that term can be ignored.
For the remainder, recall that in the identity chart we have $j_{(b)} \equiv 0$ and so
\begin{align*}
A \equiv A_{(b)}= -\frac{1}{2}J \odot J + ib \pa J.
\end{align*}
Then by Wick's formula, we have
\begin{align} \label{eq: AbPhib_Wick}
\E A_{(b)}(\xi) \Phi_{(b)}(z) = ib \pa_\xi  \E J(\xi) \Phi(z) = 2ib \pa_\xi^2 G_D(\xi,z). 
\end{align}
By residue calculus, and using the asymptotic $\zeta \to z$ expansions
\begin{equation*}
v_\zeta(z)=\frac{2}{\zeta-z}+O(1), \qquad \pa_{ \zeta } G_D(\zeta,z)=-\frac12 \frac{1}{\zeta-z}+O(1),
\end{equation*}
we can evaluate the contour integrals as
\begin{align*}
\frac{1}{2\pi i } \int_{\partial D \setminus \R} v_\zeta(\xi) \pa_\xi^2 G_D(\xi,z) \,d\xi = -2 \pa_\zeta^2 G_D(\zeta,z)+ \frac12 v_\zeta'(z), \qquad
\frac{1}{2\pi i } \int_{\partial D \setminus \R} v_{\bar{\zeta}}(\xi) \pa_\xi^2 G_D(\xi,z) \,d\xi = \frac12 v_{ \bar{\zeta} }'(z). 
\end{align*}
By combining the last two formulas with the Wick expansion \eqref{eq: AbPhib_Wick} we obtain
\begin{align*}
\frac{1}{2\pi i}\int_{\partial D \setminus \R} v_{\zeta} (\xi) \E A_{(b)}(\xi) \Phi_{(b)}(z) \,d\xi &= \frac{b}{\pi} \int_{\partial D \setminus \R} v_{\zeta}(\xi) \pa_\xi^2 G_D(\xi,z) \,d \xi = bi \Big( -4 \pa_\zeta^2 G_D(\zeta,z)+ v_\zeta'(z) \Big),
\\
\frac{1}{2\pi i}\int_{\partial D \setminus \R} v_{\bar{\zeta}} (\xi) \E A_{(b)}(\xi) \Phi_{(b)}(z) \,d\xi &= \frac{b}{\pi} \int_{\partial D \setminus \R} v_{\bar{\zeta}}(\xi) \pa_\xi^2 G_D(\xi,z) \,d \xi = bi \, v_{\bar{\zeta}}'(z) . 
\end{align*}
Therefore the identity \eqref{EO 2pt} for $\XX = \Phi_{(b)}(z)$ is equivalent to the identity
\begin{equation}\label{eq: Greens_function_crucial_identity}
4 \pa_\zeta^2 G_D(\zeta,z) = v_\zeta'(z) - \overline{ v_{\bar{\zeta}}'(z) } . 
\end{equation}
This identity is proved separately in Lemma~\ref{lem: Green_crucial_identity} below. 

\subsubsec{Proof that $\Phi(z_1)\Phi(z_2) \in \mathcal{W}$}
We first compute the correlation $\E A_{(b)}(\xi) \Phi_{(b)}(z_1) \Phi_{(b)}(z_2)$ as
\[
\E A_{(b)}(\xi) \Phi_{(b)}(z_1) \Phi_{(b)}(z_2) = -\tfrac{1}{2} \E (J \odot J)(\xi) \Phi(z_1) \Phi(z_2) + ib \E \partial J(\xi) \Phi(z_1) \Phi(z_2),
\]
which uses that $\Phi_{(b)} \equiv \Phi$ in the identity chart. The last term is zero since the fields $\partial J$ and $\Phi$ are mean zero. For the first term we have by Wick's formula that
\begin{align*}
\E A(\xi) \Phi(z_1)\Phi(z_2) = - \E J(\xi) \Phi (z_1) \E J(\xi) \Phi(z_2) = -4 \pa_\xi G_D(\xi,z_1) \pa_\xi G_D(\xi,z_2) . 
\end{align*}
By combining the latter with the residue calculus we obtain 
\begin{align*}
\frac{1}{2\pi i}\int_{\partial D \setminus \R} v_{\zeta} (\xi) \E A(\xi) \Phi(z_1) \Phi(z_2) \,d\xi &=-\frac{2}{\pi i} \int_{\partial D \setminus \R} v_{\zeta} (\xi) \pa_{ \xi }G_D(\xi,z_1) \pa_{\xi} G_D(\xi,z_2) \,d\xi
\\
&=8\pa_{ \zeta }G(\zeta,z_1) \pa_{\zeta} G_D(\zeta,z_2)+2 \Big( v_\zeta(z_1) \pa_{z_1}+ v_\zeta(z_2) \pa_{z_2} \Big) G_D(z_1,z_2). 
\end{align*}
Similarly, we have 
\begin{align*}
 \overline{\frac{1}{2\pi i}\int_{\partial D \setminus \R} v_{\bar{\zeta}} (\xi) \E A(\xi) \Phi(z_1) \Phi(z_2) \,d\xi }
&=-\overline{\frac{2}{\pi i} \int_{\partial D \setminus \R} v_{\bar{\zeta}} (\xi) \pa_{ \xi }G_D(\xi,z_1) \pa_{\xi} G_D(\xi,z_2) \,d\xi }
\\
&= 2 \overline{ \Big( v_{ \bar{\zeta} }(z_1) \pa_{z_1} + v_{ \bar{\zeta} }(z_2) \pa_{z_2} \Big)} G_D(z_1,z_2) 
\\
&= 2 \Big( v_{ \zeta }( \bar{z}_1) \bp_{z_1} + v_{ \zeta }(\bar{z}_2) \bp_{z_2} \Big)G_D(z_1,z_2).
\end{align*}
For $p \in \R,$ let us write 
\begin{equation*}
P(z):=\frac{\pa}{\pa n}\Big|_{\zeta=p} G_D(\zeta,z). 
\end{equation*}
for the Poisson kernel. 
The Hadamard's variational formula for the Loewner chains is given by 
\begin{equation} \label{Hadamard}
\frac{d}{dt}G_{D_t}( g_t(z_1), g_t(z_2) )\Big|_{t=0}=-P(z_1) P(z_2),
\end{equation} 
see \cite{IK13}. Recall that the motion of the moduli under the Loewner flow \cite[Lemma 4.1]{BF08} can be written as 
\begin{equation} \label{motion of slits}
\frac{d}{dt} g_t(z_k^l) = -v_{ \xi_t } ( g_t(z_k^l) ), \qquad \frac{d}{dt} g_t(z_k^r) = -v_{ \xi_t } ( g_t(z_k^r) ).
\end{equation}
Here, $\xi_t= g_t(\gamma(t))$.
Note that the Green's function depends on the underlying domain $D$, which in turn can be interpreted as a function of the moduli $\{ z_k^l, z_k^r, \bar{z}_k^l, \bar{z}_k^r \}$, i.e. 
\begin{equation*}
G_D(\zeta,z)\equiv G_D(\zeta,z; \{ z_k^l, z_k^r, \bar{z}_k^l, \bar{z}_k^r \} ).
\end{equation*}
Denoting 
\begin{equation*}
\pa_k= \pa_{z_k}, \quad (k=1,2), \qquad \pa_k^l= \pa_{ z_k^l }, \quad \pa_k^r = \pa_{ z_k^r }, \quad (k=1,\dots,g),
\end{equation*}
it follows from \eqref{motion of slits} that 
\begin{equation*}
\begin{split}
&\quad -\frac{d}{dt}G_{D_t}( g_t(z_1), g_t(z_2) ) 
= \Big( v_{\xi_t}( g_t(z_1) ) \pa_{1}+ v_{\xi_t}( g_t(z_2) ) \pa_2 \Big) G_{D_t}( g_t(z_1), g_t(z_2) )
\\
&\quad + \sum_{k=1}^g \Big( v_{\xi_t}( g_t(z_k^l) ) \pa_k^l+v_{ \xi_t }( g_t(\bar{z}_k^l) ) \bp_k^l + v_{\xi_t}( g_t(z_k^r) ) \pa_k^r +v_{ \xi_t }( g_t(\bar{z}_k^r) ) \bp_k^r \Big) G_{D_t}( g_t(z_1), g_t(z_2) ) .
\end{split}
\end{equation*}
Since the Green's function is a scalar field (in other words, a $(0,0)$-differential) the Lie derivative formula for scalar fields (see \cite[Proposition 4.1]{KM13}) gives
\begin{equation*}
\begin{split}
\frac{d}{dt}G_{D_t}( g_t(z_1), g_t(z_2) ) \Big|_{t=0} = -\Big( \mathcal{L}_{v_p}+\nabla_{ \mathbb{H}_g } \Big) G_D(z_1,z_2),
\end{split}
\end{equation*}
where $\nabla_{\mathbb{H}_g}$ is given by \eqref{Teich diff op}. Using this and the fact $p\in \R$, the Hadamard's formula \eqref{Hadamard} can be rewritten as 
\begin{align*}
\begin{split}
\Big( \mathcal{L}_{v_p}+\nabla_{ \mathbb{H}_g } \Big) G_D(z_1,z_2)+ 4 \pa_{\zeta} G_D(\zeta,z_1) \pa_\zeta G_D(\zeta,z_2)\Big|_{\zeta=p}=0.
\end{split}
\end{align*} 
The harmonic extension of this identity for $\zeta \in D$ gives rise to 
\begin{equation*}
\Big( \mathcal{L}_{v_\zeta}^+ + \mathcal{L}_{v_{\bar{\zeta}} }^- +\nabla_{ \mathbb{H}_g } \Big) G_D(z_1,z_2)+ 4 \pa_{\zeta} G_D(\zeta,z_1) \pa_\zeta G_D(\zeta,z_2)=0.
\end{equation*}
Combining all of the above, Eguchi-Ooguri equation \eqref{EO 2pt} for $\XX=\Phi(z_1)\Phi(z_2)$ follows.

\subsubsec{Proof that $\Phi(z_1) \cdots \Phi(z_{2n}) \in \mathcal{W}$}
Note that by Wick's formula, we have 
\begin{align*}
\E [A(\xi)\Phi(z_1)\cdots\Phi(z_{2n})] = \sum_{j \neq k}\E [A(\xi)\Phi(z_j)\Phi(z_k) ] \E[\mathcal{X}_{jk}], 
\end{align*}
where 
\[
\mathcal{X}_{jk}=\Phi(z_1)\cdots\Phi(z_{k-1})\Phi(z_{k+1})\cdots\Phi(z_{j-1})\Phi(z_{j+1})\cdots \Phi(z_{2n}), \qquad (j \not = k). 
\]
Then by \eqref{EO eq v1} for the products of two GFFs, it follows that 
\begin{align} \label{eq: recursive_Phi}
\frac{1}{2\pi i} \int_{\partial D \setminus \R} \Big( v_\zeta(\xi)-v_\zeta( \bar{\xi}) \Big) \E A(\xi) \XX \,d\xi = -\sum_{j \neq k} \Big(\nabla_{ \mathbb{H}_g } \E[\Phi(z_j)\Phi(z_k)]\Big) \E[\mathcal{X}_{jk}]. 
\end{align}
On the other hand, by Wick's formula, we have 
\begin{equation} \label{n-pt cor}
\E [ \Phi(z_1) \cdots \Phi(z_{2n}) ] = \sum_{\mathcal{P}} \prod_k \E [ \Phi(z_{i_k})\Phi(z_{j_k}) ],
\end{equation}
where the sum is taken over all partitions $\mathcal{P}$ of the set $\{ 1,\cdots ,2n \}$ into disjoint pairs $\{ i_l ,j_l\}$.
Then by Leibniz's rule and rearranging the terms, we have
	\begin{align*}
\nabla_{ \mathbb{H}_g } \E [ \Phi(z_1) \cdots \Phi(z_{2n}) ] & = \sum_{\mathcal{P}} \Big( \prod_k \E [ \Phi(z_{i_k})\Phi(z_{j_k}) ] \Big) \Big( \sum_k \frac{ \nabla_{ \mathbb{H}_g } \E [ \Phi(z_{i_k})\Phi(z_{j_k}) ]}{\E [ \Phi(z_{i_k})\Phi(z_{j_k}) ]}\Big)
\\
&= \sum_{j \neq k} \Big( \nabla_{ \mathbb{H}_g } \E[\Phi(z_j)\Phi(z_k)]\Big) \Big( \sum_{\mathcal{P}_{jk}} \prod_l \E [ \Phi(z_{i_l})\Phi(z_{j_l}) ] \Big), 
\end{align*} 
where $\mathcal{P}_{jk}$ is the set of all partitions of $\{1, \ldots, 2n \} \setminus \{ j,k \}$ into disjoint pairs. By another application of \eqref{n-pt cor} the inner sum over $\mathcal{P}_{jk}$ is simply $\E \mathcal{X}_{jk}$. Combining this with \eqref{eq: recursive_Phi} we obtain \eqref{EO eq v1} for $\mathcal X=\Phi(z_1)\cdots\Phi(z_{2n})$. 
	
\subsubsec{$\mathcal{W}$ is closed under OPE products}
Suppose that we have the following OPE
\[
X(\zeta)Y(z ) = \sum_n C_n(z)(\zeta -z)^n,\qquad \zeta \to z.
\] 
If both $X$ and $Y$ are in $\mathcal W$, as $\zeta \to z$, 
\begin{align*}
\frac{1}{2\pi i} \int_{\partial D \setminus \R} \left( v_\zeta(\xi)-v_\zeta( \bar{\xi}) \right) \E A(\xi) X(\zeta)Y(z) \,d\xi & = \sum_n \left( \frac{1}{2\pi i} \int_{\partial D \setminus \R} \left( v_\zeta(\xi)-v_\zeta( \bar{\xi}) \right) \E A(\xi) C_n(z) \, d\xi \right) (\zeta-z)^n 
\end{align*}
and 
\[
\nabla_{ \mathbb{H}_g } \E[X(\zeta)Y(z)] = \sum_n \nabla_{ \mathbb{H}_g } \E[C_n(z)] (\zeta-z)^n.
\] 
By comparing the coefficients of the two expansions above, it follows that $C_n \in \mathcal W$.
\end{proof}

\begin{rem}
By Lemma~\ref{Lem_EO beta0} with $\XX=\Phi(z_1) \Phi(z_2)$, we have
\begin{equation*}
\frac{1}{\pi i} \int_{\partial D \setminus \R} \Big( v_\zeta(\xi)-v_\zeta( \bar{\xi}) \Big)\pa_{ \xi }G_D(z_1,\xi) \pa_{\xi} G_D(\xi,z_2) \,d\xi = \nabla_{ \mathbb{H}_g } G_D(z_1,z_2).
\end{equation*}
This gives 
\begin{equation} \label{EO G pm}
\frac{1}{\pi i} \int_{\partial D \setminus \R} \Big( v_\zeta(\xi)-v_\zeta( \bar{\xi}) \Big)\pa_{ \xi }G^\pm_D(z_1,\xi) \pa_{\xi} G_D(\xi,z_2) \,d\xi = \nabla_{ \mathbb{H}_g } G^\pm_D(z_1,z_2).
\end{equation}
The last formula will be used in the proof of the next proposition.
\end{rem}

We now extend the generalised Eguchi-Ooguri equation to a non-trivial background charge $\bfs \beta.$

\begin{prop} 
For any $\XX \in \FF_{\bfs \beta}$, the generalized Eguchi-Ooguri equation
\begin{equation} \label{eq: EG_generalized_beta}
\frac{1}{2\pi i} \int_{ \pa D   \setminus \R   } \Big( v_\zeta(\xi)-v_\zeta( \bar{\xi}) \Big) \E A_{ \bfs \beta}(\xi) \XX \,d\xi +\nabla_{ \mathbb{H}_g } \E \XX=0
\end{equation}
holds in the identity chart of a chordal standard domain $D$. 
\end{prop}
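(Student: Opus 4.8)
The plan is to bootstrap from the background-charge-at-infinity case already settled in Lemma~\ref{Lem_EO beta0}. In the identity chart of a chordal standard domain $D$ the uniformizing map $w$ is the identity, so $\arg w'\equiv 0$, hence $A_{(b)}=-\tfrac12 J\odot J+ib\,\pa J$, and the two stress tensors differ only by a deterministic current insertion,
\[
A_{\bfs\beta}=A_{(b)}-\jmath_{\bfs\beta}\,J,\qquad \jmath_{\bfs\beta}(\xi)=i\sum_k\beta_k\,\pa_\xi\big(G_D^+(q_k,\xi)-G_D^-(q_k,\xi)\big).
\]
Since $\jmath_{\bfs\beta}$ is deterministic, inserting this relation into the Eguchi--Ooguri integral splits it into an $A_{(b)}$-piece and a $\jmath_{\bfs\beta}$-piece. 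I would then record a structural fact: because $\Phi_{\bfs\beta}=\Phi+\varphi_{\bfs\beta}$ with $\varphi_{\bfs\beta}$ deterministic, and because every OPE exponential factors as in \eqref{eq:two_point_OPE_exponential} into a deterministic $\bfs\beta$-dependent prefactor times a field built from $\Phi$ alone, any $\XX\in\FF_{\bfs\beta}$ can be written as a finite sum $\XX=\sum_\alpha f_\alpha\,Y_\alpha$, where each $Y_\alpha\in\FF_{(b)}$ and each $f_\alpha$ is deterministic, depending on the moduli only through $\varphi_{\bfs\beta}$ and $\varphi_{\bfs\beta}^+$, hence only through the Green's functions $G_D^\pm(q_k,z_j)$ (the conformal-radius term $\propto\log w'$ vanishing in the identity chart). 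In particular $\varphi_{\bfs\beta}(z)=-2\sum_k\beta_k\,\Im G_D^+(q_k,z)$ by \eqref{Phi beta}.

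Next I would apply Lemma~\ref{Lem_EO beta0} term by term to the $A_{(b)}$-piece. Since $Y_\alpha\in\FF_{(b)}$, the lemma gives
\[
\frac{1}{2\pi i}\int_{\pa D\setminus\R}\big(v_\zeta(\xi)-v_\zeta(\bar\xi)\big)\,\E A_{(b)}(\xi)Y_\alpha\,d\xi=-\nabla_{\mathbb{H}_g}\E Y_\alpha,
\]
so the $A_{(b)}$-piece contributes $-\sum_\alpha f_\alpha\,\nabla_{\mathbb{H}_g}\E Y_\alpha$. On the other hand $\nabla_{\mathbb{H}_g}\E\XX=\sum_\alpha\big((\nabla_{\mathbb{H}_g}f_\alpha)\,\E Y_\alpha+f_\alpha\,\nabla_{\mathbb{H}_g}\E Y_\alpha\big)$ by the Leibniz rule, and the $f_\alpha\,\nabla_{\mathbb{H}_g}\E Y_\alpha$ terms cancel. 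Thus the whole statement \eqref{eq: EG_generalized_beta} reduces to the single identity
\[
\frac{1}{2\pi i}\int_{\pa D\setminus\R}\big(v_\zeta(\xi)-v_\zeta(\bar\xi)\big)\,\jmath_{\bfs\beta}(\xi)\,\E J(\xi)\XX\,d\xi=\sum_\alpha(\nabla_{\mathbb{H}_g}f_\alpha)\,\E Y_\alpha,
\]
which I will call $(\star)$: the extra boundary integral weighted by $\jmath_{\bfs\beta}$ must reproduce exactly the action of the Teichm\"uller derivative $\nabla_{\mathbb{H}_g}$ on the $\bfs\beta$-dependent prefactors.

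Establishing $(\star)$ is the heart of the matter and the step I expect to be the main obstacle. The mechanism is that $\jmath_{\bfs\beta}$ carries the factor $\pa_\xi G_D^\pm(q_k,\xi)$, while by Wick's formula $\E J(\xi)\XX$ expands into contractions carrying $\pa_\xi G_D(\xi,z_j)=\tfrac12\,\E J(\xi)\Phi(z_j)$; the variational identity \eqref{EO G pm}, namely $\frac{1}{\pi i}\int_{\pa D\setminus\R}(v_\zeta(\xi)-v_\zeta(\bar\xi))\,\pa_\xi G_D^\pm(q_k,\xi)\,\pa_\xi G_D(\xi,z_j)\,d\xi=\nabla_{\mathbb{H}_g}G_D^\pm(q_k,z_j)$, then converts each such product, after integration, into precisely the Teichm\"uller derivative of the Green's function appearing in $f_\alpha$. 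Matching the two sides is then a bookkeeping exercise combining the Leibniz rule with the Wick pairing, exactly parallel to the multi-point computation for $\Phi(z_1)\cdots\Phi(z_{2n})$ in the proof of Lemma~\ref{Lem_EO beta0}. It is instructive to verify the generating case $\XX=\Phi_{\bfs\beta}(z)$: here $\E J(\xi)\XX=\E J(\xi)\Phi(z)=2\pa_\xi G_D(\xi,z)$, so $(\star)$ reads
\[
\frac{1}{2\pi i}\int_{\pa D\setminus\R}\big(v_\zeta(\xi)-v_\zeta(\bar\xi)\big)\,\jmath_{\bfs\beta}(\xi)\,2\pa_\xi G_D(\xi,z)\,d\xi=\nabla_{\mathbb{H}_g}\varphi_{\bfs\beta}(z),
\]
which follows at once from \eqref{EO G pm} and $G_D^+-G_D^-=2i\,\Im G_D^+$. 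Finally, rather than carrying out the full combinatorial expansion for $(\star)$ in general, I would alternatively organize it exactly as in Lemma~\ref{Lem_EO beta0}: verify $(\star)$ on the generators of $\FF_{\bfs\beta}$ — the constant $1$ (immediate, since $\E A_{\bfs\beta}=0$ makes the integral vanish and $\nabla_{\mathbb{H}_g}1=0$), the mixed derivatives of $\Phi_{\bfs\beta}$, and the OPE exponentials — and then invoke closure under OPE products, whose proof is insensitive to the replacement of $A_{(b)}$ by $A_{\bfs\beta}$ since it only compares coefficients of the $(\zeta-z)$-expansions on both sides.
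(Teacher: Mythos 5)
Your proposal is correct and follows essentially the same route as the paper: split off the deterministic modification (your decomposition $\XX=\sum_\alpha f_\alpha Y_\alpha$ is the paper's $I_0+\cdots+I_n$ decomposition into products of $\varphi_{\bfs\beta}$-factors and $\Phi_{(0)}$-factors), apply Lemma~\ref{Lem_EO beta0} to the $\FF_{(b)}$-part, and match the $\jmath_{\bfs\beta}$-weighted boundary integral against the Leibniz terms $\nabla_{\mathbb{H}_g}f_\alpha$ via the variational identity \eqref{EO G pm}. Your identity $(\star)$ in the generating case is exactly the paper's \eqref{eq: Phibeta_integral_identity}, and your observation that the $ib\,\pa J$ contribution is already absorbed by the $b$-case lemma is a mild streamlining of the paper's explicit appeal to Lemma~\ref{lem: Green_crucial_identity}.
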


\begin{proof}
The proof proceeds in several steps, as in the proof of Lemma \ref{Lem_EO beta0}.

\subsubsec{Identity \eqref{eq: EG_generalized_beta} holds for $\Phi_{\bfs \beta}$}
We show the identity for 
\begin{equation*}
\XX=\Phi_{\bfs{\beta}}= \Phi+\varphi_{ \bfs{\beta} }, \qquad \varphi_{ \bfs{\beta} }(z)= i \sum \beta_k \Big( G_D^+(q_k,z)-G_D^-(q_k,z) \Big).
\end{equation*} 
Recall that 
\begin{equation*}
A_{ \bfs{\beta} }:= A_{ (0) }+\Big(ib\pa-\jmath_{ \bfs{\beta} } \Big)J.
\end{equation*}
Applying Wick's formula, we have 
\begin{align*}
\begin{split}
\E A_{\bfs{\beta}}(\xi) \Phi_{ \bfs{\beta} }(z)
&= \E \Big[ \Big(ib\pa_\xi-\jmath_{ \bfs{\beta} }(\xi) \Big)J(\xi) \Big( \Phi(z)+\varphi_{ \bfs{\beta} }(z) \Big) \Big] 
\\
&= ib\pa_\xi \E \Big[ J(\xi) \Big( \Phi(z)+\varphi_{ \bfs{\beta} }(z) \Big) \Big] -\jmath_{ \bfs{\beta} }(\xi)\,\E J(\xi) \Phi_{ (0) }(z) 
\\
&= \Big( ib\pa_\xi -\jmath_{ \bfs{\beta} }(\xi) \Big)\,\E J(\xi) \Phi_{ (0) }(z) .
\end{split}
\end{align*}

\subsubsec{Reduction to an integral identity} Since $\E\,\Phi_{\bfs \beta} = \varphi_{\bfs \beta}$, the last computation means that we want to show
\begin{align}\label{eq:Phib_Ward_proof_identity}
\frac{1}{2\pi i} \int_{\partial D \setminus \R} \left( v_\zeta(\xi)-v_\zeta( \bar{\xi}) \right) (ib \pa_{\xi} - j_{\bfs \beta}(\xi)) \E J(\xi) \Phi_{(0)}(z) \, d\xi + \nabla_{ \mathbb{H}_g } \varphi_{\bfs \beta} = 0.
\end{align}
However by means of the asymptotic $\zeta \to z$ expansions
\begin{equation*}
v_\zeta(z)=\frac{2}{\zeta-z}+O(1), \qquad \pa_{ \zeta }^2 G(\zeta,z)=\frac12 \frac{1}{(\zeta-z)^2}+O(1), 
\end{equation*}
and the residue calculus, we have
\begin{align*}
\frac{1}{2\pi i} \int_{\partial D \setminus \R} v_\zeta(\xi) \pa_\xi^2 G_D(\xi,z)\,d\xi &= -2 \pa_\zeta^2 G_D(\zeta,z)+ \frac12 v_\zeta'(z) 
\end{align*}
and
\begin{align*}
 \overline{\frac{1}{2\pi i}\int_{\partial D \setminus \R} v_{\bar{\zeta}} (\xi) \pa_\xi^2 G_D(\xi,z) \,d\xi }
&= -\frac12 \overline{ v_{ \bar{\zeta} }'(z) }. 
\end{align*}
This gives 
\begin{align*}
\frac{1}{2\pi i} \int_{\partial D \setminus \R} \Big( v_\zeta(\xi)-v_\zeta( \bar{\xi}) \Big) \pa_\xi \E J(\xi) \Phi_{ (0) }(z) \,d\xi &= -4 \pa_\zeta^2 G_D(\zeta,z)+ v_\zeta'(z) - \overline{ v_{ \bar{\zeta} }'(z) } = 0, 
\end{align*}
with the last equality following from Lemma~\ref{lem: Green_crucial_identity}. The above implies that showing \eqref{eq:Phib_Ward_proof_identity} is equivalent to showing
\begin{align}\label{eq: Phibeta_integral_identity}
\frac{1}{2\pi i} \int_{\partial D \setminus \R} \left( v_\zeta(\xi)-v_\zeta( \bar{\xi}) \right) j_{\bfs \beta}(\xi) \E J(\xi) \Phi_{(0)}(z) \, d\xi = \nabla_{ \mathbb{H}_g } \varphi_{\bfs \beta}.
\end{align}

\subsubsec{Proof of \eqref{eq: Phibeta_integral_identity}}
This identity is relatively straightforward. We have that
\begin{align*}
&\quad \frac{1}{2\pi i} \int_{\partial D \setminus \R} \Big( v_\zeta(\xi)-v_\zeta( \bar{\xi}) \Big) \jmath_{ \bfs{\beta} }(\xi)\,\E J(\xi) \Phi_{ (0) }(z) \,d\xi 
\\
&= \frac{1}{ \pi i} \int_{\partial D \setminus \R} \Big( v_\zeta(\xi)-v_\zeta( \bar{\xi}) \Big) \jmath_{ \bfs{\beta} }(\xi)\, \partial_\xi G_D(\xi,z) \,d\xi
\\
&= \frac{2}{\pi } \sum \beta_k \int_{\partial D \setminus \R} \Big( v_\zeta(\xi)-v_\zeta( \bar{\xi}) \Big) \pa_\xi \Big(G_D^+(q_k,\xi)-G_D^-(q_k,\xi)\Big) \partial_\xi G^+(\xi,z) \,d\xi \\
&=2 i \sum \beta_k \nabla_{ \mathbb{H}_g } \Big( G_D^+(q_k,z)-G_D^-(q_k,z) \Big)=\nabla_{ \mathbb{H}_g } \varphi_{ \bfs{\beta} }(z).
\end{align*}
The second equality uses the definition of $j_{\bfs \beta}$, while the third uses \eqref{EO G pm}. This proves \eqref{eq:Phib_Ward_proof_identity} and \eqref{eq: Phibeta_integral_identity}.

\subsubsec{Identity \eqref{eq: EG_generalized_beta} holds for $\XX = \Phi_{\bfs \beta}(z_1) \Phi_{\bfs \beta}(z_2)$}
We now prove the identity \eqref{eq: EG_generalized_beta} for $\XX = \Phi_{ \bfs{\beta} }(z_1) \Phi_{ \bfs{\beta} }(z_2)$. By expanding out $A_{\bfs \beta}$ and $\varphi_{\bfs \beta}$ we obtain
\begin{align*}
\frac{1}{2\pi i} \int_{\partial D \setminus \R} &\Big( v_\zeta(\xi)-v_\zeta( \bar{\xi}) \Big) \E A_{\bfs{\beta}}(\xi) \Phi_{ \bfs{\beta} }(z_1)  \Phi_{ \bfs{\beta} }(z_2) \,d\xi\\
&= \frac{1}{2\pi i} \int_{\partial D \setminus \R} \Big( v_\zeta(\xi)-v_\zeta( \bar{\xi}) \Big) \E A_{ (0) }(\xi)\Phi_{ (0) }(z_1)\Phi_{ (0) }(z_2) \,d\xi \\
&- \frac{ \varphi_{ \bfs{\beta} }(z_2) }{2\pi i} \int_{\partial D \setminus \R} \Big( v_\zeta(\xi)-v_\zeta( \bar{\xi}) \Big) \left( ib \partial_{\xi} - \jmath_{ \bfs{\beta} }(\xi) \right) \,\E J(\xi) \Phi_{ (0) }(z_1) \, d\xi \\
& - \frac{ \varphi_{ \bfs{\beta} }(z_1) }{2\pi i} \int_{\partial D \setminus \R} \Big( v_\zeta(\xi)-v_\zeta( \bar{\xi}) \Big) \left( ib \partial_{\xi} - \jmath_{ \bfs{\beta} }(\xi) \right) \,\E J(\xi) \Phi_{ (0) }(z_2) \, d\xi \\
&= \nabla_{ \mathbb{H}_g } \E\,\Phi_{ (0) }(z_1)\Phi_{ (0) }(z_2) + \varphi_{ \bfs{\beta} }(z_2)\nabla_{ \mathbb{H}_g }\varphi_{ \bfs{\beta} }(z_1) + \varphi_{ \bfs{\beta} }(z_1)\nabla_{\mathbb{H}_g} \varphi_{ \bfs{\beta} }(z_2) \\
&= \nabla_{ \mathbb{H}_g } \E\,\Phi_{ \bfs{\beta} }(z_1)\Phi_{ \bfs{\beta} }(z_2).
\end{align*}
The second equality is a combination of Lemma \ref{Lem_EO beta0} and identity \eqref{eq:Phib_Ward_proof_identity}, and the third equality is a Leibniz formula. 

\subsubsec{Identity \eqref{eq: EG_generalized_beta} holds for $\XX = \Phi_{\bfs \beta}(z_1) \cdots \Phi_{\bfs \beta}(z_{n})$}
To handle the general case $n>2$, for each $k \le n$, let 
\[
I_k:=\sum \Big(\prod_{l=1}^{k}\varphi_{ \bfs{\beta} }(z_{i_l}) \Big) \Big(\prod_{l=1}^{n-k} \Phi_{ (0) }(z_{j_l}) \Big),
\]
where the sum is over all partitions of $\{ 1, \ldots, n \}$ into a set of set $k$ and the complementary set of size $n-k$. Then we have 
\[
\Phi_{ \bfs{\beta} }(z_1)\cdots \Phi_{ \bfs{\beta} }(z_n) =I_0+\cdots+I_n.
\]
Let us write 
\begin{align*}
A_k&:=\sum \prod_{l=1}^{k}\varphi_{ \bfs{\beta} }(z_{i_l}) \, \nabla_{ \mathbb{H}_g } \E \prod_{l=1}^{n-k} \Phi_{ (0) }(z_{j_l}),
\\ 
B_k &:=\sum_{j=1}^{n} \nabla_{ \mathbb{H}_g } \varphi_{ \bfs{\beta} }(z_j) \sum \prod_{l=1}^{k-1}\varphi_{ \bfs{\beta} }(z_{i_l}) \E \prod_{l=1}^{n-k} \Phi_{ (0) }(z_{j_l}).
\end{align*}
Then it follows from the Leibniz rule that $ \nabla_{ \mathbb{H}_g } \E I_k = A_k + B_k$. On the other hand, we have 
\[
\frac{1}{2\pi i} \int_{\partial D \setminus \R} \Big( v_\zeta(\xi)-v_\zeta( \bar{\xi}) \Big) \E A_{ \bfs \beta}(\xi) I_k \,d\xi = C_k+D_k,
\]
where 
\begin{align*}
	C_k&:= 	\frac{1}{2\pi i} \int_{\partial D \setminus \R} \Big( v_\zeta(\xi) - v_\zeta( \bar{\xi}) \Big) \E A(\xi) I_k \, d\xi,
		\\ 
		D_k&:= -	\frac{1}{2\pi i} \int_{\partial D \setminus \R} \Big( v_\zeta(\xi)-v_\zeta( \bar{\xi}) \Big)  \E (ib \partial J(\xi) - j_{\bfs \beta}(\xi) J(\xi)) I_k \, d\xi.
\end{align*}
Lemma \ref{Lem_EO beta0} gives that $A_k=C_k$, and an application of Wick's formula and \eqref{eq:Phib_Ward_proof_identity} gives that $B_{k}=D_k$. This completes the proof. 
\end{proof}

We finish this section with a proof of the crucial Green's function identity \eqref{eq: Greens_function_crucial_identity}. Our proof is based on properties of the bipolar Green's function $G_{p,q}$ on the Schottky double.

\begin{lem}\label{lem: Green_crucial_identity}
Let $D$ be a chordal standard domain and $G = G^{Diri}$ be the Dirichlet Green's function for $D$. Then in the identity chart of $D$ we have
\begin{equation*}
4 \pa_{\zeta}^2 G(\zeta, z) = v_{\zeta}'(z) - \overline{v_{\zeta}'(z)},
\end{equation*} 
where $v_{\zeta}(z) = -4 \pa_{\zeta} G^{ER,+}(z, \zeta)$.
\end{lem}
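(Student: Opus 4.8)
\emph{Framing.} The identity is naturally read with $\zeta$ on the boundary component $C_0=\R$ of the chordal standard domain. There the reflected field satisfies $v_{\bar\zeta}=v_\zeta$, so the stated term $\overline{v_\zeta'(z)}$ coincides with $\overline{v_{\bar\zeta}'(z)}$, and the interior identity \eqref{eq: Greens_function_crucial_identity} actually used in the proof of Lemma~\ref{Lem_EO beta0} then follows by harmonic extension in $\zeta$. The plan is to transfer both sides to the Schottky double $\wh{D}$ and evaluate them through the bipolar Green's function $G_{p,q}$, exploiting that $C_0=\R$ is the fixed-point locus of the anti-holomorphic involution $\sigma\colon x\mapsto x^{*}=\bar x$.

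\emph{Left-hand side.} First I would combine the method-of-images relation \eqref{S-D of Green}, namely $G(\zeta,z)=\tfrac12\big(G_{\zeta,\zeta^{*}}(z)-G_{\zeta,\zeta^{*}}(z^{*})\big)$, with the reciprocity $G_{\zeta,\zeta^{*}}(z)-G_{\zeta,\zeta^{*}}(z^{*})=G_{z,z^{*}}(\zeta)-G_{z,z^{*}}(\zeta^{*})$ coming from the symmetry of $\E\,\Psi\Psi$, to obtain $G(\zeta,z)=\tfrac12\big(G_{z,z^{*}}(\zeta)-G_{z,z^{*}}(\zeta^{*})\big)$. Since $\sigma$ swaps the two poles of $x\mapsto G_{z,z^{*}}(x)$ and $G_{q,p}=-G_{p,q}$, one has $G_{z,z^{*}}(\zeta^{*})=-G_{z,z^{*}}(\zeta)$ up to an additive constant. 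Differentiating twice in $\zeta$ (using $\pa_\zeta[f(\bar\zeta)]=\overline{(\pa f)(\bar\zeta)}$ for real harmonic $f$) then collapses the two terms into the clean formula $4\pa_\zeta^{2}G(\zeta,z)=4\,(\pa^{2}G_{z,z^{*}})(\zeta)$, valid for all $\zeta\in D$ and absorbing all harmonic-measure corrections into $G_{z,z^{*}}$.

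\emph{Right-hand side.} Next I would unfold the definition $v_\zeta(z)=-4\pa_\zeta G^{ER,+}(z,\zeta)$. Because $G^{ER,+}(\cdot,\zeta)$ is holomorphic in its first slot with real part $\tfrac12 G^{ER}$, one has $\pa_z G^{ER,+}(z,\zeta)=\pa_z G^{ER}(z,\zeta)$, whence $v_\zeta'(z)=-4\,\pa_z\pa_\zeta G^{ER}(z,\zeta)=:-4\,D(z,\zeta)$, where $D$ is the symmetric holomorphic bidifferential given by the mixed holomorphic derivative of the ER Green's function. The statement then reduces to the purely geometric identity, for $\zeta\in\R$,
\[
(\pa^{2}G_{z,z^{*}})(\zeta)=\overline{D(z,\zeta)}-D(z,\zeta).
\]

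\emph{Matching and main obstacle.} The mechanism is that on the fixed locus $\R$, where $\bar\zeta=\zeta$, complex conjugation of $D$ acts as the pullback by $\sigma$ and therefore reflects the node $z$ to $z^{*}$: one has $\overline{D(z,\zeta)}\big|_{\zeta\in\R}=D(z^{*},\zeta)$. Consequently $-D(z,\zeta)$ supplies the double pole of $(\pa^{2}G_{z,z^{*}})(\zeta)$ at $\zeta=z$ while $\overline{D(z,\zeta)}$ supplies the one at $\zeta=z^{*}$, and the two assemble exactly into $(\pa^{2}G_{z,z^{*}})(\zeta)$. Making this rigorous beyond the leading poles is where I expect the real difficulty: it requires (i) the Schottky-double representation \eqref{ER Green v2} of $G^{ER}$ to identify $D$ with the fundamental bidifferential $\pa_z\pa_\zeta\log E(z,\zeta)$ on $\wh{D}$ and to record its transformation under the anti-holomorphic $\sigma$, and (ii) controlling the terms separating $G^{ER}$ from $G$, i.e.\ the $\im\mathcal{A}$/harmonic-measure contributions $\langle\bfs h(\zeta),\bfs P^{-1}\bfs h(z)\rangle$. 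Here the key simplification is that $h_k$ vanishes on $C_0$, so for $\zeta\in\R$ the holomorphic derivative $\pa_\zeta h_k(\zeta)$ is purely imaginary; this is precisely the property that forces those corrections to cancel in $\overline{D}-D$. Verifying this cancellation together with the full pole-matching is the crux, with the explicit rational expressions of the $g=0$ case serving as the consistency check.
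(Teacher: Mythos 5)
Your reductions track the paper's own proof closely: you pass to the Schottky double, write the left-hand side through the bipolar Green's function (your derivation via the antisymmetry $G_{z,z^*}\circ\sigma=-G_{z,z^*}+\mathrm{const}$ is in fact a cleaner justification of the paper's step $\pa_\zeta^2 G(\zeta,z)=\pa_\zeta^2 G_{z,z^*}(\zeta)$ than the paper's own phrasing), and you identify $v_\zeta'(z)-\overline{v_\zeta'(z)}$ with $\Omega_z(\zeta)-\Omega_{z^*}(\zeta)$, where $\Omega_z(\zeta)=\tfrac12\pa_\zeta\pa_z\log\Theta(\mathcal{A}(z)-\mathcal{A}(\zeta)-e)$ is exactly your bidifferential $-D(z,\zeta)$ once \eqref{ER Green v2} and the reflection rule $\overline{D(z,\bar\zeta)}=D(z^*,\zeta)$ are in place. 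This is the same mechanism the paper uses. Note also that your worry (ii) about the harmonic-measure corrections is moot in your own framing: after the left side is expressed through $G_{z,z^*}$ and the right side through $G^{ER,+}$, no $\langle\bfs{h},\bfs{P}^{-1}\bfs{h}\rangle$ term survives the mixed derivative, so there is nothing left to cancel.

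The genuine gap is precisely the step you defer as ``the crux'': matching the double poles at $z$ and $z^*$ does \emph{not} prove the identity on a surface of genus $g\ge 1$. Two meromorphic differentials with identical principal parts differ by a holomorphic differential, and that space is $g$-dimensional; asserting that the two sides ``assemble exactly'' leaves this ambiguity entirely open, and your proposed consistency check against the explicit $g=0$ formulas cannot detect it, since in genus zero there are no nonzero holomorphic differentials. (Proving the identity pointwise for all $\zeta\in\R$ would also suffice, but your pole-matching controls only the singular structure off $\R$, not the values on $\R$.) The paper closes the argument with a period computation: setting $\omega_1=\pa_\zeta^2 G_{z,z^*}(\zeta)\,d\zeta$ and $\omega_2=(\Omega_z(\zeta)-\Omega_{z^*}(\zeta))\,d\zeta$, one has $\oint_{a_j}\omega_1=0$ because $\omega_1=d(\pa_\zeta G_{z,z^*})$ is exact (the bipolar Green's function being single-valued), and $\oint_{a_j}\omega_2=0$ by the periodicity \eqref{perio R-theta} of $\Theta$ along the $a$-cycles; hence $\omega_1-\omega_2$ is a holomorphic differential with vanishing $a$-periods and vanishes identically by \cite[Proposition III.3.3]{FK92}. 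Without this normalization step (or an equivalent uniqueness argument in the spirit of \eqref{vj def}), your outline, though correctly routed, is an incomplete proof.
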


\begin{proof}
For $\zeta \in \mathbb{R}$ we have defined the vector field $v_{\zeta}$ by \eqref{def: v_zeta}, which is equivalent to
\[
v_{\xi}(z) = -4 \left. \pa_{\zeta} \right|_{\zeta = \xi} G^{ER, +}(z, \zeta).
\]
With this representation of $v_{\zeta}$ in hand, proving the lemma is equivalent to showing
\begin{align}\label{eq:GF_necessary_identity}
\pa_{\zeta}^2 G(\zeta, z) = - \pa_{\zeta} \left( \pa_z G^{ER, +}(z, \zeta) - \overline{\pa_z G^{ER, +}(z, \overline{\zeta})} \right)
\end{align}
By means of the bipolar Green's function on the Schottky double we have
\[
G(\zeta, z) = G^{Diri}(\zeta, z) = G_{z, z^*}(\zeta) - G_{z, z^*}(\zeta^*), 
\]
from which it follows that
\[
\pa_{\zeta}^2 G(\zeta, z) = \pa_{\zeta}^2 G_{z, z^*}(\zeta).
\]
By representation \eqref{ER Green v2} of the ER Green's function we have
\[
-\pa_{\zeta} \pa_z G^{ER, +}(z, \zeta) = \tfrac{1}{2} \pa_{\zeta} \pa_z \log \Theta(\mathcal A(z) - \mathcal A(\zeta) - e) =: \Omega_z(\zeta),
\]
so that 
\[
- \pa_{\zeta} \left( \pa_z G^{ER, +}(z, \zeta) - \overline{\pa_z G^{ER, +}(z, \overline{\zeta})} \right) = \Omega_z(\zeta) - \Omega_{z^*}(\zeta).
\]
Let $\omega_1 = \pa_{\zeta}^2 G_{z, z^*}(\zeta) \, d \zeta$ and $\omega_2 = (\Omega_z(\zeta) - \Omega_{z^*}(\zeta)) \, d \zeta$. We will prove \eqref{eq:GF_necessary_identity} by showing that $\omega_0 := \omega_1 - \omega_2 = 0$ on the Schottky double. As a first step, note that both $\omega_1$ and $\omega_2$ have double poles at $z, z^*$, and no poles elsewhere, and that consequently $\omega_0$ has no singularities on the Schottky double. Hence it is a holomorphic differential. Moreover we have that
\[
\oint_{a_j} \omega_1 = \oint_{a_j} d \left( \pa_{\zeta} G_{z, z^*}(\zeta) \right) = 0
\]
for every cycle $a_j$, and by definition of $\Omega_z(\zeta)$ we have
\[
\oint_{a_j} \omega_2 = \pa_{\eta} |_{\eta = z} (\log \theta(\zeta + a_j - \eta) - \log \theta(\zeta - \eta)) - \pa_{\eta} |_{\eta = z^*} (\log \theta(\zeta + a_j - \eta) - \log \theta(\zeta - \eta)) = 0,
\]
where $\theta(z) = \Theta(\mathcal{A}(z) - e)$ for short. The right hand side is zero by periodicity of $\theta$ around the $a_j$ cycles. Thus $\omega_0 = \omega_1 - \omega_2$ is a holomorphic differential with vanishing $a_j$ periods. Consequently $\omega_0 \equiv 0$, by a well known theorem in complex analysis, see \cite[Proposition III.3.3]{FK92}.
\end{proof}

\begin{rem}
The quantity $\Omega_z(\zeta) \, d \zeta \, dz$ is often known as the fundamental normalized bidifferential. See \cite[Chapter 2]{Fay73} and \cite[Section 2.3]{Du15} for more. It also makes an appearance in the recent work \cite{IK13}, which studies scaling limits of Ising model correlation functions in multiply connected domains. 
\end{rem}

\subsection{BPZ equations} \label{Subsection_BPZ}

In this section we prove the BPZ equations for $\FF_{\bfs \beta}$ on a chordal standard domain. The BPZ equations may be regarded as a limiting form of the Ward equation as the argument of the Virasoro field $T_{\bfs \beta}$ approaches the node of a field $X \in \FF_{\bfs \beta}$. The limit is in the sense of the operator product expansion between the Virasoro field and $X$. Operator product expansion with $T_{\bfs \beta}$ is expressed in terms of the Virasoro generators $L_n \equiv L_n^{\bfs \beta}$. Recall that for $n \in \mathbb Z$, the Virasoro generator $L_n \equiv L_{n}^{ \bfs \beta }$ is an operator that acts on fields $X \in \FF_{\bfs \beta}$ via 
\begin{equation} \label{Virasoro generator}
L_n^{ \bfs \beta }(z):=\frac{1}{2\pi i}\oint_{(z)}(\zeta -z )^{n+1} T_{ \bfs{\beta} }(\zeta) \, d\zeta, \qquad \textup{i.e.} \qquad L_n^{ \bfs \beta } X= T_{ \bfs \beta } \ast_{ (-n-2) } X. 
\end{equation}
Equivalently, the OPE expansion between $T_{\bfs \beta}$ and $X$ can be expressed in terms of the Virasoro nodes as 
\begin{equation} \label{T Ln}
T_{ \bfs \beta } X(z) = \cdots + \frac{ (L_0^{ \bfs \beta } X)(z) }{ (\zeta-z)^2 } + \frac{ (L_{-1}^{ \bfs \beta }X)(z) }{ \zeta-z } + (L_{-2}^{ \bfs \beta } X)(z) +\cdots. 
\end{equation}
To express the BPZ equations we also introduce the vector field $k_{\zeta}$ which has a single pole at $\zeta$, defined as
\begin{equation*}
k_{\zeta}(z)= \frac{2}{\zeta-z}.
\end{equation*}

\begin{prop}\label{Prop_BPZ T}
For $Y\in \FF_{\bfs\beta}$ and $X=X_1\cdots X_n$, $( X_j \in \FF_{\bfs\beta}$), we have
\begin{equation}
\begin{split} \label{BPZ Y gen}
2\,\E\,T_{ \bfs \beta } \ast Y(z) X& = \E\,Y(z) \LL_{v_z}^+ X+ \LL_{v_{\bar{z}}}^-\E\,Y(z)X +  \nabla_{ \mathbb{H}_g , \bfs q }\,   \E\,Y(z)X
\\
&\quad + \lim_{\zeta \to z} \Big( \LL_{v_\zeta}^+- \LL_{k_\zeta}^+ \Big) \E\,Y(z) X+2\,\E\,T_{\bfs\beta} (z)\, \E\,Y(z) X , 
\end{split}
\end{equation}
where all fields are evaluated in the identity chart of the chordal standard domain. 
In particular, for a holomorphic differential $Y\in \FF_{\bfs\beta}$ with conformal dimension $h$, we have 
\begin{equation}
\begin{split} \label{BPZ Y differential}
2\,\E\,T_{ \bfs \beta } \ast Y(z) X& = \E\,Y(z) \LL_{v_z}^+ X+ \LL_{v_{\bar{z}}}^-\E\,Y(z)X +   \nabla_{ \mathbb{H}_g , \bfs q } \,   \E\,Y(z)X
\\
&\quad + \Big[ h \Big(r_{D,0}'(z)-r_{D,1}(z)\Big) + 2\,\E\,T_{\bfs\beta}(z) + r_{D,0}(z) \pa_z \Big] \E\,Y(z) X, 
\end{split}
\end{equation}
where $r_{D,0}$ and $r_{D,1}$ are given by \eqref{v diagonal}. 
\end{prop}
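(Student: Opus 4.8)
The plan is to apply Ward's equation \eqref{eq: Ward_no_integral} to the tensor product $Y(z)X$ and then pass to the operator product expansion limit $\zeta\to z$, reading off the coefficient $T_{\bfs\beta}\ast Y = L_{-2}^{\bfs\beta}Y$. Concretely, Ward's equation reads
\begin{equation*}
2\,\E\,A_{\bfs\beta}(\zeta)Y(z)X = \big(\LL_{v_\zeta}^+ + \LL_{v_{\bar\zeta}}^-\big)\E\,Y(z)X + \nabla_{\mathbb{H}_g,\bfs q}\,\E\,Y(z)X,
\end{equation*}
valid for every $\zeta$, where the Lie derivatives act on the node $z$ of $Y$ and on the nodes of $X$ but not on the $q_k$ or the slit endpoints. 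Writing $A_{\bfs\beta} = T_{\bfs\beta} - \E\,T_{\bfs\beta}$ and invoking the Virasoro OPE \eqref{T Ln}, the constant term in the $(\zeta-z)$-expansion of the left-hand side is $2\,\E\,(T_{\bfs\beta}\ast Y)(z)X - 2\,\E\,T_{\bfs\beta}(z)\,\E\,Y(z)X$, since $\E\,T_{\bfs\beta}(\zeta)$ is smooth at $\zeta=z$. The remaining task is to identify the constant term of the right-hand side.

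The key mechanism is the local identity, holding in the identity chart for any $Y\in\FF_{\bfs\beta}$,
\begin{equation*}
\LL_{k_\zeta}^+ Y(z) = \sum_{n\ge -1}\frac{2}{(\zeta-z)^{n+2}}\,L_n^{\bfs\beta}Y(z),
\end{equation*}
which identifies $\LL_{k_\zeta}^+Y(z)$ with the \emph{entire} principal part of the OPE of $2\,T_{\bfs\beta}(\zeta)Y(z)$; this is the Ward OPE relation of the simply connected theory (see \cite{KM13}) and, being local and independent of the global geometry, holds verbatim in a chordal standard domain. I would then split the holomorphic Loewner field as $v_\zeta = k_\zeta + w_\zeta$ with $w_\zeta := v_\zeta - k_\zeta$, and decompose the action of $\LL_{v_\zeta}^+$ on $Y(z)X$ into its action on the node $z$ of $Y$ and its action on the nodes of $X$; on the $Y$-node I use the splitting $k_\zeta + w_\zeta$, while on the $X$-nodes I keep $v_\zeta$ intact. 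The $\LL_{k_\zeta}^+$ piece acting on the node $z$ produces exactly the principal part above and cancels the singular OPE terms of the left-hand side. This cancellation is the heart of the argument.

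After the cancellation every surviving term is regular at $\zeta=z$, and I would let $\zeta\to z$. Since $w_\zeta$ is holomorphic near $z$ with $w_\zeta(z)\to r_{D,0}(z)$ and $w_\zeta'(z)\to r_{D,0}'(z)-r_{D,1}(z)$ by \eqref{v diagonal}--\eqref{v' diagonal}, its action on the node of $Y$ yields the term $\lim_{\zeta\to z}\big(\LL_{v_\zeta}^+ - \LL_{k_\zeta}^+\big)\E\,Y(z)X$; the action of $\LL_{v_\zeta}^+$ on the nodes of $X$ tends to $\E\,Y(z)\LL_{v_z}^+X$; the anti-holomorphic derivative $\LL_{v_{\bar\zeta}}^-$, whose pole sits at $\bar\zeta\neq z$, tends to $\LL_{v_{\bar z}}^-\E\,Y(z)X$; the operator $\nabla_{\mathbb{H}_g,\bfs q}$, built from $v_\zeta$ evaluated at the $q_k$ and the slit endpoints (all distinct from $z$), tends to $\nabla_{\mathbb{H}_g,\bfs q}\E\,Y(z)X$; and $-2\,\E\,T_{\bfs\beta}(z)\,\E\,Y(z)X$ passes to the right-hand side. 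Collecting these proves \eqref{BPZ Y gen}. For \eqref{BPZ Y differential} I would insert the differential transformation law $\LL_{w_z}^+ Y = (w_z\pa + h\,w_z')Y$ into the limit term, giving $\lim_{\zeta\to z}\big(\LL_{v_\zeta}^+ - \LL_{k_\zeta}^+\big)\E\,Y(z)X = \big[r_{D,0}(z)\pa_z + h\big(r_{D,0}'(z)-r_{D,1}(z)\big)\big]\E\,Y(z)X$, using $\E\,(\pa Y)(z)X = \pa_z\,\E\,Y(z)X$.

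The main obstacle I anticipate is the bookkeeping in the cancellation step: one must verify that the full principal part of $2\,T_{\bfs\beta}(\zeta)Y(z)$ — including the higher-order poles present when $Y$ is not quasi-primary — is reproduced exactly by $\LL_{k_\zeta}^+$ acting on the node of $Y$, leaving no residual singularity from the nodes of $X$, from the anti-holomorphic derivative, or from $\nabla_{\mathbb{H}_g,\bfs q}$. This hinges on the fact that $k_\zeta$ is the \emph{pure} pole $2/(\zeta-z)$ with no regular part at $z$, so that its Virasoro expansion matches the OPE coefficients mode by mode, whereas $w_\zeta$, the moduli directions, and the anti-holomorphic field are manifestly regular at $\zeta=z$. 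Care is also needed to record that the limit term in \eqref{BPZ Y gen} is precisely the node-$z$ contribution of $Y$, so that combining it with $\E\,Y(z)\LL_{v_z}^+X$ recovers the total holomorphic Lie derivative while avoiding any double counting of the $X$-nodes.
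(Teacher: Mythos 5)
Your proposal is correct and follows essentially the same route as the paper: apply the Ward equation \eqref{eq: Ward_no_integral} to $Y(z)X$, identify the singular part of $A_{\bfs\beta}(\zeta)Y(z)X$ as $\tfrac12(\LL_{k_\zeta}^+Y(z))X$ via Ward's OPE (the paper cites \cite[Proposition 5.3]{KM13} for exactly the mode-by-mode matching you describe), subtract it from both sides, let $\zeta\to z$, and convert $A_{\bfs\beta}\ast Y$ to $T_{\bfs\beta}\ast Y$ using $T_{\bfs\beta}=A_{\bfs\beta}+\E\,T_{\bfs\beta}$. Your splitting $v_\zeta=k_\zeta+w_\zeta$ and the use of \eqref{v diagonal}--\eqref{v' diagonal} with the differential transformation law for \eqref{BPZ Y differential} is precisely how the paper concludes.
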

\begin{proof}
It follows from Ward's OPE \cite[Proposition 5.3]{KM13} that
\begin{equation*}
\Sing_{\zeta \to z}A_{\bfs\beta}(\zeta)Y(z)X= \frac12\Big( \LL_{ k_\zeta }^+ Y(z) \Big) X.
\end{equation*}
On the other hand, by Theorem~\ref{Thm_Ward},
\begin{equation}
\begin{split} \label{A YX BPZ}
2 \, \E A_{ \bfs \beta }(\zeta) Y(z)X & = \Big( \LL_{v_\zeta}^+ + \LL_{ v_{\bar{\zeta}} }^- \Big) \E\,Y(z)X +   \nabla_{ \mathbb{H}_g , \bfs q } \,  \E\,Y(z)X.
\end{split}
\end{equation}
We now subtract the singular part from the both sides of \eqref{A YX BPZ} when $\zeta \to z$.
Consequently, we obtain
\begin{equation*}
2\,\E A_{\bfs\beta}\ast Y(z) X= \E\,Y(z) \LL_{v_z}^+ X+ \LL_{v_{\bar{z}}}^-\E\,Y(z)X +   \nabla_{ \mathbb{H}_g , \bfs q } \,  \E\,Y(z)X + \lim_{\zeta \to z} \Big( \LL_{v_\zeta}^+- \LL_{k_\zeta}^+ \Big) \E\,Y(z) X. 
\end{equation*}
Then the identity \eqref{BPZ Y gen} follows from $T_{ \bfs \beta }= A_{ \bfs \beta }+ \E\,T_{ \bfs \beta } $. 

Recall that if the field $Y$ is a differential, we have 
\begin{align*}
 \LL_{v_\zeta}^+ Y(z) = \Big( hv_{\zeta}'(z)+ v_{\zeta}(z) \pa_z \Big) Y(z), \qquad 
 \LL_{k_\zeta}^+ Y(z) = \Big( \frac{2h}{(\zeta-z)^2}+ \frac{2}{\zeta-z} \pa_z \Big) Y(z),
\end{align*}
see \cite[Proposition 4.1]{KM13}.
It now follows from \eqref{v diagonal} and \eqref{v' diagonal} that 
\begin{align*}
\lim_{\zeta \to z} \Big( \LL_{v_\zeta}^+ - \LL_{k_\zeta}^+ \Big) Y(z) & = \lim_{\zeta \to z} \Big[ h \Big( v_\zeta'(z)-\frac{2}{(\zeta-z)^2} \Big) + \Big( v_\zeta(z)-\frac{2}{\zeta-z} \Big) \pa_z \Big]Y(z)
\\
&=\Big[ h \Big(r_{D,0}'(z)-r_{D,1}(z)\Big) + r_{D,0}(z) \pa_z \Big] Y(z). 
\end{align*}
This completes the proof.
\end{proof}

Proposition \ref{Prop_BPZ T} is especially fruitful when the field $Y$ is \textit{Virasoro primary}. By definition, a Virasoro primary field $X \in \FF_{\bfs{\beta}}$ is a differential of conformal dimension $(\lambda,\lambda_*)$. 
Then by \cite[Proposition 7.5]{KM13}, it follows that 
\begin{equation} \label{L0 L-1 Ln}
L_0^{ \bfs \beta } X=\lambda X, \qquad L_{-1}^{ \bfs \beta } X=\pa X, \qquad L_n^{ \bfs \beta } X=0 \quad \textup{for all } n \ge 1. 
\end{equation}
Similar identities also hold for $\bar{X}$. 
Furthermore, $X$ is called \emph{current primary} if there exist $q, q_*$ such that
\begin{equation*}
J_0^{ \bfs \beta } X=-iq X, \qquad J_0^{ \bfs \beta } \bar{X}=i\overline{q}_* \bar{X}, \qquad J_n^{ \bfs \beta } X=J_n^{ \bfs \beta } \bar{X} =0 \quad \textup{for all } n \ge 1, 
\end{equation*}
where the numbers $q,q^*$ are called charges of $X$ and the current generators are given by 
\begin{equation*}
J_n^{ \bfs \beta }(z):=\frac{1}{2\pi i}\oint_{(z)} (\zeta -z )^n J_{ \bfs{\beta} }(\zeta) \, d\zeta, \qquad \textup{i.e.} \qquad J_n^{ \bfs \beta } X= J_{ \bfs \beta } \ast_{ (-n-1) } X. 
\end{equation*}

\begin{prop}[Cf. Proposition 11.2 \cite{KM13}] \label{prop: Level_Two}
Let $\OO \in \FF_{ \bfs \beta }$ be a current primary field with charges $q$, $q_*$. 
Then if $2q(b+q)=1$, we have
	\begin{equation} \label{level2}
		L_{-2}^{ \bfs \beta } \OO = \frac{1}{2q^2}(L^{\bfs \beta}_{-1})^2 \OO.
	\end{equation}
\end{prop}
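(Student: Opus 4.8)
The plan is to reduce \eqref{level2} to a purely algebraic identity in the current and Virasoro mode algebras of $\FF_{\bfs\beta}$ and then verify it by a direct mode computation, following the genus-zero argument of \cite[Proposition 11.2]{KM13}. The essential conceptual point is that \eqref{level2} is a \emph{local} statement, concerning only the OPE of $T_{\bfs\beta}$ with $\OO$ at a single node, so the multiply connected geometry of $\H_g$ never enters; the computation is therefore identical to the simply connected case once transcribed into the present notation.

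First I would record the Sugawara-type mode relation obtained by expanding the definition $T_{\bfs\beta}=-\tfrac12 J_{\bfs\beta}\ast J_{\bfs\beta}+ib\,\pa J_{\bfs\beta}$ into modes: for every $X\in\FF_{\bfs\beta}$,
\[
L_n^{\bfs\beta}X=-\tfrac12\sum_{k}{:}J_k^{\bfs\beta}J_{n-k}^{\bfs\beta}{:}\,X-ib(n+1)\,J_n^{\bfs\beta}X,
\]
where the normal ordering ${:}\cdot{:}$ places nonnegative modes to the right. Only $n=-1,-2$ are needed. Applying this to the current primary $\OO$, for which $J_0^{\bfs\beta}\OO=-iq\,\OO$ and $J_m^{\bfs\beta}\OO=0$ for $m\ge1$, the normal ordering annihilates all but two or three terms in each sum, and I obtain
\[
L_{-1}^{\bfs\beta}\OO=iq\,J_{-1}^{\bfs\beta}\OO,\qquad L_{-2}^{\bfs\beta}\OO=-\tfrac12\,(J_{-1}^{\bfs\beta})^2\OO+i(b+q)\,J_{-2}^{\bfs\beta}\OO.
\]

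Next I would compute $(L_{-1}^{\bfs\beta})^2\OO$ using the first identity together with the current--Virasoro commutator. The OPE $T_{\bfs\beta}(\zeta)J_{\bfs\beta}(z)$ carries a cubic pole (the background-charge anomaly from the $ib\,\pa J_{\bfs\beta}$ term), so that in general $[L_m^{\bfs\beta},J_n^{\bfs\beta}]=-n\,J_{m+n}^{\bfs\beta}+(\text{anomaly})\,\delta_{m+n,0}$; the anomalous coefficient is proportional to $m(m+1)$ and hence vanishes at $m=-1$, leaving the clean relation $[L_{-1}^{\bfs\beta},J_{-1}^{\bfs\beta}]=J_{-2}^{\bfs\beta}$. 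This yields
\[
(L_{-1}^{\bfs\beta})^2\OO=iq\,L_{-1}^{\bfs\beta}J_{-1}^{\bfs\beta}\OO=iq\big(iq\,(J_{-1}^{\bfs\beta})^2+J_{-2}^{\bfs\beta}\big)\OO=-q^2(J_{-1}^{\bfs\beta})^2\OO+iq\,J_{-2}^{\bfs\beta}\OO.
\]
Dividing by $2q^2$ gives $\tfrac{1}{2q^2}(L_{-1}^{\bfs\beta})^2\OO=-\tfrac12(J_{-1}^{\bfs\beta})^2\OO+\tfrac{i}{2q}J_{-2}^{\bfs\beta}\OO$, and comparison with the formula for $L_{-2}^{\bfs\beta}\OO$ above shows the two agree precisely when $i(b+q)=\tfrac{i}{2q}$, i.e.\ when $2q(b+q)=1$. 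Thus the hypothesis is exactly what is needed, and \eqref{level2} follows.

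The only genuine work lies in the mode algebra invoked in the second and third paragraphs: verifying the Sugawara relation, fixing the normal-ordering convention so that the factor $iq$ in $L_{-1}^{\bfs\beta}\OO$ comes out correctly, and checking that the current--Virasoro anomaly is proportional to $m(m+1)$ and hence drops out at $m=-1$. This is the step I would treat most carefully. However, each of these is a local OPE computation with $T_{\bfs\beta}$ and $J_{\bfs\beta}$ that is insensitive to the domain and coincides with the corresponding computation in \cite{KM13}, so I anticipate no obstruction beyond careful bookkeeping.
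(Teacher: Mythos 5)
The paper gives no proof of this proposition beyond the citation to \cite[Proposition 11.2]{KM13}, implicitly relying on exactly the observation you make: the identity is a local OPE statement at a single node, so the genus-zero mode computation carries over verbatim. Your Sugawara expansion, the resulting formulas $L_{-1}^{\bfs\beta}\OO=iq\,J_{-1}^{\bfs\beta}\OO$ and $L_{-2}^{\bfs\beta}\OO=-\tfrac12(J_{-1}^{\bfs\beta})^2\OO+i(b+q)J_{-2}^{\bfs\beta}\OO$, and the use of $[L_{-1}^{\bfs\beta},J_{-1}^{\bfs\beta}]=J_{-2}^{\bfs\beta}$ are all correct and reproduce precisely the argument of the cited reference, with the condition $2q(b+q)=1$ emerging exactly as it should.
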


By combining Proposition \ref{prop: Level_Two} with Proposition \ref{Prop_BPZ T} we can now give the proof of Theorem \ref{Thm_BPZ}, the BPZ-Cardy equations for a chordal standard domain.

\begin{proof}[Proof of Theorem~\ref{Thm_BPZ}]
Recall that the field $\Upsilon(z)$ is defined by
\[
\Upsilon(z) := \OO \{ \bfs \alpha \}, \qquad \bfs \alpha= ( a \cdot z - a \cdot \infty, \bfs 0 ), 
\]
where $a \in \R$ satisfies $2a(a+b)=1$ and $b \in \R$ is determined by the neutrality condition \eqref{NC_b} of $\bfs \beta$. Then $\Upsilon(z)$ is primary with dimensions $\lambda = a^2/2 - ab = h_{1,2}$ and current primary with charge $a$, see \cite[Proposition 12.4]{KM13}. Thus by combining the level two equation \eqref{level2} with \eqref{Virasoro generator} and \eqref{L0 L-1 Ln}, we obtain
\begin{equation*}
T_{ \bfs \beta } \ast \Upsilon (z) =\frac{1}{2a^2} \pa_z^2 \Upsilon(z) . 
\end{equation*}
Then the BPZ equation \eqref{BPZ eq} follows from Proposition~\ref{Prop_BPZ T}. 

Now to derive the BPZ-Cardy equation \eqref{BPZ-Cardy}, first rewrite the BPZ equation \eqref{BPZ eq} in terms of $ \E_\Upsilon X$ via
\begin{align*}
\frac{1}{a^2} \pa_z^2 &\Big( \E_\Upsilon X(z) \E \Upsilon(z) \Big) = \frac{1}{a^2} \Big[ \E \Upsilon(z) \pa_z^2 \E_\Upsilon X(z) + 2 \pa_z \E_\Upsilon X(z) \pa_z \E \Upsilon(z) +\E_\Upsilon X(z) \pa_z^2 \E \Upsilon(z) \Big]
\\
& = \Big[ \check{\LL}_{v_z}^+ + \LL_{v_{\bar{z}}}^- +    \nabla_{ \mathbb{H}_g , \bfs q }  + h_{1,2} \Big(r_{D,0}'(z)-r_{D,1}(z)\Big) + 2\,\E\,T_{\bfs\beta}(z) + r_{D,0}(z) \pa_z \Big] \E_\Upsilon X(z) \E \Upsilon(z).
\end{align*}
On the other hand, by multiplying $\E_\Upsilon X $ to the null-vector equation \eqref{null vector eq}, we have
\begin{align*}
\frac{1}{a^2} \E_\Upsilon X \pa_z^2 \E \Upsilon(z) & = 
 \E_\Upsilon X \Big[ \nabla_{ \mathbb{H}_g , \bfs q }   + h_{1,2} \Big(r_{D,0}'(z)-r_{D,1}(z)\Big) + 2\,\E\,T_{\bfs\beta}(z) + r_{D,0}(z) \pa_z \Big] \E \Upsilon(z) . 
\end{align*}
Subtracting these equations gives rise to 
\begin{align*}
\begin{split} 
\frac{1}{a^2} &\Big( \E \Upsilon(z) \pa_z^2 \E_\Upsilon X(z) + 2 \pa_z \E_\Upsilon X(z) \pa_z \E \Upsilon(z) \Big)
\\
& =\E \Upsilon(z) \Big( \check{\LL}_{v_z}^+ + \LL_{v_{\bar{z}}}^- +  \nabla_{ \mathbb{H}_g , \bfs q }   + r_{D,0}(z) \pa_z \Big) \E_\Upsilon X(z) .
\end{split}
\end{align*}
Therefore we conclude \eqref{BPZ-Cardy}.
\end{proof}

\section{SLE Martingale Observables \label{sec: SLE_MO}}

In this section we prove Theorem \ref{Thm_SLE MO}, that the correlation functions of all fields in the family $\FF_{\bfs \beta}$ are martingale observables for an associated SLE process. The main ingredient is the BPZ-Cardy equation \eqref{BPZ-Cardy}, which was derived without any reference to SLE. Theorem \ref{Thm_SLE MO} gives an infinite family of martingale observables for the SLE process, and the martingale can depend on any finite collection of marked points in the domain. Note that the BPZ-Cardy equation allows us to verify the martingale observable property without having explicit formulas for the martingale. In fact we typically do not even have an explicit formula for the Loewner vector field $v_{\xi}$ or the stochastic differential equation \eqref{driving SDE}. Nonetheless the martingale observable property is verified by a relatively straightforward application of Ito's formula, thereby demonstrating the power of the CFT approach.

\begin{proof}[Proof of Theorem~\ref{Thm_SLE MO}]
Using the BPZ-Cardy equation \eqref{BPZ-Cardy}, the proof follows along the same lines in the literature \cite[Section 14.3]{KM13}, \cite[Section 8.3]{KM21} and \cite[Section 5.3]{BKT23}. We give a shortened version here.

Recall that $M$ is the non-random conformal field
\[
M = \E_{\Upsilon} \XX = \frac{\E \Upsilon(\xi) \XX}{\E \Upsilon(\xi) },
\]
where $\Upsilon(\xi) = \OO_{\bfs \beta}\{ a \cdot \xi - a \cdot \infty \}$. Thus $M$ depends on the marked point $\xi$, the marked points in $\bfs \beta$, the nodes of $\XX$, and the modular parameters $z_k^l$, $z_k^r$ of the chordal standard domain. Denote by $\bfs z$ the modular parameters and $\bfs q$ the marked points of $\bfs \beta$. Define a function $m(\xi, \bfs z, \bfs q, t)$ by 
\[
m(\xi, \bfs z, \bfs q, t) :=(R_{\xi, \bfs z, \bfs q} \, \| \, g_t^{-1}), \qquad R_{\xi, \bfs z, \bfs q} = \E_{\Upsilon} \XX.
\]
We need to prove that $M_t := m(\xi_t, \bfs z(t), \bfs q(t), t)$ is a martingale, where $\xi_t$ is the solution to the stochastic differential equation defined by \eqref{driving SDE} and \eqref{drift function}. By Ito's formula and \eqref{drift function} we have
\begin{align*}
d M_t &= (\pa_{\xi} m)(\xi_t, \bfs z(t), \bfs q(t), t) \, d \xi_t + \frac{1}{2} (\pa_{\xi}^2 m)(\xi_t, \bfs z(t), \bfs q(t), t) \, d \langle \xi \rangle_t + \frac{\pa m}{\pa t}(\xi_t, \bfs q(t),t) \, dt 
\\
&\quad + \sum_k \left(\dot z_k(t) \pa_{z_k}m(\xi_t, \bfs z(t), \bfs q(t), t) + \overline{\dot z_k(t)} \bp_{z_k}m(\xi_t, \bfs z(t), \bfs q(t), t) \right) \, dt
\\
&\quad + \sum_k \dot q_k(t) \pa_{q_k} m(\xi_t, \bfs z(t), \bfs q(t), t) \\
&= \sqrt{\kappa} (\pa_{\xi} m) \, dB_t + \left( \Big(\kappa \, \pa_{\xi} \log \E \Upsilon(\xi) - r_0(\xi) \Big) \pa_{\xi} m + \frac{\kappa}{2} \pa_{\xi}^2 m + \frac{\pa m}{\pa t} -\nabla_{ \mathbb{H}_g, \bfs q } m \right) dt. 
\end{align*}
The drift term vanishes by the BPZ-Cardy equations, we will now explain why. By definition of $m$ the time derivative $\pa m/\pa t$ is given by
\[
\frac{\pa m}{\pa t}(\xi, \bfs z, \bfs q, t) = \left. \frac{d}{ds} \right|_{s=0} \E \XX \, \| \, g_{t+s}^{-1}.
\]
Let $f_{s,t} = g_{t+s} \circ g_t^{-1}$. Then $\pa_s f_{s,t} = -v_{\xi_{t+s}} \circ f_{s,t}$ by the Loewner equation, with $f_{0,t} = \id$. We deduce that
\[
f_{s,t} = \id - s v_{\xi_t} + o(s), \quad s \downarrow 0.
\]
Since $g_{t+s} = f_{s,t} \circ g_t$, the definition of the Lie derivative gives
\[
\left. \frac{d}{ds} \right|_{s=0} \E \XX \, \| \, g_{t+s}^{-1}  = \left. \frac{d}{ds} \right|_{s=0} \E \XX \, \| \, g_{t}^{-1} \circ f_{s,t}^{-1}  = -\check{\LL}_{v_{\xi}} m \, \| \, g_t^{-1}.
\]
Taken together and invoking the BPZ-Cardy equations \eqref{BPZ-Cardy} (with $\xi$ on the real line), the last three calculations show that the drift term of $d M_t$ vanishes.
\end{proof}

We conclude with some example of martingale observables. 
We first discuss the bosonic observable for these SLE systems, i.e. the one-point function of the modified Gaussian field $\Phi_{\bfs \beta}$. Here the modification comes in two different forms: one in the background charge $\bfs \beta$ and another through the OPE exponential $\Upsilon(\xi)$. This modification affects both the structure of the martingale observable and the associated SLE system. It is the analogue of the Schramm-Sheffield observable for chordal SLE$(\kappa, \bfs \rho)$ systems in simply connected domains.

\begin{example}{Bosonic observable}
The simplest one-point observable is obtained by taking $X = \Phi_{\bfs \beta}(z)$, typically for $z$ in the interior of the chordal standard domain. Recall that $\bfs \beta = \sum_k \beta_k \cdot q_k$ for $q_k \in \R$ and the $\beta_k$ satisfy the neutrality condition $\sum \beta_k = b \chi$. Here, we have 
$
\Upsilon(\xi) = \OO_{\bfs \beta}\{a \cdot \xi - a \cdot \infty, \bfs 0 \}. 
$
Then an exercise in Wick's calculus shows that
\[
\E_{\Upsilon} \Phi_{\bfs \beta}(z) = \varphi_{\bfs \beta + a \cdot \xi - a \cdot \infty}(z).
\]
As a conformal field, the latter depends on the marked points $q_k$, $\xi$, and $z$. It is a $[0,0]$-differential in all of the $q$-variables and a PPS$(ib, -ib)$ form in the $z$ variable. By Theorem \ref{Thm_SLE MO}, it is a martingale observable for the SLE process with driving function
\[
d \xi_t = \sqrt{\kappa} \, d B_t + \Lambda_{D_t}(\xi_t) \, dt,
\]
with the drift function $\Lambda(\xi)$ given by
\[
\Lambda(\xi) := \kappa \pa_{\xi} \log \E \Upsilon(\xi) - r_0(\xi).
\]
Here $D_t = g_t(D_0 \backslash \gamma[0,t])$, where $\gamma$ is the SLE curve. Thus $D_t$ is yet another chordal standard domain, and the field $\Lambda_{D_t}$ is evaluated in this domain. The nature of the OPE exponential $\Upsilon(\xi)$ makes it difficult to give an exact expression for the term $\E \Upsilon(\xi)$, even in the standard coordinate chart. It only depends on the divisor $\bfs \beta$ and the marked point $\xi$. Similarly, the conformal field $\E \Upsilon \Phi_{\bfs \beta}(z)$ only depends on these divisors and $\xi$, along with the point $z \in D_0$. The coordinate chart $g_t^{-1}$ is well-defined around these marked points, so by evaluating the field $\E_{\Upsilon} \Phi_{\bfs \beta}(z)$ along the charts $g_t^{-1}$ the martingale in coordinates is
\[
t \mapsto -2b \arg g_t'(z) + 2 \sum_k \beta_k \Im \, G_D^+(g_t(q_k), g_t(z))
\]
by \eqref{Phi beta}. For $z$ on the real line (but distinct from the $q_k$ and $\xi$) it can be shown that $\E_{\Upsilon} \Phi_{\bfs \beta}(z)$ is piecewise constant.
\end{example}

In the multiply connected setting it is very difficult to provide any explicit formulas for the martingales, other than in terms of complicated special functions. Nonetheless we can see interesting functional relations arise between the correlation functions, even without explicit expressions available for them.

\begin{example}{A functional relation for a sequence of correlation functions derived from $T_{\bfs \beta}$}
In the introduction we argued that CFT can be used to derive functional equations between correlation functions. Via the Virasoro field $T_{\bfs \beta}$ we can use the Eguchi-Ooguri form of Ward's equation to derive a recursive sequence of correlation functions. For the setup, let $D$ be a chordal standard domain and fix a sequence of distinct points $z_1, z_2, \ldots \in D \cup \R$. Fix a background charge $\bfs \beta = \sum_k \beta_k \cdot q_k$ where one has $q_k \in \R$ and the $\beta_k$ satisfy the neutrality condition $\sum \beta_k = b \chi$. Assume that the $q_k$ are all distinct from the $z_j$. We will consider the conformal fields 
\[
V_n = \E \Upsilon(\xi) \XX_n, \qquad \XX_n = T_{\bfs \beta}(z_1) \ldots T_{\bfs \beta}(z_n),
\]
where in this case $\Upsilon(\xi) = \OO_{\bfs \beta}\{a \cdot \xi - a \cdot \infty, \bfs 0 \}$ and we recall that $a$ is a solution to $2a(a+b) = 1$. Here $\xi$ is distinct from the $q_k$ and $z_j$. As a (non-random) conformal field, $V_n$ is
\begin{itemize}
\item a $[0,0]$-differential at the $q_k$, \smallskip 
\item a Schwarzian form of order $c/12 = 1/12 - b^2$ at the $z_j$, $j=1, \ldots, n$, \smallskip 
\item and an $[h_{1,2}, 0]$-differential at $\xi$, where $h_{1,2} = a^2/2 - ab$.
\end{itemize}
It will also transform as a differential at infinity, but in these calculations we only consider conformal maps that fix infinity and have the hydrodynamic normalization there, so the transformation rule at infinity does not enter in. 
More precisely, the Loewner vector field has a triple zero at infinity and the Lie derivative at infinity in Ward's equation does not appear in the identity chart of a chordal standard multiply connected domain.  
Note the sequence $\XX_{n+1}$ is built up by successive multiplications by the stress tensor $T_{\bfs \beta}$, i.e. there is the recursive relation $\XX_{n+1} = T_{\bfs \beta}(z_{n+1}) \XX_{n}$. At the same time, the Ward equation relates multiplication by the stress tensor to the Lie derivative. By the Ward equation \eqref{eq: Ward_no_integral}, and recalling that $A_{\bfs \beta} = T_{\bfs \beta} - \E\,T_{\bfs \beta}$, we obtain the recursive equation
\begin{align}\label{eq: example_V_recursion} 
2 V_{n+1} = 2 V_n \E\,T_{\bfs \beta}(z_{n+1}) + (\LL_{v_{z_{n+1} }}^+ + \nabla_{ \mathbb{H}_g, \bfs q }) V_n.
\end{align}
In a simply connected domain, this recursion formula coincides with that for an avoiding probability of the SLE$(8/3)$ curve (in the $c=0$ case), see \cite[Section 14.5]{KM13} and \cite{FW03}. This is closely related to the restriction property and we refer to \cite{AB24} for a related recent work in a multiply connected domain. 
In principle, \eqref{eq: Virasoro_recursion} can be solved given the one-point function $\E\,T_{\bfs \beta}(z)$ and an initial value $V_0 = \E \Upsilon(\xi) $. 
Although $V_n$ is not a martingale observable, Theorem \ref{Thm_SLE MO} states that $M_n := V_n/\E \Upsilon(\xi)$ is a martingale observable for an SLE process, when evaluated on the random charts $g_t^{-1}$. By dividing through \eqref{eq: example_V_recursion} by $\E \Upsilon(\xi)$ and using the Leibniz rule we obtain
\begin{align}
\begin{split} \label{eq: example_M_recursion} 
2 M_{n+1} &= 2 M_n \E \, T_{\bfs \beta}(z_{n+1}) + (\LL_{v_{z_{n+1} }}^+ + \nabla_{\mathbb{H}_{g}, \bfs q }) M_n + \frac{M_n}{\E \Upsilon(\xi)} (\LL_{v_{z_{n+1} }}^+  + \nabla_{\mathbb{H}_{g},\bfs q}) \E \Upsilon(\xi). 
\end{split}
\end{align} 
In the higher genus case that we consider both will be expressed in terms of complicated special functions, which makes it difficult to find a closed form solution to the recursive equation. The transformation rules for $V_n$ allow us to compute the Lie derivative term as 
\begin{align*}
\LL_{v_{z_{n+1}}}^+ V_n & = v_{ z_{n+1} }(\xi) \pa_{\xi} V_n + h_{1,2} v_{ z_{n+1} }'(\xi) V_n 
+ \sum_{j=1}^n \left[ v_{ z_{n+1} }(z_j) \pa_{z_j} + 2 v_{ z_{n+1} }'(z_j) \right] V_n + \frac{c}{12} \sum_{j=1}^n v_{ z_{n+1} }'''(z_j) V_{n,j},
\end{align*}
where $V_{n,j} = \E \Upsilon(\xi) \XX_{n,j}$ and $\XX_{n,j} = T_{\bfs \beta}(z_1) \ldots \hat{T}_{\bfs \beta}(z_j) \ldots T_{\bfs \beta}(z_n)$. Here, the hat indicates that term is left out of the product. 
Combining the above, it follows that 
\begin{align} \label{eq: Virasoro_recursion}
\begin{split}
2 V_{n+1} & = \bigg[  2  \E\,T_{\bfs \beta}(z_{n+1})  +   \sum_j \Big( v_{ z_{n+1} }(z_j) \pa_{z_j} + 2 v_{ z_{n+1} }'(z_j) \Big)  + \mathfrak{D}_{n}[\xi,\bfs q]  \bigg] V_n 
+ \frac{c}{12} \sum_j v_{ z_{n+1} }'''(z_j) V_{n,j},  
\end{split}
\end{align}
where 
\begin{align*}
\mathfrak{D}_{n}[\xi,\bfs q] &:=    h_{1,2} v_{ z_{n+1} }'(\xi) +  v_{ z_{n+1} }(\xi) \pa_{\xi}    + \nabla_{ \mathbb{H}_g, \bfs q }. 
\end{align*} 
After normalizing \eqref{eq: Virasoro_recursion} the sequence $M_n$ satisfies 
\begin{align}
\begin{split}
2 M_{n+1} &  = \bigg[  2  \E\,T_{\bfs \beta}(z_{n+1})   +   \sum_j \Big( v_{ z_{n+1} }(z_j) \pa_{z_j} + 2 v_{ z_{n+1} }'(z_j) \Big)   + \mathfrak{D}_{n}[\xi,\bfs q]  \bigg] M_n \\
&\quad + \frac{c}{12} \sum_j v_{ z_{n+1} }'''(z_j) M_{n,j} +  \frac{\mathfrak{D}_{n}[\xi,\bfs q] \E \Upsilon(\xi) }{ \E \Upsilon(\xi) }  M_n, 
\end{split}
\end{align}
where $M_{n,j} = V_{n,j} / \E \Upsilon(\xi)$. This formula agrees with \eqref{eq: example_M_recursion} since each $M_n$ is a $[0,0]$-differential at the $q_k$ and at $\xi$, and a Schwarzian form of order $c/12$ at $z_1, \ldots, z_n$, whereas $\E \Upsilon(\xi)$ is an $[h_{1,2},0]$-differential at $\xi.$    
\end{example}

\end{document}